\let\oldbibliography\thebibliography
\renewcommand{\thebibliography}[1]{%
  \oldbibliography{#1}%
  \setlength{\itemsep}{0pt}%
}
\DeclarePairedDelimiter{\ceil}{\lfloor}{\rfloor}
\definecolor{darkred}{RGB}{100,0,0}
\definecolor{darkgreen}{RGB}{0,100,0}
\definecolor{darkblue}{RGB}{0,0,150}
\newtheorem{theorem}{Theorem}
\newtheorem{definition}{Definition}
\newtheorem{lemma}{Lemma}
\newtheorem{proposition}{Proposition}
\newtheorem{remark}{Remark}
\newtheorem{corollary}{Corollary}
\providecommand{\nor}[1]{\left\lVert {#1} \right\rVert}
\providecommand{\scal}[2]{\left\langle{#1},{#2}\right\rangle}
\providecommand{\scalT}[2]{\left\langle{#1},{#2}\right\rangle}
\newcommand{\R}{\mathbb R}
\providecommand{\scal}[2]{\left\langle{#1},{#2}\right\rangle}
\DeclareMathOperator{\E}{\mathbb{E}}
\def\argmax{\operatornamewithlimits{arg\,max}}
\def\argmin{\operatornamewithlimits{arg\,min}}
\title{\textbf{\large{ Quantization and Greed are Good:\\
One bit Phase Retrieval, Robustness and Greedy Refinements}}}
\author{\normalsize{\textsc{
Youssef Mroueh$^\star,\dagger$, Lorenzo Rosasco$^{\star,\ddagger}$ }}\\
\small \em $\star$  LCSL, Massachussetts Institute of Technology and Istituto Italiano di Tecnologia. \\
\small \em $\dagger$ CBCL, McGovern Institute, and Computer Science and Artificial Intelligence Lab, MIT,USA.\\
\small \em $\ddagger$ DIBRIS, Universit\`{a} di Genova, ITALY\\
{\small \tt  ymroueh,lrosasco@mit.edu}}
\date{\today}
\begin{document}

\maketitle

\begin{abstract}

In this paper, we study the problem of  robust phase recovery. We   investigate 
a novel approach based on extremely quantized (one-bit) measurements  and  a corresponding recovery scheme. 
The proposed  approach  has surprising robustness properties and, unlike currently available methods, allows to efficiently 
perform phase recovery from  measurements affected by  severe (possibly unknown) non linear perturbations, 
such as distortions (e.g. clipping). 
Beyond robustness,  we show how  our approach can be used within  greedy approaches based on alternating minimization. 
In particular,  we propose novel initialization schemes for the alternating minimization achieving favorable convergence properties 
with improved sample complexity. 

\end{abstract}
\section{Introduction}
The phase recovery problem can be modeled as the problem of  reconstructing a $n$-dimensional  complex  vector $x_0$ given 
only the magnitude of $m$ phase-less linear measurements. Such  a problem arises for example in X-ray crystallography \cite{Harrison93,Liu08}, diffraction imaging \cite{Bunk07,Rodenburg} or microscopy \cite{Miao08},  where one can only measure the intensities of the incoming waves, and  wishes to recover the lost phase in order to be able to reconstruct the desired object. 

In practice, phase recovery is often tackled via  greedy algorithms \cite{Gerchberg72,Fienup82,Griffin84} which
 typically lack convergence guarantees.  Recently,   approaches based on convex relaxations, namely PhaseLift in \cite{Candes,Demanet}, and Phase cut in \cite{mallat}, have been proposed and analyzed.  These latter methods can be solved by  Semi Definite Programing (SDP),
 and   allow the exact and stable recovery   of the signal (up to a global phase) from $O(n)$ measurements. 
 A different approach have been recently considered in \cite{AM},  where it is shown that a greedy alternating minimization, akin to those in \cite{Gerchberg72,Fienup82,Griffin84}, can be shown to geometrically converge to the true vector
 $x_0$ if $O(n\log^3(n))$ measurements  are given. Indeed, alternating minimization algorithms are known to be extremely sensitive  
 to the initialization and  a suitable initialization is the key of the analysis in \cite{AM}.
While alternating minimization approaches provide a solution only up-to a given accuracy, they often  have very good practical 
performances when compared to convex methods \cite{AM}, with dramatic computational advantages \cite{AM}. The solution of the SDP 
in convex approaches is  computationally expensive and  needs to be close to a rank one matrix for  tight recovery (which is  rarely encountered in practice \cite{mallat}). Indeed,  some greedy refinement of the SDP solution is often considered  \cite{mallat}.

In this paper, we propose and investigate a phase recovery approach based on extremely quantized measurements 
and a corresponding recovery procedure. In particular, we study the properties of the proposed method towards the following questions:
\begin{enumerate}
\item \textit{Robustness: Is it possible to efficiently perform  phase recovery, when  the   measurements
are corrupted by sever perturbations such as non linear distortions or   stochastic  noise?}
\item \textit{Refinements of Alternating Minimization: 
Are there initialization strategies for the alternating minimization approach that allow better sample complexity?}
\end{enumerate}
Robustness to noise and distortions, such as clipping of the intensities or imperfections in Fourier optics such as multiple scattering \cite{distortion} is a desirable property for a phase  retrieval algorithm.
At first this task might seem hopeless since current approaches to phase recovery are based on  measurements 
magnitudes which might be completely altered by distortions or if the signal-to-noise ratio is very poor.  
In fact, we prove the somewhat surprising fact that  phase retrieval is still possible, as long as the perturbations 
 preserve (on average) the  {\em ranking} of the measurements intensities.
 Indeed, key to our approach is considering a suitable quantization scheme based on comparing pairs of phase-less measurements:  
only the ranking of each measurement  pairs becomes important, rather than the intensity values themselves. 
Using these extremely quantized (one-bit) measurements, we show that recovery is possible  as soon as  $O(n\log n)$ pairs of measurements are available.  The corresponding  recovery procedure reduces to a  maximum eigenvalue problem ({\em 1bitPhase}) which can be efficiently solved, for example using the power method. Our approach is inspired by the growing field of one-bit compressive sensing \cite{1bitCS,Vershynin,LPCS, CSLaska}.

Beyond robustness, we show how the nature of the one-bit phase less measurements can be used to obtain better results 
for alternating minimization. We show that the solution of one-bit phase retrieval can be used to initialize alternating minimization 
to obtain the same convergence results in \cite{AM} from only $O(n\log n)$ measurements. Finally, we study a further initialization (weighted one-bit phase retrieval), which is a hybrid between the one in \cite{AM} and the one  provided by the one-bit approach.


The rest of the paper is organized as follows.  In Section \ref{sec:back}, we discuss some background and previous results.
In Section \ref{sec:Res}, we sketch our main results and techniques.  In Section \ref{sec:Sensing and Recovery}, we introduce and analyze the One-Bit Phase Retrieval approach. In Section \ref{sec:WeOneBit} we introduce weighted one bit phase retrieval that uses both quantized and un-quantized measurements. In Section \ref{sec:Greed} we show how one-bit phase retrieval algorithms can be used to initialize the alternating minimization approach  to  get a better  sample complexity.
We provide a theoretical analysis of our approach in Section \ref{theory}. Finally,  in Section \ref{sec:comp} we discuss  some computational aspects and present some numerical results.\\ 
 
 \noindent \textbf{Notations:}
 For $z\in \mathbb{C}$, $|z|^2$ is  squared  complex modulus of $z$.
For $a,a' \in \mathbb{C}^n$, $\scalT{a}{a'}$ is the complex dot product in $\mathbb{C}^n$. For $a \in \mathbb{C}^n, a^*$ is the complex conjugate and $||a||_{2}$ or simply $||a||$ is the norm $2$ of $a$.
Let $A$ a complex hermitian matrix in $\mathbb{C}^n$, $||A||_{F}$ denotes the Frobenius norm of $A$,  $||A||$ denotes the operator norm of $A$, $Tr(A)$ denotes the trace of A. Throughout the paper, we denote by $c,C$ positive absolute constants whose values may
change from instance to instance. 
\section{Background and Previous Work}\label{sec:back}

In this section,  we formalize the problem of  recovering a signal from  phase-less measurements  and discuss previous results.
Throughout this section,  and the  rest of the paper,  we consider  measurements defined  by independent and identically distributed Complex Gaussian sensing vectors, 
\begin{equation}\label{GaussVectors}
a_i \in \mathbb{C}^n, \quad \quad a_i \sim \mathcal{N}(0,\frac{1}{2}I_n)+i \mathcal{N}(0,\frac{1}{2}I_n),  \quad i=1\dots m.
\end{equation} 
The (noiseless) phase recovery problem is defined as follows.
\begin{definition}[Phase-less Sensing and Phase Recovery]
Suppose  phase-less sensing measurements 
\begin{equation}\label{Meas}
b_i=|\scalT{a_i}{x_0}|^2 \in \mathbb{R}_{+}, \quad \quad i=1\dots m,
\end{equation}
are given for $x_0 \in \mathbb{C}^n$,  where $a_i,~ i=1, \dots, m$ are random vectors as in \eqref{GaussVectors}.
The phase recovery problem is
\begin{equation}
\begin{aligned}
& \underset{}{\text{find}~x},\quad
& \text{subject to}\quad 
|\scalT{a_i}{x}|^2=b_i,\quad i=1\dots m. 
\end{aligned}
\label{eq:np}
\end{equation}
\end{definition}

\noindent The above  problem is non convex and in the following we recall recent approaches to provably and efficiently recover $x_0$
from a finite number of measurements.
%

\noindent \textbf{SDP (Convex) Relaxation and  PhaseLift.} 
The PhaseLift approach   \cite{Candes} stems from the  observation that $|\scalT{a_i}{x}|^2=Tr(a_ia^*_ixx^*),$ so that 
 if we let $X=xx^*$, Problem \ref{eq:np} can be written as, 
\begin{equation}
\begin{aligned}\label{PLift}
& \underset{}{\text{find}~X,}
& \text{subject to}\quad
&Tr(a_ia_i^*X)=b_i, \quad i=1\dots m, 
& \quad X \succeq 0, \quad rank(X)=1.
\end{aligned}
\end{equation}
While the above formulation is still non convex (and in fact combinatorially hard because of the rank constraint), a convex relaxation can be 
obtained noting that Problem \ref{PLift} can be written as   a rank minimization problem over the positive semidefinite  cone,
\begin{equation}
\underset{X}{\text{min}}\quad rank(X),\quad  \text{subject to}\quad Tr(a_ia_i^*X)=b_i, ~~i=1\dots m,  \quad  X \succeq 0,
\end{equation}
and then  considering the   trace as a surrogate for the rank  \cite{Candes},
\begin{equation}
\underset{X}{\text{min}}\quad Tr(X),\quad  \text{subject to}\quad Tr(a_ia_i^*X)=b_i, ~~i=1\dots m,  \quad  X \succeq 0.
\end{equation}
Indeed, the above problem is  convex and can be solved via semidefinite programming (SDP). 
Intestingly, a different  relaxation is obtained  in \cite{Demanet}
by  ignoring the rank constraint in Problem \ref{PLift}. 
The results in \cite{Candes,Demanet}  show that, with high probability,  the solution $\hat{X}_m$ obtained via either one of the above  relaxations can recover $x_0$ {\em exactly}, i.e. $\hat{X}_m=x_0x_0^*$, as soon as  $m\geq c n\log n$.  In fact,  the latter requirement can be further improved to $m\geq c n$ \cite{candesn}.  
If the  measurements are  corrupted by  noise, namely
\begin{equation}\label{NoisyMeas}
b_i=|\scalT{a_i}{x_0}|^2+w_i,\quad i=1\dots m,
\end{equation} 
the PhaseLift approach can be adapted \cite{candesn} by considering 
\begin{equation}\label{NoisyPLift}
\underset{X}{\text{min}} \sum_{i=1}^m \left| Tr(a_ia_i^*X)-b_i\right|,\quad \text{subject to} \quad \quad \quad X \succeq 0.
\end{equation}
The above problem is  convex and can again be solved via an SDP approach.   The properties of its solution have been studied in 
\cite{candesn}  for deterministic noise  $$||w||_{1}\leq \delta, \quad w=(w_1, \dots, w_n) \in \mathbb R _+ ^m, $$ 
where it is shown that the solution $\hat{X}_m$ of \eqref{NoisyPLift} 
 satisfies $||\hat{X}_m-x_0x_0^*||_{F}\leq c \delta/m$, as soon as $m\geq c n$. 
Moreover, the leading eigenvector $\hat{x}_m$ of $\hat{X}_m$ satisfies 
$||\hat{x}_m-e^{i\phi}x_0||_{2}\leq c \min (||x_0||_2,\frac{\delta}{m||x_0||_2})$,
where $\phi $ is a global phase in $[0,2\pi]$. Most importantly, the latter results suggests that 
 $x_0$ can be recovered  considering  the leading eigenvector of $\hat{X}_m$.

As  mentioned in the introduction, while powerful, the convex relaxation approach incur in cumbersome computations-- see Table \ref{tab:comparison}, 
and in practice non convex approaches based on greedy alternating minimization (AM)
\cite{Gerchberg72,Fienup82,Griffin84} 
are often used. The convergence properties of the latter methods  depend heavily  on the initialization and only 
recently \cite{AM}  they have been shown to globally converge (with high probability) if provided with a suitable initialization.
We next briefly review these latter results, which we further discuss and extend in Section \ref{sec:Greed}. 
%
%
 
\noindent\textbf{Phase Retrieval via Suitably Initialized  Alternating Minimization.}  
Let $A$ be the matrix defined by $m$ sensing vectors as in \eqref{GaussVectors}  and $B=Diag(\sqrt{b})$, where $b$ is the vector of measurements as in \eqref{Meas}. Then,
$$
Ax_0=Bu_0,
$$
for $u_0=Ph(Ax_0)$ with  $Ph(z)=\left(\frac{z_1}{|z_1|}, \dots \frac{z_m}{|z_m|}\right)$,  
$z \in \mathbb{C}^n$. 
The above equality suggests the following   natural approach  to recover $(x_0,u_0)$,
\begin{equation}
\underset{x,u}{\text{min}}
||Ax-Bu||_2^2, \quad \text{subject to}
\quad 
 |u_i|=1, \quad i=1\dots m, \\
\end{equation}
The above problem is non-convex  because of the constraint on $u$ and  the AM approach 
(Algorithm \ref{AltMinPhase}) consists in optimizing  $u$, for a given $x$, and then optimizing $x$ for a given $u$.
It is easy to see that for a given   $x$,  the optimal $u$ is simply
$u=Ph\left(Ax\right),$ and for a given $u$,the optimal  $x$ is the solution of  a least squares problem. 

The key result in \cite{AM} shows that if such an iteration is initialized with  maximum eigenvector of the matrix  
\begin{equation}\label{Covb}
\hat{C}_m=\frac{1}{m}\sum_{i=1}^mb_i a_ia_i^*
\end{equation}
and $m\ge C n(\log n)^3$, 
then the solution $\hat{x}_m$ of the alternating minimization globally converge (with high probability) to the true vector $x_0$.
Moreover for a given accuracy $\epsilon\in [0,1]$,   if  
\begin{equation}\label{AM_SampComp}
m\ge c(n(\log^3(n)+\log(\frac{1}{\epsilon})\log(\log(\frac{1}{\epsilon})) )),
\end{equation} 
then  $||\hat{x}_m-e^{i\phi}x_0||_{2}\leq \epsilon$. 

The following  two remarks will be useful in the following. 
First,   a key observation, motivating  the above initialization (called SubExpPhase in the following),  is the fact that the expectation of $\hat{C}_m$ can be shown to satisfy  $\mathbb{E}(\hat{C}_m)=x_0x_0^*+I$. Indeed,  the proof in \cite{AM}  (see Section \ref{sec:Greed}) relies   
on the concentration properties of the random matrix $\hat{C}_m$ around its expectation \cite{VershyninReview}. 
Second, it is useful to note that these latter  results crucially depend on a bound on the norm of $b_i a_ia_i^*$ for $i=1, \dots,m$. Indeed, it is this latter bound the main cause of the  poly-logarithmic term in the sample complexity \eqref{AM_SampComp}, since the $b_i$'s are 
sub-exponential random variables.

\begin{algorithm}[H]
 \begin{algorithmic}[1]
 \Procedure{AltMinPhase}{$A,b$}
 \State \textbf {Initialize} $x$ .
 \For{k=1\dots} 
  \State $u \gets Ph(Ax)$
 \State  $ x \gets \arg\min ||Ax-Bu ||^2_{2}$
 \EndFor
 \State \textbf{return} $x$ 
 \EndProcedure
 \end{algorithmic}
 \caption{AltMinPhase}
 \label{AltMinPhase}
\end{algorithm}

\section{Summary of Our Main Results and Techniques}\label{sec:Res}
In this paper,  we propose and study a quantization scheme and a corresponding recovery procedure. 
In particular we investigate the properties of our approach towards: 1) the phase recovery problem from severely 
perturbed measurements,  and   2) the improvement of the AM approach discussed in the previous section.

\subsection{Phase Recovery from Severely Perturbed Measurements}
We investigate the phase recovery problem in the case in which we have at disposal measurements of the form 
\begin{equation}\label{DistMeas}
b_i=\theta(|\scalT{a_i}{x_0}|^2), \quad i=1, \dots, 2m,
\end{equation}
where $a_i$ are sensing vectors as in \eqref{GaussVectors} and  $\theta$ is a {\em possibly unknown}  rank preserving transformation. 
In particular we are interested to situations where $\theta$  models a distortion, e.g. $\theta(s)=\tanh(\alpha s)$, $\alpha\in \mathbb R_+$, 
or an additive noise  $\theta(s)=s+\nu$,   where $\nu$ is a  stochastic  noise, such as an exponential or Poisson noise.
As we noted before the  recovery problem from severly perturbed intensity values seems hopeless, and indeed  the key in  our approach is a quantization scheme based on comparing  pairs of phase-less measurements. 
More precisely for each pair $b^1, b^2$ of measurements of the form  \eqref{DistMeas} we define  $y\in \{-1, 1\}$
as $y=sign(b^1-  b^2)$.  This  {\em one-bit}  quantization scheme  draws inspiration from ideas 
in one-bit compressive sensing \cite{1bitCS,Vershynin,LPCS, CSLaska}, but the fundamental difference is that our 
approach crucially depends on the  comparison of two measurements: one-bit measurements involve the spacing between the order statistics  of exponentially distributed random variables $b^1$ and $b^2$. Indeed, this will be a key fact in our analysis.
While phase-recovery from one-bit phase-less measurements is in general a hard problem (see Section \ref{sec:Sensing and Recovery}),
we propose to consider a relaxation which reduces to a maximum eigenvalue problem induced by the matrix
\begin{equation}\label{Covy}
\hat{C}_m=\frac{1}{m}\sum_{i=1}^{m}  y_i(a^{1}_ia^{1,*}_i-a^{2}_ia^{2,*}_i).
\end{equation}
When compared to  \eqref{Covb}
we see that the $m$ phase-less measurements $b_i$ are replaced 
by their quantized counterpart $y_i$ (obtained from $2m$ phase-less measurements), 
and the term given by sensing vectors is now given  by {\em pairs} of sensing vectors.
Indeed, we prove in Section \ref{theory} that the expectation of $\hat{C}_m$ satisfies 
$\mathbb E  \hat{C}_m=\lambda x_0x_0^*$, where $\lambda$ is a suitable constant
which depends on $\theta$ and plays the role of a signal-to-noise ratio. Indeed, by studying the concentration 
properties of the matrix $\hat{C}_m$, we show  that, for a given accuracy $\epsilon\in [0,1]$,  if 
$O( \frac{ n\log n }{\epsilon^2 \lambda})$ pairs of measurements  are available, then 
 the solution of the above  maximum eigenvalue problem satisfies 
$$||\hat{x}_m-x_0e^{i\phi}||^{2}\leq \epsilon,$$  
where  $\phi \in [0,2\pi]$ is a global phase. 
It is worth noting here that the signal can be recovered up to a scaling factor from one bit measurements, but this is not a problem since our goal is to recover the missing phase.
\subsection{One-Bit Phase Retrieval and  Alternating Minimization}

A key difference between the matrix in Eq. \eqref{Covb} and the one in Eq.  \eqref{Covy} is that one-bit measurements are bounded
and lead to improved concentration results. This motivates considering the effect of using the solution of the one-bit phase retrieval 
to initialize the alternating minimization procedure considered in \cite{AM}.
Indeed, leveraging results from \cite{AM}, 
we prove in Section \ref {sec:Greed} that,  provided 
with the one-bit retrieval initialization,  
the alternating minimization algorithm globally converges (with high probability) to the true vector $x_0$, and  if  
\begin{equation}\label{AM_SampComp1Bit}
m\ge c(n(\log n+\log(\frac{1}{\epsilon})\log(\log(\frac{1}{\epsilon})) )),
\end{equation} 
then  $||\hat{x}_m-e^{i\phi}x_0||_{2}\leq \epsilon$. 
Comparing to \eqref{AM_SampComp},  we see that the sample complexity  depends  now only on a  logarithmic term.
Quantization can be seen as playing  the role of a preconditioning that enhances the sample complexity of the alternating minimization.
Further, we note that  it is possible  to achieve similar results, see Table \ref{tab:comparison},  considering   a different initialization obtained via a weighted one-bit approach which combines 
quantized and un-quantized measurements, see Section \ref{sec:WeOneBit}. 

\begin{table}[H]
  \begin{center}
    \begin{tabular}{ | c | c | c |}
      \hline
								& Sample complexity & Comp. complexity  \\ \hline
          PhaseLift	& $O(n)$
					& $O(n^3/\epsilon^2)$ \\ \hline
      PhaseCut	& $O(n)$
					& $O(n^3/\sqrt{\epsilon})$\\ \hline
   SubExpPhase+AM 	& $ O(n \left(\log^3 n + \log \frac{1}{\epsilon} \log \log \frac{1}{\epsilon}\right))$
					&  $O(n^2 \left(\log^3 n + \log^2 \frac{1}{\epsilon} \log \log \frac{1}{\epsilon}\right))$ \\ \hline

     1bitPhase+AM & $O(2n\left(\log(n)+ \log \frac{1}{\epsilon} \log \log \frac{1}{\epsilon}\right)))$     &		$O(n^2 \left(\log n + \log^2 \frac{1}{\epsilon} \log \log \frac{1}{\epsilon}\right))$	\\ \hline		Weigthed1bitPhase+AM & $O(2n\left(\log(n)+ \log \frac{1}{\epsilon} \log \log \frac{1}{\epsilon}\right)))$     &		$O(n^2 \left(\log n + \log^2 \frac{1}{\epsilon} \log \log \frac{1}{\epsilon}\right))$	\\ \hline

      \end{tabular}
      \caption{Comparison of the sample and computational complexity of different phase retrieval schemes.}
      \label{tab:comparison}
  \end{center}
\end{table}

\section{One-Bit Phase Retrieval}\label{sec:Sensing and Recovery}
In this section, we set up the one-bit approach to phase-retrieval and state our main results on robust phase recovery.
\subsection{ Quantization and Recovery}
Unlike  in compressive sensing,  in our context the intensities are non negative, and thus we cannot rely on only  one measurement to build a quantizer. 
\begin{definition}[One-bit quantizer]  \label{def:quantizer} 
Let $A=(a^{1},a^{2})$, where $a^{1},a^{2}$ are \text{i.i.d.} 
complex  Gaussian vectors 
$\mathcal{N}(0,\frac{1}{2}I_{n})+ i\mathcal{N}(0,\frac{1}{2}I_{n})$.
For $x_0 \in \mathbb{C}^n$, a one bit quantizer is given by 
$$Q_{A}:\mathbb{C}^n\to \{-1,1\},\quad Q_{A}(x_0)=sign\left( |\scalT{a^1}{x_0}|^2-|\scalT{a^2}{x_0}|^2\right).$$
\end{definition}
\begin{definition}[Quantized phase-less measurements] Let $\{A_i=(a^1_i,a^2_i)\}_{1\leq i\leq m}$, be $2m$ i.i.d. gaussian complex vectors in $\mathbb{C}^n$, and $Q_{A_i}(x_0)$ as in Def \ref{def:quantizer} . The Quantized Phase-less sensing is given by $\mathcal{Q}:\mathbb{C}^{n}\to \{-1,1\}^m$ , $\mathcal{Q}(x_0)=(Q_{A_1}(x_0),\dots,Q_{A_m}(x_0))$.
\label{def:sensing}
\end{definition}
 In this paper, we are interested in recovering $x_0$ from its  quantized phase-less measurements $y= (y_1\dots y_m)=\mathcal{Q}(x_0)=(Q_{A_1}(x_0),\dots,Q_{A_m}(x_0))$.
It is easy to see that the recovery problem has the form,
\begin{equation}
 \underset{}{\text{find}~~ x},\quad  \quad \text{subject to} \quad y_i\left(\left|\scalT{a^{1}_i}{x}\right|^2 - \left|\scalT{a^{2}_i}{x}\right|^2\right)\geq0,\quad i=1\dots m, \quad ||x||^2_2=1.
\label{eq:biPhase}
\end{equation}
Indeed, as in one-bit compressive sensing, we cannot hope to recover the norm of the vector from inequality constraints, hence 
the  norm one constraint. Problem \eqref{eq:biPhase} can be equivalently written as the following quadratically constrained problem, 
\begin{equation}
 \underset{}{\text{find}~~ x},\quad  \quad \text{subject to} \quad x^*y_i(a^{1}_ia^{1,*}_i-a^{2}_ia^{2,*}_i)x\geq0,\quad i=1\dots m, \quad ||x||^2_2=1.
\label{eq:QPhase}
\end{equation}
The above problem is a non-convex Quadratically Constrained Quadratic Program  (QCQP) and can be shown to be NP-hard in general \cite{qcqp}.
 We propose to consider the following relaxation, 
   \begin{equation}
 \underset{x}{\text{max}}\quad  x^* \left(\frac{1}{m}\sum_{i=1}^{m}  y_i(a^{1}_ia^{1,*}_i-a^{2}_ia^{2,*}_i)\right)x,\quad  \quad \text{subject to}  \quad ||x||^2_2=1.
\label{eq:MaxPhase}
\end{equation}
The 1BitPhase problem is obtained  noting that the above problem can be rewritten as the 
the maximum eigenvalue problem,
 \begin{equation}
\begin{aligned}
& \underset{}{\text{$\max_{x~\text{s.t.}~||x||_2=1}$}~~ x^*\hat{C}_mx},
\end{aligned}
\label{eq:MaxPhase}
\end{equation}
defined by the matrix 
$$ \hat{C}_m=\frac{1}{m}\sum_{i=1}^{m}  y_i(a^{1}_ia^{1,*}_i-a^{2}_ia^{2,*}_i).$$

\noindent As we comment in the following remark the 1BitPhase approach is  inspired by one bit-compressed sensing. 

\begin{remark}[Quantization in Compressive Sensing: One bit CS]
Non-linear, quantized measurements have been recently considered in the context of one-bit compressive sensing\footnote{See  {\small \texttt{http://dsp.rice.edu/1bitCS/}} for an exhaustive list of references.}. 
Here, binary (one-bit) measurements are obtained  applying, for example, the ``sign'' function to linear measurements. More precisely, given 
$x_0\in \R^{n}$,  a measurement vector is given by  $y=(y_1, \dots, y_m)$, where $y_i= \text{sign}(\scal{a_i}{x})$ with  $a_i\sim {\mathcal N} (0, I_{n})$ independent Gaussian random  vectors, for $i=1,\dots, m$. It is possible to prove, see e.g.  \cite{Vershynin},   that, for a signal $x_0 \in K\cap  \mathbb B ^{n}$ ($\mathbb{B}^{n}$ is the unit ball in $\R^n$),
 the solution $\hat x_m$ to the problem
\begin{equation}\label{ERM}
\max_{x\in K\cap  \mathbb B ^{n}}\sum_{i=1}^m y_i\scal{a_i}{x},
\end{equation}
satisfies $\nor{\hat x_m - x_0}^2\le  \frac{\delta}{\sqrt{\frac{2}{\pi}}}$,  $\delta>0$, with high probability, 
as long as $m\ge  C\delta^{-2} \omega(K)^2$ \cite{Vershynin}. Here  $\omega(K)=\E \sup_{x\in K-K} \scal{w}{x}$  denotes the Gaussian mean width of $K$.
The 1BitPhase approach shows that  a relaxation  similar to problem \ref{ERM} allows 
to perform  phase recovery for a suitably defined quantization of phase-less linear measurements.
\end{remark}

Before studying the recovery guarantees for the solution of 1BitPhase we briefly discuss a geometric intuition 
underlying the method.

\subsection{Geometric Intuition}

\begin{figure}[t]
\begin{center}
\noindent\includegraphics[scale=0.3]{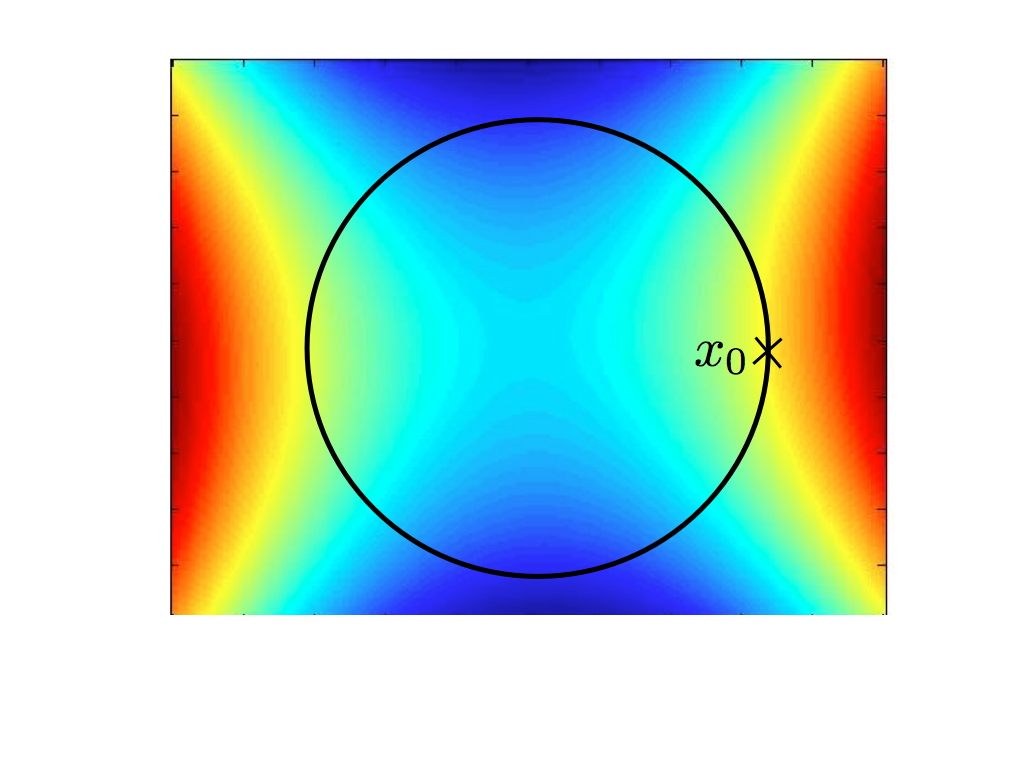}
\end{center}
\caption{The level sets of  the objective of problem the 1bitPhase problem (in red maximal values, in blue minimal values).}
\label{fig:obpr}
\end{figure}

\noindent To understand the geometric intuition of one bit phase retrieval, we first consider  the feasibility problem  \eqref{eq:biPhase}.
Each measurements pair $(a^1_i,a^2_i)$ defines a hyperbolic paraboloid $z=x^* (a^1_ia^{1,*}_i- a^2_i a^{2,*}_i)x$. The feasible zone is defined by the constraints $y_i(x^* (a^1_ia^{1,*}_i- a^2_i a^{2,*}_i)x)\geq 0$, which enforce the geometric consistency with the bit $y_i$. In other words each constraint says that $x_0$ is in the region of the space where the sign  of the corresponding 
  hyperbolic paraboloids is  $y_i$.  Note that  each  hyperbolic paraboloid is symmetric with respect to the origin, thus  the feasible region 
  is also symmetric with respect to the origin. When we add more constraints we have  that $x_0$ lies in the intersection of such symmetric feasible regions and  the unit sphere. This intersection is also symmetric, thus we can solve the phase retrieval up to global sign flip (in the real valued case). As we mentioned before,  the feasibility problem is a non convex QCQP which  is  NP hard in general and our relaxation \eqref{eq:MaxPhase} can be seen as requiring the  geometric consistency with one bit measurements on average, rather than individually 
  as in the feasibility problem.
In  figure \ref{fig:obpr}  we plot the level sets of the objective function of the  $1$bitPhase Problem  \eqref{eq:MaxPhase}  
for $x_0=(1,0)$. We see that the objective function achieves its maximum values (in red) in a symmetric region. This region intersects with the sphere in two regions close to  the points $x_0$ and $-x_0$. Thus we are able to recover the phase up to global sign and scaling from single bit measurements.
  
\subsection{ One bit Phase Retrieval from Distorted and Noisy measurements}
In the following we assume that the intensities are undergoing an unknown non linearity  $\theta$, that is  we observe,
$$(b^1_i,b^2_i)=(\theta(|\scalT{a^1_i}{x_0}|^2), \theta(|\scalT{a^2_i}{x_0}|^2), \quad i=1 , \dots , m.$$
 Thus the quantized measurements are, 
\begin{equation}\label{eq:model}
y_i=Q^{\theta}_{A_i}(x_0)=sign\left(\theta(|\scalT{a^1_i}{x_0}|^2)-\theta(|\scalT{a^2_i}{x_0}|^2\right), \quad i=1\dots m.
\end{equation}
For instance clipping can  be modeled by a sigmoid,
$$\theta(z)=\tanh(\alpha z), z>0.$$
where  the parameter $\alpha$ controls how severe is the distortion.  An additive noise (before quantization) can be modeled by,
\begin{equation}
\theta(z)=z+\nu, 
\label{eq:noispr}
\end{equation}
where $\nu \sim Exp(\gamma)$ is a stochastic exponential noise with mean $\mu=\frac{1}{\gamma}$ and variance $\sigma=\frac{1}{\gamma^2}$.\\
When the intensities are contaminated with poisson noise we have,
\begin{equation}
\theta(|\scalT{a}{x_0}|^2)=\mathcal{P}_{\eta}\left(|\scalT{a}{x_0}|^2\right) ,
\label{eq:1bitCDPoisson}
\end{equation}
where $\mathcal{P}_{\eta}$ is a poisson noise, such that :\\
$$\text{For } z,\eta >0,\quad  \theta(z)=\mathcal{P}_{\eta}(z)=p ,\quad  \text{ where } p\sim Poisson\left( \frac{z}{\eta}\right).$$
We shall make one assumption on the non linearity $\theta$,
\begin{equation}{\label{eq:lambda}}
\lambda=\mathbb{E}(sign(\theta(E_1)-\theta(E_2))(E_1-E_2))>0,
\end{equation}
where $E_1$, $E_2$ are two independently distributed exponential random variables. To see why this assumption is natural, 
notice that $|\scalT{a}{x_0}|^2\sim Exp(1)$  if $a \sim \mathcal{C}\mathcal{N}(0,I_n)$ and $||x_0||=1$, thus
$$\mathbb{E}(y_i(|\scalT{a^1_i}{x_0}|^2-|\scalT{a^2_i}{x_0}|^2 ))=\mathbb{E}(sign(\theta(E_1)-\theta(E_2))(E_1-E_2))=\lambda>0.$$
Then the above assumption simply means that the one bit measurements preserve robustly the ranking of the intensities. 
For example, such an  assumption is trivially satisfied whenever $\theta$ is increasing. 
In case $\theta(z)=z$, $\lambda$ achieves its maximal value,
$$\lambda=\mathbb{E}\left(sign(E_1-E_2)(E_1-E_2)\right)=\mathbb{E}\left|E_1-E_2\right|=1,$$
since $\left|E_1-E_2\right|\sim Exp(1)$. 
For different models of observation, the value of $\lambda$ is given in Lemma~\ref{lem:SNR}, which shows that 
 that $\lambda$ plays the role of a signal to noise ratio.
\begin{figure}[H]\label{fig:lambper}
\noindent\includegraphics[scale=0.4]{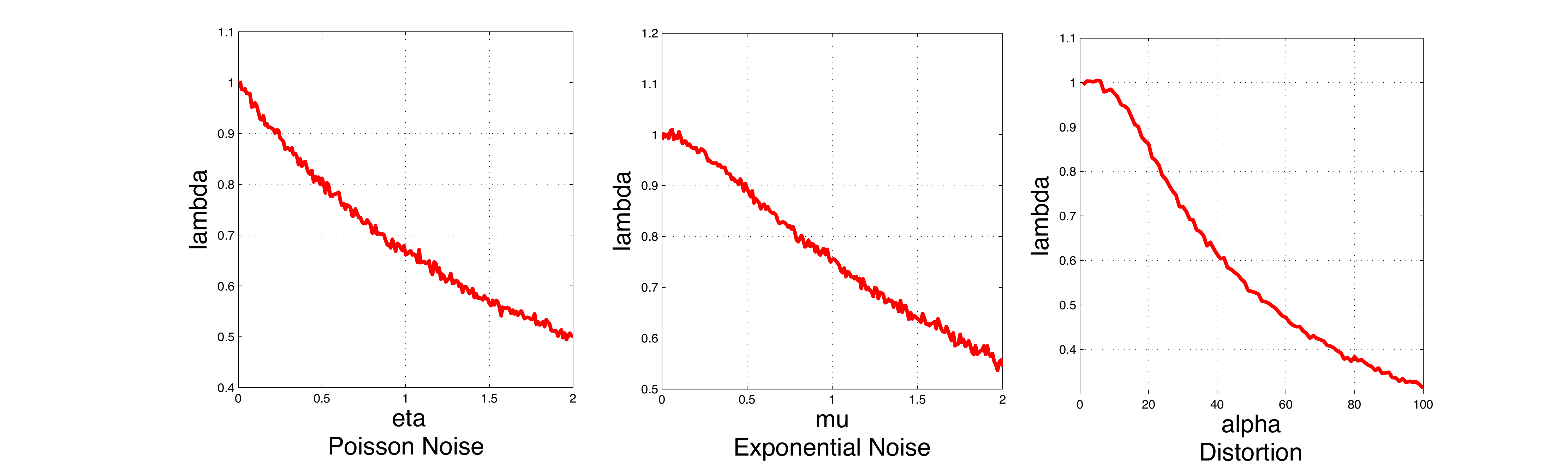}
\caption { $\lambda$ as a signal to noise ratio: $\lambda$ versus different parameters of various observation model $\theta$. $\lambda$ decreases as noise and distortion levels increase.}
\end{figure}
\noindent We see in Figure \eqref{fig:lambper} that $\lambda$ indeed decreases as the level of noise and distortion increases.
\subsection{Main Results for  One Bit Phase Retrieval}
The following theorem describes the recovery guarantees for the solution   $\hat{x}_m$  of problem $1$bitPhase \eqref{eq:MaxPhase}.
\begin{theorem}[One bit Recovery]
For $x_0 \in \mathbb{C}^n, ||x_0||=1$. Assume $y_1\dots y_m$, follows the model  given in \eqref{eq:model}. Then for any $\epsilon \in [0,1]$, we have with a probability at least $1- O(n^{-2}) $, 
$$\text{for } m \geq \frac{ C }{\epsilon^2 \lambda} n\log(n), \quad ||\hat{x}_m-x_0e^{i\phi}||^{2}\leq \epsilon,$$
where  $\phi \in [0,2\pi]$ is a global phase and $\lambda$ is given in \eqref{eq:lambda}.
\label{theo:main}
\end{theorem}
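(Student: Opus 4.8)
The plan is to reduce everything to the behaviour of the random matrix $\hat{C}_m$ in \eqref{Covy}: I would show that $\hat{C}_m$ concentrates in operator norm around its rank-one expectation $\lambda\,x_0x_0^*$, and then invoke a spectral perturbation argument so that the maximizer $\hat{x}_m$ of \eqref{eq:MaxPhase}, being the leading eigenvector of $\hat{C}_m$, inherits closeness to the leading eigenvector $x_0$ of the mean. The proof therefore rests on two pillars: (i) the identity $\mathbb{E}\hat{C}_m=\lambda\,x_0x_0^*$ with $\lambda$ as in \eqref{eq:lambda}, and (ii) a high-probability bound $||\hat{C}_m-\lambda x_0x_0^*||\le\delta$. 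Granting these, $\lambda x_0x_0^*$ is a positive rank-one matrix with top eigenpair $(\lambda,x_0)$ and spectral gap exactly $\lambda$, so operator-norm closeness forces $\hat{x}_m\approx e^{i\phi}x_0$.

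First I would establish the expectation identity. By unitary invariance of the complex Gaussian law I may rotate coordinates so that $x_0=e_1$; then $y_i$ depends only on $(|a^1_{i,1}|^2,|a^2_{i,1}|^2)$, which are independent $\mathrm{Exp}(1)$ variables. Swapping $a^1_i\leftrightarrow a^2_i$ flips the sign of $y_i$ and exchanges the two outer products, so $\mathbb{E}\hat{C}_m=2\,\mathbb{E}\big(y_1\,a^1_1a^{1,*}_1\big)$. The rotational symmetry of the phases of the remaining coordinates annihilates every off-diagonal entry and every diagonal entry orthogonal to $e_1$, leaving only the $(1,1)$ entry $2\,\mathbb{E}\big(\mathrm{sign}(\theta(E_1)-\theta(E_2))\,E_1\big)$, which equals $\lambda$ by the antisymmetry underlying \eqref{eq:lambda}. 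This yields $\mathbb{E}\hat{C}_m=\lambda x_0x_0^*$.

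Second, and this is the crux, I would prove the concentration bound. Writing $\hat{C}_m-\lambda x_0x_0^*=\frac1m\sum_iZ_i$ with $Z_i=y_i(a^1_ia^{1,*}_i-a^2_ia^{2,*}_i)-\lambda x_0x_0^*$ i.i.d. and centered, I would control $||\frac1m\sum_iZ_i||$ by a covering argument: fix a $1/4$-net $\mathcal N$ of the unit sphere (of cardinality $e^{Cn}$), bound $u^*(\hat{C}_m-\lambda x_0x_0^*)u$ for each fixed $u\in\mathcal N$ by a scalar Bernstein inequality, and pass to the operator norm. The scalar summands $y_i(|\langle a^1_i,u\rangle|^2-|\langle a^2_i,u\rangle|^2)$ are sub-exponential (differences of $\mathrm{Exp}$ variables), and crucially $|y_i|=1$ is \emph{bounded}; this is exactly the advantage of quantization over the un-quantized matrix \eqref{Covb}, whose weights $b_i$ are themselves sub-exponential and force the extra poly-logarithmic factors in \eqref{AM_SampComp}. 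The heavy tails of the Gaussian outer products are tamed by truncating on the event $\{\,||a_i||^2\lesssim n\log n\,\}$, which is where the single $\log n$ enters, the truncated remainder being negligible with probability $1-O(n^{-2})$. Balancing the Bernstein variance term against the union bound over $\mathcal N$ gives $||\hat{C}_m-\lambda x_0x_0^*||\le\delta$ with probability $1-O(n^{-2})$ as soon as $m\gtrsim n\log n/\delta^2$.

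Finally I would convert the matrix bound into the eigenvector guarantee by the elementary variational inequality. Optimality of $\hat{x}_m$ gives $\hat{x}_m^*\hat{C}_m\hat{x}_m\ge x_0^*\hat{C}_mx_0\ge\lambda-\delta$, while $\hat{x}_m^*\hat{C}_m\hat{x}_m\le\lambda|\langle\hat{x}_m,x_0\rangle|^2+\delta$; together these yield $|\langle\hat{x}_m,x_0\rangle|^2\ge 1-2\delta/\lambda$ and hence $||\hat{x}_m-e^{i\phi}x_0||^2\le 4\delta/\lambda$ for the optimal global phase $\phi$. Setting $\delta$ so that this is at most $\epsilon$ and substituting into the concentration requirement $m\gtrsim n\log n/\delta^2$ produces the stated sample complexity. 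I expect the main obstacle to be the concentration step: making the covering argument uniform over the sphere while keeping the dependence on $n$ down to a single $\log n$ requires the careful truncation of the sub-exponential outer products described above, and it is precisely the boundedness of the one-bit labels $y_i$ that prevents the additional logarithmic factors present in the analysis of \eqref{Covb}.
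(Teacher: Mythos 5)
Your overall architecture coincides with the paper's: the expectation identity $\mathbb{E}\hat{C}_m=\lambda x_0x_0^*$ is Proposition \ref{lem:compeq} (your rotation-plus-swap symmetry argument is a clean equivalent of the paper's Gaussian decomposition $\scalT{a}{x}=\scalT{x_0}{x}g+\sqrt{1-|\scalT{x_0}{x}|^2}h$), and your closing variational inequality $\hat{x}_m^*\hat{C}_m\hat{x}_m\ge x_0^*\hat{C}_mx_0$ combined with $\mathcal{E}^{x_0}(x_0)-\mathcal{E}^{x_0}(x)=\lambda(1-|\scalT{x_0}{x}|^2)$ is exactly Proposition \ref{lem:empiprocess} plus Lemma \ref{lem:ineq}. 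Where you genuinely diverge is the concentration step: you propose an $\varepsilon$-net over the sphere with a scalar Bernstein bound at each net point, whereas the paper truncates the sensing vectors on $\{\|a\|^2\le M\}$ with $M=O(n)$ and applies the non-commutative matrix Bernstein inequality (Theorem \ref{eq:Bernstein}). Both routes are viable, but your description of the net route is internally inconsistent: for a \emph{fixed} unit vector $u$ the summand $y_i(|\scalT{a^1_i}{u}|^2-|\scalT{a^2_i}{u}|^2)$ is already an $O(1)$ sub-exponential scalar, so no truncation of $\|a_i\|^2$ is needed and the union bound over a net of cardinality $e^{Cn}$ costs $n$, not $n\log n$ --- the net argument would actually give $m\gtrsim n/\delta^2$. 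The $\log n$ you are trying to account for belongs to the paper's route, where it arises from the dimensional prefactor $2n$ in matrix Bernstein together with the target failure probability $O(n^{-2})$ (via $s=t\sqrt{\log n}$), not from a truncation at level $n\log n$. This is a presentational muddle rather than a fatal gap, since either correct version yields a bound at least as strong as required.

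One point deserves a sharper flag: your final bookkeeping does not reproduce the stated $1/\lambda$ dependence. You derive $\|\hat{x}_m-e^{i\phi}x_0\|^2\le 4\delta/\lambda$ from $\|\hat{C}_m-\lambda x_0x_0^*\|\le\delta$ and then impose $m\gtrsim n\log n/\delta^2$ with $\delta\asymp\epsilon\lambda$; this gives $m\gtrsim n\log n/(\epsilon^2\lambda^2)$, matching the theorem only when $\lambda=\Theta(1)$ (e.g.\ the noiseless case). The paper obtains the $1/\lambda$ rate by parametrizing the matrix Bernstein deviation as $\|\hat{C}_m-C\|\le\max(\sqrt{\lambda}\delta,\delta^2)$ with $\delta=t\sqrt{4M\log(n)/m}$ and balancing the sub-Gaussian and sub-exponential regimes of the tail, so that the requirement becomes $\delta\le\epsilon\sqrt{\lambda}$ rather than $\delta\le\epsilon\lambda$. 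If you want the theorem as stated for small $\lambda$, you must carry out this two-regime analysis (or otherwise exhibit a variance proxy for the summands that improves with $\lambda$) rather than the single-scale bound $m\gtrsim n\log n/\delta^2$ you wrote down.
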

For the simple model model where $\theta(z)=z$, $\lambda=1$. Theorem \ref{theo:main} implies that if $m=O(n\log(n))$ (so that the total number of measurements is $2m$), then $\hat{x}_m$ is an $\epsilon$-estimate of $x_0$, up to a global phase $\phi$.
In Corollary \ref{theo:Noise} we specify the above theorem to the noisy model \eqref{eq:noispr}.
\begin{corollary}[One bit Recovery/ Noise]
For $x_0 \in \mathbb{C}^n, ||x_0||=1$, and $\epsilon>0$. Assume $y_1\dots y_m$, follows the noisy model  given in \eqref{eq:model}, for $\theta(z)=z+\nu, \nu \sim Exp(\gamma)$.
Where $\nu$ is an exponential noise with variance $\sigma=\frac{1}{\gamma^2}$. 
 Then for any $\epsilon \in [0,1]$, we have with a probability at least $1- O(n^{-2}) $, 
$$\text{for } m \geq \frac{C  }{\epsilon^2}\frac{(1+\sqrt{\sigma})^2}{1+2\sqrt{\sigma}} n\log(n), \quad ||\hat{x}_m-x_0e^{i\phi}||^{2}\leq \epsilon,$$
where  $\phi \in [0,2\pi]$ is a global phase. 
\label{theo:Noise}
\end{corollary}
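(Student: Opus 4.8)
The plan is to derive this corollary directly from Theorem \ref{theo:main}. That theorem already delivers the sample complexity $m \geq \frac{C}{\epsilon^2 \lambda}\, n\log n$ for any rank-preserving $\theta$ satisfying \eqref{eq:lambda}, together with the $1-O(n^{-2})$ confidence and the $\epsilon$-recovery guarantee. Hence for the additive exponential model $\theta(z)=z+\nu$, $\nu\sim\mathrm{Exp}(\gamma)$, the only remaining task is to evaluate the signal-to-noise constant $\lambda$ explicitly as a function of the noise variance $\sigma=1/\gamma^2$ and substitute $1/\lambda$ into the bound. So the whole proof reduces to a single expectation computation.

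First I would substitute $\theta(z)=z+\nu$ into the definition \eqref{eq:lambda}. Writing the two independent intensity variables as $E_1,E_2\sim\mathrm{Exp}(1)$ and the two independent noise realizations as $\nu_1,\nu_2\sim\mathrm{Exp}(\gamma)$, the quantized comparison becomes $sign(\theta(E_1)-\theta(E_2))=sign\big((E_1-E_2)+(\nu_1-\nu_2)\big)$. Introducing $D=E_1-E_2$ and $N=\nu_1-\nu_2$, both of which are symmetric Laplace variables, independent of each other, with $D$ of scale $1$ and $N$ of scale $1/\gamma=\sqrt{\sigma}$, the constant to compute is simply $\lambda=\mathbb{E}\!\left[sign(D+N)\,D\right]$.

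Next I would evaluate this expectation. Exploiting the joint symmetry $(D,N)\mapsto(-D,-N)$ lets me rewrite $\lambda=2\,\mathbb{E}\!\left[D\,\mathbf{1}(D+N>0)\right]$. Conditioning on $D=d$ and inserting the closed form of the Laplace tail $P(N>-d)$, which is piecewise exponential split at $d=0$, the inner expectation reduces to elementary integrals of the form $\int_0^\infty t\,e^{-at}\,dt=a^{-2}$ with rate $a=1+1/\sqrt{\sigma}$. Collecting the two pieces (over $d<0$ and $d>0$) I expect to obtain
$$\lambda = 1 - \frac{1}{\big(1+1/\sqrt{\sigma}\big)^2} = \frac{1+2\sqrt{\sigma}}{\big(1+\sqrt{\sigma}\big)^2},$$
so that $1/\lambda = \frac{(1+\sqrt{\sigma})^2}{1+2\sqrt{\sigma}}$; plugging this into Theorem \ref{theo:main} yields the stated threshold $m\geq \frac{C}{\epsilon^2}\frac{(1+\sqrt{\sigma})^2}{1+2\sqrt{\sigma}}\,n\log n$. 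As a sanity check, the noiseless limit $\sigma\to 0$ recovers $\lambda\to 1$, consistent with the $\theta(z)=z$ case noted after \eqref{eq:lambda}.

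The only genuinely delicate step is this expectation computation: one must use the correct Laplace scale $\sqrt{\sigma}$ for the noise difference $N$ and handle the piecewise tail probability across $d=0$ carefully, since an error in the rate $a$ propagates directly into the final sample complexity. Everything else, namely the high-probability guarantee, the $1-O(n^{-2})$ confidence, and the $n\log n$ dimensional scaling, is inherited verbatim from Theorem \ref{theo:main}.
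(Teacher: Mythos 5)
Your proposal is correct and follows essentially the same route as the paper: the corollary is obtained by specializing Theorem \ref{theo:main} to $\theta(z)=z+\nu$ and computing $\lambda$ via the Laplace-distributed differences $L=E_1-E_2$ and $N=\nu_1-\nu_2$, which is exactly the computation in Lemma \ref{lem:SNR}, yielding $\lambda=1-\tfrac{1}{(1+\gamma)^2}=\tfrac{1+2\sqrt{\sigma}}{(1+\sqrt{\sigma})^2}$. Your symmetrization $\lambda=2\,\mathbb{E}[D\,\mathbf{1}(D+N>0)]$ is just a minor reorganization of the paper's conditioning step $\mathbb{E}_L\bigl((1-2F_N(-L))L\bigr)$, and both land on the same integral $\int_0^\infty z e^{-(1+\gamma)z}\,dz$.
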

\noindent In other words, under an exponential noise we have: $$||\hat{x}_m-x_0e^{i\phi}||^{2}\leq C\sqrt{\frac{ n\log(n)}{m}\frac{(1+\sqrt{\sigma})^2}{1+2\sqrt{\sigma}} }.$$
A similar result holds for Poisson noise for a different value of $\lambda$ given in Lemma \ref{lem:SNR}.\\
Beyond robustness to noise, another desirable feature for phase retrieval from phase-less  measurements, is the robustness to distortions of the values of intensities.
Is it possible to retrieve the phase from intensities values that are undergoing clipping for instance? 

\begin{corollary}[One bit Recovery/ Distortion]\label{cor:Distortion}
For $x_0 \in \mathbb{C}^n, ||x_0||=1$, and $\epsilon>0$. Assume $y_1\dots y_m$, follows the noisy model  given in \eqref{eq:model}, for $\theta(z)=\tanh(\alpha z), \alpha >0$.
 Then for any $\epsilon\in [0,1]$, we have with a probability at least $1-O(n^{-2})$, 
$$\text{for } m \geq \frac{C }{\epsilon^2} \frac{ n\log(n)}{\lambda(\alpha)}, \quad ||\hat{x}_m-x_0e^{i\phi}||^{2}\leq \epsilon,$$
where  $\phi \in [0,2\pi]$ is a global phase. 
$\lambda(\alpha)=\mathbb{E}(|E_1-E_2|sign(1-\tanh(\alpha E_1)\tanh(\alpha E_2)))$ is a decreasing function in $\alpha$.
\end{corollary}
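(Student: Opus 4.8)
The plan is to read off Corollary~\ref{cor:Distortion} as the specialization of Theorem~\ref{theo:main} to $\theta(z)=\tanh(\alpha z)$, so that the only real work is to (i) check that this distortion meets the standing assumption \eqref{eq:lambda}, (ii) put the resulting constant $\lambda=\lambda(\alpha)$ into the stated form, and (iii) establish its monotonicity in $\alpha$. Since $\|x_0\|=1$, the two intensities feeding the quantizer \eqref{eq:model} are i.i.d.\ $\mathrm{Exp}(1)$ variables $E_1,E_2$, exactly as in the paragraph preceding \eqref{eq:lambda}, so that $\lambda(\alpha)=\mathbb{E}\big(\text{sign}(\tanh(\alpha E_1)-\tanh(\alpha E_2))(E_1-E_2)\big)$. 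As $\tanh(\alpha\,\cdot)$ is strictly increasing on $\mathbb{R}_+$ for every $\alpha>0$, this integrand is nonnegative, whence $\lambda(\alpha)>0$ and assumption \eqref{eq:lambda} holds. Theorem~\ref{theo:main} then applies verbatim and delivers, with probability $1-O(n^{-2})$, the bound $\|\hat{x}_m-x_0e^{i\phi}\|^2\le\epsilon$ as soon as $m\ge \tfrac{C}{\epsilon^2}\tfrac{n\log n}{\lambda(\alpha)}$, which is exactly the statement of the corollary.

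For step (ii) I would invoke the addition identity $\tanh a-\tanh b=\tanh(a-b)\,(1-\tanh a\,\tanh b)$ with $a=\alpha E_1$, $b=\alpha E_2$. Because $\text{sign}(\tanh(\alpha(E_1-E_2)))=\text{sign}(E_1-E_2)$, the sign of the distorted gap factorizes as
$$\text{sign}\big(\tanh(\alpha E_1)-\tanh(\alpha E_2)\big)=\text{sign}(E_1-E_2)\,\text{sign}\big(1-\tanh(\alpha E_1)\tanh(\alpha E_2)\big),$$
and substituting into the expression for $\lambda(\alpha)$ turns $\text{sign}(E_1-E_2)(E_1-E_2)$ into $|E_1-E_2|$, giving precisely
$$\lambda(\alpha)=\mathbb{E}\Big(|E_1-E_2|\,\text{sign}\big(1-\tanh(\alpha E_1)\tanh(\alpha E_2)\big)\Big),$$
the double integral of $|s-t|\,\text{sign}(1-\tanh(\alpha s)\tanh(\alpha t))$ against $e^{-(s+t)}$ over $\mathbb{R}_+^2$, which I would leave in this form. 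Taking $\theta(z)=z$ reproduces the sanity check $\lambda=\mathbb{E}|E_1-E_2|=1$ already recorded in the text.

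Step (iii), the decreasing property, is where I expect the real difficulty, and here a genuine subtlety must be confronted. For $s,t>0$ and $\alpha>0$ one has $\tanh(\alpha s)\tanh(\alpha t)\in(0,1)$, so the factor $\text{sign}(1-\tanh(\alpha s)\tanh(\alpha t))$ equals $+1$ pointwise; taken at face value this collapses the formula to $\lambda(\alpha)=\mathbb{E}|E_1-E_2|=1$ for \emph{every} $\alpha$, i.e.\ a constant. The asserted strict decrease therefore cannot come from the sign factor alone, but must encode the degeneracy that saturation forces the gap $\tanh(\alpha E_1)-\tanh(\alpha E_2)$ toward $0$ as $\alpha$ grows --- the mechanism by which heavy clipping erodes the effective signal-to-noise ratio once any residual perturbation of the intensities is present, and exactly the content quantified in Lemma~\ref{lem:SNR}. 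The clean way I would make this rigorous is to work with the perturbed model underlying Lemma~\ref{lem:SNR}, write $\lambda(\alpha)$ there as an expectation over the order statistics $(E_{(1)},E_{(2)})$ of the exponential pair, differentiate under the integral sign (justified by dominated convergence, using that $|E_1-E_2|e^{-(E_1+E_2)}$ is integrable and that $\tanh$ is $1$-Lipschitz), and show $\lambda'(\alpha)\le 0$ via the symmetry $(s,t)\leftrightarrow(t,s)$ of the weight together with a monotone pointwise inequality on the $\alpha$-derivative of the integrand. Reconciling the clean deterministic computation ($\lambda\equiv1$, i.e.\ full robustness of the ranking) with the monotone statement of the corollary is, in my view, the main obstacle; everything else is routine book-keeping feeding into Theorem~\ref{theo:main}.
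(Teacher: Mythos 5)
Your steps (i) and (ii) are exactly the paper's route: the corollary is obtained by plugging the distortion model into Theorem \ref{theo:main}, and the expression for $\lambda(\alpha)$ is derived, verbatim as in the appendix proof of Lemma \ref{lem:SNR}, from the identity $\tanh a-\tanh b=\tanh(a-b)\bigl(1-\tanh a\tanh b\bigr)$, which factorizes the sign of the distorted gap as $\mathrm{sign}(E_1-E_2)\,\mathrm{sign}\bigl(1-\tanh(\alpha E_1)\tanh(\alpha E_2)\bigr)$. Since $\lambda(\alpha)>0$, assumption \eqref{eq:lambda} holds and the displayed sample-complexity bound follows; that part of your argument is complete and coincides with the paper's.

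Your step (iii) is where you have, in fact, caught a real defect in the statement rather than left a gap in your own proof. As you observe, $E_1,E_2>0$ almost surely and $\tanh(\alpha s)\tanh(\alpha t)\in(0,1)$ for $s,t>0$, so the sign factor is identically $+1$ and the formula collapses to $\lambda(\alpha)=\mathbb{E}|E_1-E_2|=1$ for every $\alpha>0$: a monotone deterministic distortion does not change the bits at all, which is precisely why the method is robust, but it also means $\lambda(\alpha)$ is constant and the asserted strict decrease cannot be derived from this formula. The paper never proves the monotonicity claim either --- the appendix derivation stops at $\lambda>0$, and the decreasing behaviour is only exhibited in a figure (plausibly an artefact of numerical saturation of $\tanh$, or of residual noise, neither of which is in the stated model). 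Your proposed repair via a ``perturbed model underlying Lemma \ref{lem:SNR}'' has no counterpart in the paper and is left unexecuted, so it does not close this gap; but the gap belongs to the monotonicity assertion in the corollary's last sentence, not to the recovery guarantee, which your argument (and the paper's) establishes with the correct, in fact optimal, constant $\lambda(\alpha)=1$.
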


The proof of the above results  follow from a simple combination of  Propositions \ref{lem:compeq},\ref{lem:empiprocess}, and \ref{lem:radav} given in  Section \ref{theory}.

\subsection{Weighted One Bit Phase Retrieval}\label{sec:WeOneBit}
The boundedness of one bit phase measurement $y$ is appealing as it ensures better sample complexity.
In this section, we  ask the question of whether similar results can be obtained combining 
the un-quantized measurements $(b^1_i= |\scalT{a^1_i}{x_0}|^2,b^2_i=|\scalT{a^2_i}{x_0}|^2),i=1\dots m $, and the quantized measurements $y_i=sign(b^1_i-b^2_i)$.

\noindent We have therefore to keep in mind that we need to formulate the problem in such way the random variables depending on $(b^1_i,b^2_i)$ are bounded. 
Then we consider  $$R^1_i =\frac{b^1_i}{b^1_i+b^2_i} \text{ and }R^2_i=\frac{b^2_i}{b^1_i+b^2_i}\quad i=1\dots m.$$
$(R^1_i,R^2_i)$ take values  in $[0,1]^2$(hence bounded), moreover they are Beta distributed $Beta(1,1)$ , that is  the uniform distribution $unif [0,1]$ (Lemma \ref{lem:unif}).
Then, we can consider  the following problem, 
\begin{equation}
\begin{aligned}
& \underset{}{\text{find}~~ x}\\
& \text{subject to}\\
&y_i\left(R^1_{i}\left|\scalT{a^{1}_i}{x}\right|^2 - R^2_{i}\left|\scalT{a^{2}_i}{x}\right|^2\right)\geq0,\quad i=1\dots m. \\
&||x||^2=1.\\
\end{aligned}
\label{eq:biPhaseweights}
\end{equation}
We relax this problem to the following maximum eigen value problem that we call weighted one bit Phase retrieval (Weighted1bitPhase).
\begin{equation}
\max_{x,||x||=1}x^* \frac{1}{m}\sum_{i=1}^m y_i\left(R^1_{i} a^1_ia^{1,*}_i-R^2_{i} a^2_ia^{2,*}_i\right)x
\label{eq:WeightOneBit}
\end{equation}

\noindent Thanks to the boundedness of $(R^1_i,R^2_i)$, one can carry the same analysis done in \ref{theory}, and get correctness and sample complexity for this formulation. 
Indeed $O(2n\log(n))$ measurements are also  sufficient for phase retrieval from that weighted scheme.
Nevertheless this scheme is more sensitive to noise and distortion than the original formulation.

\begin{theorem}[Weighted One bit Recovery]
For $x_0 \in \mathbb{C}^n, ||x_0||=1$, and $\epsilon>0$.Let $\hat{x}_m$ be the solution of problem Weighted1Bit \eqref{eq:WeightOneBit}. Then for any $\epsilon\in [0,1]$, we have with we have with a probability at least $1-O(n^{-2}) $, 
$$\text{for } m \geq \frac{C}{\epsilon^2} n\log(n), \quad ||\hat{x}_m-x_0e^{i\phi}||^{2}\leq \epsilon.$$
where $C$ is universal constant, and $\phi \in [0,2\pi]$ is a global phase. 
\label{theo:mainW1bit}
\end{theorem}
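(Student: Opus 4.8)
The plan is to follow the same three-step template used for Theorem~\ref{theo:main}: first compute the expectation of the matrix defining the Weighted1bitPhase objective, showing that $x_0$ is its leading eigenvector with a spectral gap bounded below by a universal constant; then control the deviation of the empirical matrix from its expectation in operator norm; and finally convert this bound into a bound on the recovered eigenvector through a variational argument. Write $\hat C_m=\frac{1}{m}\sum_{i=1}^m y_i(R^1_i a^1_ia^{1,*}_i-R^2_i a^2_ia^{2,*}_i)$ and $C=\mathbb{E}\,\hat C_m=\mathbb{E}\,y_1(R^1_1 a^1_1a^{1,*}_1-R^2_1 a^2_1a^{2,*}_1)$, and let $\hat x_m$ be the leading unit eigenvector of $\hat C_m$.

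The first and most delicate step is the computation of $C$. Exchanging the roles of $a^1_1$ and $a^2_1$ flips $y_1$, swaps $R^1_1\leftrightarrow R^2_1$, and leaves the joint law invariant, so the two terms contribute equally and $C=2\,\mathbb{E}\,y_1 R^1_1 a^1_1a^{1,*}_1$. I would then decompose $a^1_1=\scalT{a^1_1}{x_0}\,x_0+a_\perp$, where $a_\perp=(I-x_0x_0^*)a^1_1$ is, conditionally on $\scalT{a^1_1}{x_0}$, an isotropic complex Gaussian on $x_0^\perp$ that is independent of the pair $(b^1_1,b^2_1)$. The key observation, and what distinguishes this from the unweighted case of Theorem~\ref{theo:main}, is that the weight $R^1_1=b^1_1/(b^1_1+b^2_1)$ depends on $a^1_1$ only through $b^1_1=|\scalT{a^1_1}{x_0}|^2$, hence is independent of $a_\perp$. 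Consequently the two cross terms in $a^1_1a^{1,*}_1$ vanish in expectation because $\mathbb{E}\,a_\perp=0$, and the purely orthogonal term reduces to $\big(\mathbb{E}\,y_1R^1_1\big)(I-x_0x_0^*)$ by rotational invariance on $x_0^\perp$. Using $b^1_1,b^2_1\sim\mathrm{Exp}(1)$ and Lemma~\ref{lem:unif} (so that $R^1_1\sim\mathrm{unif}[0,1]$) the two remaining scalar integrals are elementary: the coefficient along $x_0$ is $\mathbb{E}\,y_1R^1_1 b^1_1=\tfrac12\mathbb{E}|b^1_1-b^2_1|=\tfrac12$, while the orthogonal coefficient is $\mathbb{E}\,y_1R^1_1=\tfrac12\mathbb{E}\big(|b^1_1-b^2_1|/(b^1_1+b^2_1)\big)=\tfrac14$. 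This yields
\begin{equation}
C=x_0x_0^*+\tfrac12\,(I-x_0x_0^*)=\tfrac12 I+\tfrac12\,x_0x_0^*,
\end{equation}
a matrix whose leading eigenvector is exactly $x_0$ with eigenvalue $1$ and whose gap to the remaining eigenvalue $\tfrac12$ is the universal constant $\tfrac12$, independent of $n$. It is precisely this constant gap, in contrast to the $\lambda$-dependent gap of Theorem~\ref{theo:main}, that removes the signal-to-noise factor from the sample complexity.

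For the concentration step I would invoke the same machinery as in Section~\ref{theory} (Propositions~\ref{lem:empiprocess} and~\ref{lem:radav}). The summands $y_i(R^1_i a^1_ia^{1,*}_i-R^2_i a^2_ia^{2,*}_i)$ are differences of weighted rank-one Gaussian outer products; since $y_i\in\{-1,1\}$ and $R^1_i,R^2_i\in[0,1]$ are bounded, their operator-norm tails are of the same sub-exponential order as in the unweighted problem, and the truncation plus matrix-Bernstein estimate gives, with probability at least $1-O(n^{-2})$,
\begin{equation}
\|\hat C_m-C\|\le C_0\sqrt{\tfrac{n\log n}{m}}=:\delta .
\end{equation}
The boundedness of the weights is exactly what keeps this bound at the $\log n$ scale rather than the $\log^3 n$ scale that afflicts the sub-exponential intensities $b_i$ used in SubExpPhase.

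Finally I would turn the operator-norm bound into an eigenvector bound by the standard variational argument rather than a tight Davis--Kahan estimate, since a linear dependence on $\delta$ is what produces the $1/\epsilon^2$ rate. Optimality of $\hat x_m$ for $\hat C_m$ and of $x_0$ for $C$, together with $\|\hat x_m\|=\|x_0\|=1$, gives $x_0^*Cx_0-\hat x_m^*C\hat x_m\le 2\|\hat C_m-C\|\le 2\delta$; substituting $C=\tfrac12 I+\tfrac12 x_0x_0^*$ turns the left-hand side into $\tfrac12\big(1-|\scalT{\hat x_m}{x_0}|^2\big)$, so $1-|\scalT{\hat x_m}{x_0}|^2\le 4\delta$ and hence, after choosing the global phase $\phi$ that aligns $\hat x_m$ with $x_0$, $||\hat x_m-e^{i\phi}x_0||^2\le 8\delta$. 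Setting $\delta=\epsilon/8$ and recalling $\delta=C_0\sqrt{n\log n/m}$ gives the claimed estimate for $m\ge \tfrac{C}{\epsilon^2}n\log n$. The main obstacle is the expectation computation of the second paragraph: one must verify that the coupling of the weights $R^1_i,R^2_i$ to \emph{both} measurements does not destroy the rotational symmetry on $x_0^\perp$, and in particular that the along-$x_0$ coefficient $\tfrac12$ strictly dominates the orthogonal coefficient $\tfrac14$, so that a constant spectral gap survives; everything else reuses the one-bit analysis essentially verbatim.
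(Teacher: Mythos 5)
Your proposal is correct and follows essentially the same route as the paper: the expectation computation (yielding $C=\tfrac12 I+\tfrac12 x_0x_0^*$, i.e. $x^*Cx=\tfrac12|\scalT{x_0}{x}|^2+\tfrac12$, exactly Proposition~\ref{lem:weiCorrect}), the excess-risk/variational bound of Lemma~\ref{lem:ineqweight} and Proposition~\ref{lem:empiprocess}, and the truncation-plus-matrix-Bernstein concentration step adapted verbatim from the one-bit case thanks to the boundedness of $y_i,R^1_i,R^2_i$. Your phrasing of the expectation step at the matrix level via rotational invariance on $x_0^\perp$ is only cosmetically different from the paper's expansion of the quadratic form using $\scalT{a^1}{x}=\scalT{x_0}{x}g+\sqrt{1-|\scalT{x_0}{x}|^2}\,h$, and your scalar coefficients ($1$ along $x_0$, $\tfrac12$ orthogonal) agree with the paper's.
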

\noindent The proof of this Theorem is given in the  Section \ref{theory}. 
\begin{remark}
For simplicity of the exposure we limit the analysis to $\theta(z)=z$.
\end{remark}

\section{Greedy Refinements via Alternating Minimization }\label{sec:Greed}
We have now defined $2$ variants of one bit phase retrieval : 1BitPhase and Weighted1BitPhase.
Both formulation allows phase recovery via a spectral maximum Eigen value problem.
In this section we start by analyzing the alternating minimization approach and the virtues of the initialization step proposed in \cite{AM} that we call SubExpPhase.
We then show that $1$BitPhase and Weighted$1$BitPhase offer   a new way to initialize the alternating minimization problem.
Note that we have now $3$ randomized strategies (SubExpPhase, $1$BitPhase and Weighted$1$BitPhase ) to initialize the alternating minimization problem of phase retrieval.
Each one succeeds with high  probability, a multiple initialization strategy  allows to choose the corresponding solution with lowest MSE.
 
\subsection{ Phase Recovery via Alternating Minimization}
The alternating minimization algorithm proposed in \cite{AM} has 2 main ingredients:
\begin{enumerate}
\item For an accuracy $\epsilon$, given $O( \frac{1}{\epsilon^2}n\log^3(n))$ measurements, the authors propose an initialization $x^0$ that is an $\epsilon$ estimate of $x_0$.
\item A resampling procedure that ensures a reduction in the error in each step of the alternating minimization provided with the above initialization.
\end{enumerate}
The resulting algorithm has a sample complexity of $O(n(\log^3(n)+\log(\frac{1}{\epsilon})\log(\log(\frac{1}{\epsilon}))))$, and a computational complexity of $O(n^2(\log^3(n)+\log^2(\frac{1}{\epsilon})\log(\log(\frac{1}{\epsilon}))))$.
Thus in order to get a better sample complexity of the resulting algorithm and hence  a better computational complexity, the challenge is to propose a better initialization.
We will show that one bit phase retrieval offer a good strategy for initializing the alternating minimization.
\subsubsection{ Sub-Exponential Initialization }
 We  comment in this section  on the initialization and the alternating minimization procedure of \cite{AM}.\\
Let $b_{i}=|\scalT{a_i}{x_0}|^2,i=1\dots m$, where $a_{i}\sim \mathcal{C}\mathcal{N}(0,I_{n})$.
The initialization proposed amounts to taking  the maximum eigen vector $\hat{x}_m$ of 
\begin{equation}
\hat{C}_m=\frac{1}{m}\sum_{i=1}^m b_i a_ia_i^*,
\label{eq:subexp}
\end{equation}
In \cite{AM} authors show correctness and concentration of this approach. We restate here their main result, and give for completeness a sketch of the proof in \ref{ap:Sujay}:
\begin{theorem}[Correctness and Concentration]\label{theo:Init}
Let $x,x_0 \in \mathbb{C}^n$. Assume that $x$ and $x_0$ are unitary, and $a \sim \mathcal{C}\mathcal{N}(0,I_{n})$.
Let $$\mathcal{E}^{x_0}(x)=\mathbb{E}|\scalT{a}{x_0}|^2|\scalT{a}{x}|^2, \quad \mathcal{E}^{x_0}(x)=\frac{1}{m}\sum_{i=1}^ m b_i |\scalT{a_i}{x}|^2, \quad \hat{x}_{m}=\argmax_{x,||x||=1}\mathcal{E}^{x_0}(x).$$
We have the following claims:
\begin{enumerate}
\item $\mathcal{E}^{x_0}(x)=x^*Cx$, where $C=\mathbb{E}(b aa^*)$, where $b=|\scalT{a}{x_0}|^2$.
\item   For all $x$, such that $||x||=1$, $\mathcal{E}^{x_0}(x)=|\scalT{x_0}{x}|^2+1$.
\item  For all $x$, such that $||x||=1$, $\frac{1}{2}||xx^*-x_0x_0^*||_F^2 = \left(\mathcal{E}^{x_0}(x_0)-\mathcal{E}^{x_0}(x)\right)=(1-|\scalT{x_0}{x}|^2)$.
\item $\frac{1}{2}||\hat{x}_m\hat{x}_m^*-x_0x_0^*||^2_{F} \leq 2\left|\left|\hat{C}_m-C\right|\right|.$
\item Let $\epsilon \in [0,1]$ then $\text{For}\quad  m  \geq c \frac{  n \log^3(n)}{\epsilon^2},\quad  ||\hat{C}_m-C||\leq 2 \epsilon \text{ with probability at least } 1-O(n^{-2}).$
\item Let $ \epsilon \in [0,1]$ then $ \text{For}\quad  m \geq c \frac{  n \log^3(n)}{ \epsilon^2}, ||\hat{x}_m-x_0e^{i\phi}||^{2}\leq \epsilon  \text{ with probability at least } 1-O(n^{-2}), $ where $c$ is a universal constant.
\end{enumerate}
\end{theorem}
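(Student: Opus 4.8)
The plan is to dispatch the six claims in order, noting that (1)--(4) are exact identities plus a one-line perturbation argument, while the genuine analytic work sits in the concentration estimate (5). For claim (1) I would just use linearity of expectation together with $|\scalT{a}{x}|^2 = x^* a a^* x$, so that $\mathcal{E}^{x_0}(x) = \mathbb{E}[b\, x^* a a^* x] = x^*\mathbb{E}[b\, a a^*]x = x^* C x$. For claim (2) I would compute $C$ explicitly: writing $b = a^* M a$ with $M = x_0 x_0^*$, the only nontrivial input is the fourth-moment identity for a standard complex Gaussian, $\mathbb{E}[(a^* M a)\, a a^*] = Tr(M)\, I + M$, which follows from Wick/Isserlis pairing (each $a$ with an $\bar a$). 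Since $Tr(x_0 x_0^*) = \|x_0\|^2 = 1$, this gives $C = I + x_0 x_0^*$ and hence $\mathcal{E}^{x_0}(x) = x^* x + |\scalT{x_0}{x}|^2 = 1 + |\scalT{x_0}{x}|^2$. Claim (3) is then pure linear algebra: expanding $\|xx^* - x_0 x_0^*\|_F^2 = \|x\|^4 + \|x_0\|^4 - 2|\scalT{x}{x_0}|^2$ and using $\|x\| = \|x_0\| = 1$ gives $\tfrac12\|xx^* - x_0 x_0^*\|_F^2 = 1 - |\scalT{x}{x_0}|^2$, which equals $\mathcal{E}^{x_0}(x_0) - \mathcal{E}^{x_0}(x)$ by claim (2).

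Claim (4) is the correctness (perturbation) step. I would set $\Delta = \hat{C}_m - C$ and use that $\hat{x}_m$ maximizes $x^*\hat{C}_m x$ over the unit sphere, so $\hat{x}_m^*\hat{C}_m\hat{x}_m \ge x_0^*\hat{C}_m x_0$. Substituting $\hat{C}_m = C + \Delta$ and using $x_0^* C x_0 = 2$ and $\hat{x}_m^* C \hat{x}_m = 1 + |\scalT{x_0}{\hat{x}_m}|^2$ (both read off $C = I + x_0 x_0^*$, for which $x_0$ is the top eigenvector with spectral gap $1$), the $C$-terms rearrange to $1 - |\scalT{x_0}{\hat{x}_m}|^2 \le \hat{x}_m^*\Delta\hat{x}_m - x_0^*\Delta x_0 \le 2\|\Delta\|$, since each quadratic form of a unit vector is bounded by $\|\Delta\|$. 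Combining with claim (3) yields $\tfrac12\|\hat{x}_m\hat{x}_m^* - x_0 x_0^*\|_F^2 = 1 - |\scalT{x_0}{\hat{x}_m}|^2 \le 2\|\hat{C}_m - C\|$.

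The main obstacle is claim (5), the matrix concentration $\|\hat{C}_m - C\| \le 2\epsilon$. Here $\hat{C}_m$ is an average of i.i.d. Hermitian matrices $b_i a_i a_i^*$, but the summands are heavy-tailed: $b_i = |\scalT{a_i}{x_0}|^2$ is sub-exponential and $\|a_i\|^2$ concentrates near $n$, so $\|b_i a_i a_i^*\|$ can be as large as $\sim n\log m$, and a direct matrix Bernstein bound does not apply. The standard remedy, following \cite{AM}, is truncation: I would split each summand at a threshold of order $n\log^2 n$ (using that $\max_i b_i \lesssim \log m$ and $\max_i\|a_i\|^2 \lesssim n$ hold with high probability), bound the operator norm of the truncated average by matrix Bernstein, and control the rare tail separately by a union bound. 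The interplay between the $\log$ factor from the sub-exponential tail of $b_i$ and the deviation term in Bernstein's inequality is precisely what forces the $n\log^3 n$ sample size; I would defer the detailed bookkeeping to \cite{AM} (sketched in Appendix~\ref{ap:Sujay}).

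Finally, claim (6) follows by combining (4) and (5) with the elementary inequality, valid for unit vectors, $\min_{\phi}\|\hat{x}_m - e^{i\phi}x_0\|^2 = 2 - 2|\scalT{\hat{x}_m}{x_0}| \le 2\bigl(1 - |\scalT{\hat{x}_m}{x_0}|^2\bigr) = \|\hat{x}_m\hat{x}_m^* - x_0 x_0^*\|_F^2$. By claims (3)--(4) the right-hand side is at most $4\|\hat{C}_m - C\|$, so applying claim (5) at accuracy $\epsilon/8$ (absorbing the constant into $c$) gives $\|\hat{x}_m - e^{i\phi}x_0\|^2 \le \epsilon$ on the same high-probability event, for $m \ge c\, n\log^3 n/\epsilon^2$.
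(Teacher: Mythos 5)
Your proposal is correct and follows essentially the same route as the paper: the identity $C=I+x_0x_0^*$ (you via the Wick fourth-moment formula, the paper via decomposing $a$ along $x_0$ and its orthogonal complement, which is the same computation), the linear-algebra identities for claims (3)--(4), and the concentration step handled by truncating at the high-probability bound $b_i\|a_i\|^2\lesssim n\log m$ and invoking the heavy-tailed covariance estimation / matrix Bernstein machinery of \cite{VershyninReview} and \cite{AM}. No substantive differences to report.
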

\subsection{Discussion: One Bit Phase Retrieval as an Initialization to the Alternating Minimization}
\noindent Note that $b_i,i=1\dots m $ are exponential random variable thus we call that initialization Sub-exponential Initialization.\\
The concentration of $\hat{C}_m$ around $C$, depends upon  the boundedness of $b_i$  and $a_i$ by the non commutative matrix Bernstein inequality (Theorem \ref{eq:Bernstein}). We have with high probability  that $$b_i||a_i||^2\leq 4\log(m)n,$$
thus we have a sample complexity of $O(n\log^3(n))$ due to the extra contribution of $b_i$ with a $\log(n)$ term.
Recall that  the solution of one bit phase retrieval and weighted one bit phase retrieval  is the maximum eigen vector of $$\hat{C}_m=\frac{1}{m}\sum_{i=1}^m y_i(a^1_i a^{1,*}_i-a^2_ia^{2,*}_i) \quad \text{ and } \hat{C}_m=\frac{1}{m}\sum_{i=1}^m y_i(R^1_ia^1_i a^{1,*}_i-R^2_ia^2_ia^{2,*}_i),$$
respectively.
The measurements $y_i,R^1_i,R^2_i$ are bounded by $1$ and  do not affect the bound . Thus,   the sample 
complexity reduces to  only $O(n\log(n))$ pairs of measurements for phase retrieval via one-bit measurements. Thus we can initialize the alternating minimization with the solution of  One  Bit Phase and  Weighted One Bit Phase and get a better sample complexity especially in high dimensions.\\

\noindent \textbf{Geometric intuition.} We see in Figure 2 that the levels sets of the objective of SubExpPhase initialization consists of a\textbf{ paraboloid} , that is symmetric, hence it intersects the unity sphere in a symmetric zone thus phase retrieval  is possible up to a global phase. Compared to one bit initialization, the levels sets are \textbf{hyperbolic paraboloids}. Hyperbolic paraboloid are more "pointy" than paraboloid thus the surface of intersection with the sphere is smaller. The above discussion gives an intuition of the reasons behind  the better sample complexity of one-bit phase retrieval.\\
\begin{figure}[ht]
\begin{center}
\noindent\includegraphics[scale=0.28]{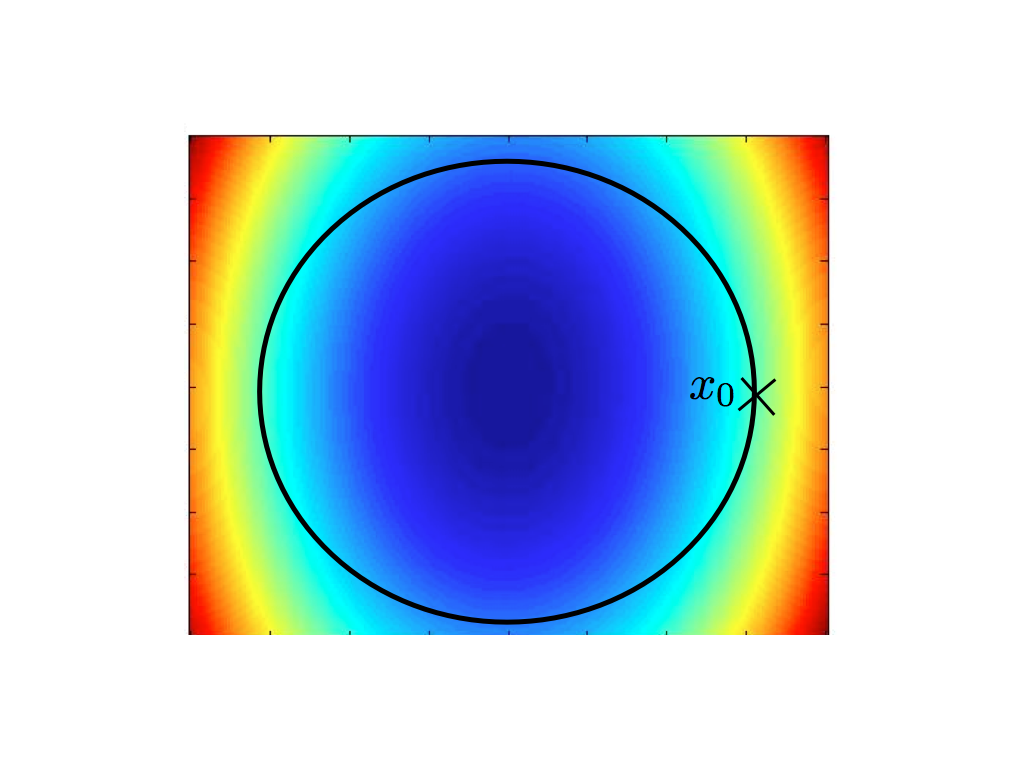}
\end{center}
\caption { The level sets of  the objective of the initialization: $x^*\frac{1}{m}\sum_{i=1}^m b_ia_ia_i^*x$, for $x_0=(1,0)$. (in red maximal values, in blue minimal values).}
\end{figure}

\noindent We conclude this section with a surprising fact:  \\
\textit{  Quantization and Greed are good. Quantization plays the role of a preconditioning that enhances the sample complexity of the initialization step of the greedy alternating minimization in phase retrieval.
}

\subsubsection{Resampling Procedure and Error Reduction }
Now given one of the three initialization strategies, namely the Sub-Exponential initialization of \cite{AM}, One Bit Phase Retrieval and Weighted One Bit Phase Retrieval.
The following algorithm proposed in \cite{AM}, proceeds in alternating the estimating of the phase and the signal. For technical reasons - mainly ensuring independence - 
the algorithm proceeds in a stage-wise alternating  minimization. At each stage we use a new re-sampled sensing matrix and the corresponding measurements.
\begin{algorithm}[H]
 \begin{algorithmic}[1]
 \Procedure{AltMinPhaseResampling}{$A,b,\epsilon$}
 \State $t_0 \gets c\log(\frac{1}{\epsilon})n$
 \State Partition $b$ and the corresponding  rows of $A$ into $t_0+1$ disjoint sets:
 $(b_0,A_0),\dots (b_{t_0},A_{t_0})$.
 \State \textbf {Initialize} $x$ using \textsc{SubExponentialPhase} or \textsc{1bitPhase}, or \textsc{Weighted1bit Phase.} 
 \For{$t=0 \dots t_0-1$} 
  \State $u_{t+1} \gets Ph(A_{t+1}x_{t})$
 \State  $ x_{t} \gets \arg\min ||A_{t+1}x-B_{t+1}u_{t+1} ||^2_{2}$
 \EndFor
 \State \textbf{return} $x_{t_0}$ 
 \EndProcedure
 \end{algorithmic}
 \caption{AltMinPhase with Resampling }
 \label{AltMinPhaseRes}
\end{algorithm}
\noindent Combining results from \cite{AM} with Theorem~\ref{theo:main} and Theorem~\ref{theo:mainW1bit} we have:\\
\begin{theorem}
For every $\epsilon>0$
Algorithm \ref{AltMinPhaseRes} outputs $x_{t_0}$ such that $||x_{t_0}-x_0e^{i\phi}||_{2}\leq \epsilon$ with high probability. 
The sample complexity depends upon the initialization step. 
\begin{enumerate}
\item Sub-Exponential Initialization: the sample complexity is  $O\left(n\log^3(n)+\log(\frac{1}{\epsilon})\log(\log(\frac{1}{\epsilon})) \right)$.
\item Weighted/One Bit Phase Retrieval: the sample complexity is  $O\left(2n(\log(n)+\log(\frac{1}{\epsilon})\log(\log(\frac{1}{\epsilon})) \right)$.
\end{enumerate}
\end{theorem}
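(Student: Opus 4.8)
The plan is to verify that Algorithm~\ref{AltMinPhaseRes} meets the two hypotheses required by the convergence analysis of \cite{AM}, and then to read off the sample complexity separately for each initialization. The analysis in \cite{AM} decomposes the guarantee into (i) an initialization that produces $x^0$ with $\|x^0 - e^{i\phi}x_0\| \le c_0$ for a fixed absolute constant $c_0$ (a constant-accuracy, not $\epsilon$-accuracy, requirement is all the descent step needs to enter its basin of geometric contraction), and (ii) a stage-wise resampling argument showing that once inside this basin, each alternating-minimization stage contracts the error by a constant factor $\rho<1$, so that $t_0 = O(\log\frac{1}{\epsilon})$ stages suffice to reach accuracy $\epsilon$, with each stage consuming a fresh batch of $O(n)$ measurements. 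I would keep ingredient (ii) entirely as a black box from \cite{AM}, since nothing about it changes when the initialization is swapped; the only moving part is ingredient (i).

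First I would handle ingredient (i) for each of the three initializers. For the Sub-Exponential initialization this is exactly Theorem~\ref{theo:Init}, claim~6: taking $\epsilon$ there to be the fixed basin radius $c_0$, a constant, gives $x^0$ within the basin using $O(n\log^3 n)$ measurements, where the extra $\log^2 n$ beyond the information-theoretic $n\log n$ comes from the boundedness term $b_i\|a_i\|^2 \le 4\log(m)\,n$ entering the matrix Bernstein bound (Theorem~\ref{eq:Bernstein}), as discussed in the text. For the One-Bit and Weighted One-Bit initializations I would instead invoke Theorem~\ref{theo:main} and Theorem~\ref{theo:mainW1bit} respectively, again with target accuracy set to the constant $c_0$: since the quantized measurements $y_i$, $R^1_i$, $R^2_i$ are all bounded by $1$, the boundedness term in the Bernstein estimate loses the parasitic $\log n$ factor, so a constant-accuracy initialization costs only $O(n\log n)$ pairs of measurements.

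Next I would assemble the two budgets additively, since the resampling scheme partitions the data into disjoint sets and the initialization batch is separate from the $t_0$ descent batches. The initialization contributes its own term and the $t_0 = O(\log\frac1\epsilon)$ descent stages contribute $O(n)$ each, but the careful accounting in \cite{AM} sharpens the descent contribution to $O\!\big(\log\frac1\epsilon \,\log\log\frac1\epsilon\big)$ rather than a naive $O(n\log\frac1\epsilon)$. Adding the two gives, for Sub-Exponential, $O\!\big(n\log^3 n + \log\frac1\epsilon \log\log\frac1\epsilon\big)$, and for the (Weighted) One-Bit initializations, $O\!\big(2n\log n + \log\frac1\epsilon \log\log\frac1\epsilon\big)$, exactly the two claimed bounds; the factor $2$ in the one-bit cases simply records that each quantized bit is produced from a pair of sensing vectors. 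The high-probability conclusion follows by a union bound over the initialization event (failure probability $O(n^{-2})$ from Theorems~\ref{theo:main},\ref{theo:mainW1bit},\ref{theo:Init}) and the per-stage contraction events from \cite{AM}.

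The step I expect to be the main obstacle, and the one deserving the most care, is the interface between ingredient (i) and ingredient (ii): I must confirm that the basin-of-attraction radius $c_0$ guaranteed by \cite{AM} is an \emph{absolute} constant independent of $n$, so that plugging a constant target accuracy into Theorem~\ref{theo:main} is legitimate and does not secretly reintroduce an $n$- or $\epsilon$-dependence into the initialization cost. A related subtlety is that Theorems~\ref{theo:main} and \ref{theo:mainW1bit} control $\|\hat x_m - e^{i\phi}x_0\|^2$ for the \emph{global}-phase-aligned error, whereas the descent recursion in \cite{AM} is phrased in the same metric; I would check that the two notions of error (and their global-phase quantifiers) coincide so that the output of 1bitPhase can be fed directly into the first resampled least-squares step without any loss. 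Everything else is bookkeeping.
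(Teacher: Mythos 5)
Your proposal is correct and follows essentially the same route as the paper: the paper's own justification is a one-line remark that the theorem follows by combining the resampling/contraction analysis of \cite{AM} (which is agnostic to the initialization) with the initialization guarantees of Theorems~\ref{theo:main}, \ref{theo:mainW1bit} and \ref{theo:Init}, and then adding the two sample budgets. Your write-up simply makes explicit the constant-radius basin requirement, the additive accounting over disjoint resampled batches, and the union bound that the paper leaves implicit.
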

This theorem is a consequence of the work of \cite{AM} that does not depend on the initialization step. 
The greedy refinements of one bit solution, ensures convergence to the optimum with high probability and lower sample complexity than the one obtained in \cite{AM}. 
\begin{remark}[Multiple Initialization]
Fix the total  number of measurements.
Let $x_{s}$ the solution of  Algorithm  \ref{AltMinPhaseRes} initialized with the Sub-exponential  initialization.
$x_{1b}$  the solution of  Algorithm  \ref{AltMinPhaseRes} initialized with the One Bit Phase initialization.
$x_{w1b}$ the solution of  Algorithm  \ref{AltMinPhaseRes} initialized with the Weighted  One Bit Phase initialization.
Define
$$x_{*}=\argmin_{x\in \{x_{s},x_{1b},x_{w1b}\}} MSE (x, u=Ph(Ax))= ||A x - B u||^2_{2},$$
The multiple initialization  strategy produces  $x_*$ that has the lower MSE for a given accuracy.
\end{remark}


\section{Theoretical Analysis}{\label{theory}
In this section we give the main steps of the proof of Theorems \ref{theo:main} and \ref{theo:mainW1bit} for one bit phase Retrieval and Weigthed One Bit phase Retrieval respectively.
\subsection{One Bit Phase Retrieval: Correctness and Concentration }\label{theory1bit}
In this section we state Propositions \ref{lem:compeq},\ref{lem:empiprocess}, and \ref{lem:radav} which form the core of our analysis for one bit Phase Retrieval. 
The proofs are given in Appendix \ref{app:A}.
We need the following preliminary definition.
\begin{definition}[Risk and Empirical risk]
Let $x_0 \in \mathbb{C}^n, ||x_0||=1$.
For $x \in \mathbb{C}^{n}$ such that $||x||=1$, and $A=\{a^1,a^2\}$ i.i.d. complex Gaussians, let 
\begin{equation*}
\mathcal{E}^{x_0}(x)=x^*C x, \end{equation*}
 where $C= \mathbb{E}\left(  y(a^1a^{1.*}-a^2a^{2,*})\right)$ and  $y=sign\left(\theta(|\scalT{a^1}{x_0}|^2)-\theta(|\scalT{a^2}{x_0}|^2\right)$.
 Moreover, let
\begin{equation*}
\hat{\mathcal{E}}^{x_0}(x)=x^*\hat{C}_{m} x,
\end{equation*}
where $\hat{C}_m =\frac{1}{m}\sum_{i=1}^{m}  y_i(a^{1}_ia^{1,*}_i-a^{2}_ia^{2,*}_i), y_i =Q^{\theta}_{A_i}(x_0)$ and $A_i=\{(a^1_i,a_i^2)\},i=1\dots m$ are i.i.d. complex Gaussians.
 \end{definition} 
 

We first state Proposition~\ref{lem:compeq}, that provides a theoretical justification to the relaxation introduced in the formulation 1bitPhase in \eqref{eq:MaxPhase}.
 \begin{proposition}[Correctness in Expectation]\label{lem:compeq}
 The following statements hold:
 \begin{enumerate}
\item For all $x \in \mathbb{C}^n, ||x||=1$, we have the following equality,
\begin{equation}
\mathcal{E}^{x_{0}}(x)= x^*Cx=\lambda \left| \scalT{x_0}{x}\right|^2.
\end{equation}
\item Let $y=Q^{\theta}_{A}(x_0)$, $C$ is a rank one matrix,
\begin{equation} 
C=\mathbb{E}(y(a^{1}a^{1,*}-a^{2}a^{2,*}))=\lambda x_0x_0^*.
\end{equation}
\item $x_0$ is an eigen vector of $C$ with  eigen value $\lambda$,   
\begin{equation}
Cx_0=\lambda x_0.
\end{equation}
\item The maximum eigenvector of $C$ is of the form $x_0e^{i\phi}$, where $\phi \in [0,2\pi]$.
The maximum eigen value is given by $\lambda$.
\end{enumerate}
\end{proposition}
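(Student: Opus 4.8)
The plan is to establish item (2) first --- the explicit identity $C = \lambda x_0 x_0^*$ --- since items (1), (3) and (4) are immediate algebraic consequences of it. Everything reduces to computing the single expectation $C = \mathbb{E}\left(y(a^1 a^{1,*} - a^2 a^{2,*})\right)$ in closed form. The first observation I would exploit is the antisymmetry of the integrand under the exchange $a^1 \leftrightarrow a^2$: since $(a^1,a^2)$ are i.i.d., swapping them leaves the joint law invariant, while $y = sign(\theta(|\scalT{a^1}{x_0}|^2) - \theta(|\scalT{a^2}{x_0}|^2))$ changes sign and the matrix $a^1 a^{1,*} - a^2 a^{2,*}$ also changes sign. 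This symmetry yields both $\mathbb{E}(y) = 0$ and the reduction $C = 2\,\mathbb{E}(y\, a^1 a^{1,*})$, so it suffices to compute $M := \mathbb{E}(y\, a^1 a^{1,*})$.

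To compute $M$ I would use the unitary invariance of the complex Gaussian to decompose $a^1$ into its component along $x_0$ and its component in the orthogonal complement: write $a^1 = \alpha x_0 + v$ with $\alpha = \scalT{x_0}{a^1}$ and $v \perp x_0$. A standard property of the complex Gaussian is that $\alpha$ and $v$ are independent, $v$ has mean zero, and $\mathbb{E}(v v^*) = P^{\perp}$, the projection onto $x_0^{\perp}$. Expanding $a^1 a^{1,*} = |\alpha|^2 x_0 x_0^* + \alpha x_0 v^* + \bar\alpha v x_0^* + v v^*$ and taking expectations term by term is the crux. The key point is that $y$ depends on $a^1$ only through $|\alpha|^2 = |\scalT{a^1}{x_0}|^2$ and on $a^2$ only through $|\scalT{a^2}{x_0}|^2$; hence $v$ is independent of the pair $(y,\alpha)$ and of $a^2$. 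The two cross terms therefore vanish because $\mathbb{E}(v) = 0$, and the term $\mathbb{E}(y\, v v^*)$ factors as $\mathbb{E}(y)\,\mathbb{E}(v v^*) = 0$ using $\mathbb{E}(y) = 0$. This leaves $M = \mathbb{E}(y |\alpha|^2)\, x_0 x_0^*$, so $C = 2\,\mathbb{E}(y |\alpha|^2)\, x_0 x_0^*$, which is manifestly rank one and proportional to $x_0 x_0^*$.

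The final step is to identify the scalar $2\,\mathbb{E}(y |\alpha|^2)$ with $\lambda$. Here I would invoke the fact noted in the text that $|\scalT{a^1}{x_0}|^2 \sim Exp(1)$ and $|\scalT{a^2}{x_0}|^2 \sim Exp(1)$ are independent (since $\|x_0\| = 1$), so writing $E_1 = |\alpha|^2$ and $E_2 = |\scalT{a^2}{x_0}|^2$ gives $\mathbb{E}(y |\alpha|^2) = \mathbb{E}(sign(\theta(E_1) - \theta(E_2))\,E_1)$. Applying the same $E_1 \leftrightarrow E_2$ swap to the definition \eqref{eq:lambda} of $\lambda$ shows $\lambda = 2\,\mathbb{E}(sign(\theta(E_1) - \theta(E_2))\,E_1)$, whence $C = \lambda x_0 x_0^*$, proving item (2). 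Item (1) then follows from $x^* C x = \lambda |\scalT{x_0}{x}|^2$, item (3) from $C x_0 = \lambda x_0 x_0^* x_0 = \lambda x_0$ using $\|x_0\| = 1$, and item (4) because a rank-one positive semidefinite matrix (recall $\lambda > 0$ by assumption \eqref{eq:lambda}) attains its maximal eigenvalue $\lambda$ exactly on the line spanned by $x_0$, i.e. at $x_0 e^{i\phi}$. The main obstacle is the careful bookkeeping in the second step: justifying the independence of the orthogonal component $v$ from $(y,\alpha,a^2)$ and hence the vanishing of the three unwanted terms --- this is where the specific structure of the quantizer, depending only on the magnitudes of the projections onto $x_0$, is essential.
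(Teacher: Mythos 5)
Your proof is correct and rests on the same core computation as the paper's: decompose the Gaussian $a^1$ into its component along $x_0$ plus an independent orthogonal part, note that $y$ depends on $a^1,a^2$ only through $|\scalT{a^1}{x_0}|^2$ and $|\scalT{a^2}{x_0}|^2$ so that the cross terms and the orthogonal-block term vanish in expectation, and reduce everything to a one-dimensional expectation over the independent $Exp(1)$ variables $E_1,E_2$. The only difference is organizational: you establish the matrix identity in item (2) directly, using the $a^1\leftrightarrow a^2$ and $E_1\leftrightarrow E_2$ swap symmetries to collapse to the single term $2\,\mathbb{E}(y|\alpha|^2)\,x_0x_0^*$, whereas the paper first computes the quadratic form of item (1) for every unit $x$ (keeping both terms of $|\scalT{a^1}{x}|^2-|\scalT{a^2}{x}|^2$) and then reads off (2) from it; both routes are valid and yield $C=\lambda x_0x_0^*$.
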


Proposition \ref{lem:compeq} suggests that $x_0$ can be recovered up to global phase shift as the maximum eigen vector of the matrix $C$.
The Quality of the recovery of one bit phase recovery, and its sample complexity is therefore driven by how well the empirical Hermitian matrix $\hat{C}_{m}$, concentrates around its mean $C$.
A key quantity in the analysis is $\lambda$, which can be seen as a form of  signal to noise ratio. Recall that: 
\begin{equation}
\lambda=\mathbb{E}(sign\left(\theta(E_1)-\theta(E_2)\right)(E_1-E_2)), \quad E_1, E_2 \sim Exp(1) (\text{ iid }).
\end{equation} 
The following Lemma shows how the value of $\lambda$ depends on the observation model $\theta$, and how $\lambda$ relates to noise and distortion levels.\\

\begin{lemma}
The values of $\lambda$ for different observation models $\theta$ are given in the following:
\begin{enumerate}
\item Noiseless setup: $\theta(z)=z $,$ \quad \lambda=1$. 
\item Exponential Noise: $\theta(z)=z+\nu, \nu$ is an exponential random variable with variance $\sigma,\quad \lambda=\frac{1+2\sqrt{\sigma}}{(1+\sqrt{\sigma})^2} $. 
\item Poisson Noise: $\theta(z)=\mathcal{P}_{\eta}(z),\quad \mathcal{P}_{\eta}(z)= p \quad p \sim Poisson \left(\frac{z}{\eta}\right) \quad \lambda=\mathbb{E}\left(sign(S(E_1,E_2))(E_1-E_2)\right)$\\ $S(E_1,E_2))\sim Skellam(\frac{E_1}{\eta},\frac{E_2}{\eta}), \quad E_1,E_2 \sim Exp(1). $ $\lambda$ is a decreasing function in $\eta$.
\item Distortion setup: $\theta(z)=\tanh(\alpha z)$, $ \lambda= \mathbb{E}(|E_1-E_2|sign(1-\tanh(\alpha E_1)\tanh(\alpha E_2))),  E_1,E_2 \sim Exp(1).$
$\lambda$ is a decreasing function in $\alpha$.
\end{enumerate}
\label{lem:SNR}
\end{lemma}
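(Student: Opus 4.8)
The plan is to reduce the computation of $\lambda=\mathbb{E}(\mathrm{sign}(\theta(E_1)-\theta(E_2))(E_1-E_2))$ to one- or two-dimensional integrals against the exponential law, exploiting two elementary distributional facts about $E_1,E_2\sim\mathrm{Exp}(1)$ i.i.d.: the difference $W=E_1-E_2$ is Laplace distributed with density $\tfrac12 e^{-|w|}$, and its absolute value $|W|=|E_1-E_2|$ is again $\mathrm{Exp}(1)$, so $\mathbb{E}|W|=1$. The overarching observation is that whenever $\theta$ is strictly increasing one has $\mathrm{sign}(\theta(E_1)-\theta(E_2))=\mathrm{sign}(E_1-E_2)$, whence $\lambda=\mathbb{E}|E_1-E_2|=1$. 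Case $1$, $\theta(z)=z$, is then immediate.

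For Case $2$ I would condition on the noise contribution. Writing $\theta(E_1)-\theta(E_2)=(E_1-E_2)+(\nu_1-\nu_2)$ and setting $T=\nu_1-\nu_2$, note that $T$ is Laplace with rate $\gamma$ and independent of $W$. Integrating out $T$ first gives the closed form $\mathbb{E}_T(\mathrm{sign}(w+T))=\mathrm{sign}(w)(1-e^{-\gamma|w|})$ by a direct Laplace-CDF computation, so that $\lambda=\mathbb{E}(|W|(1-e^{-\gamma|W|}))$. Since $|W|\sim\mathrm{Exp}(1)$ this collapses to the elementary integral $1-\int_0^\infty r\,e^{-(1+\gamma)r}\,dr=1-(1+\gamma)^{-2}$; substituting $\gamma=1/\sqrt{\sigma}$ turns this into the stated $\frac{1+2\sqrt{\sigma}}{(1+\sqrt{\sigma})^2}$, and the decrease in $\sigma$ is then manifest.

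For the Poisson case I would condition on $(E_1,E_2)$: given the intensities, the two counts are independent Poisson with means $E_1/\eta$ and $E_2/\eta$, so their difference is $\mathrm{Skellam}(E_1/\eta,E_2/\eta)$ and $\lambda$ takes exactly the conditional form asserted. For the distortion case the addition formula $\tanh(\alpha E_1)-\tanh(\alpha E_2)=\tanh(\alpha(E_1-E_2))\,(1-\tanh(\alpha E_1)\tanh(\alpha E_2))$ factors the argument of the sign; absorbing $\mathrm{sign}(E_1-E_2)$ into $(E_1-E_2)$ to form $|E_1-E_2|$ yields precisely the stated $\lambda(\alpha)=\mathbb{E}(|E_1-E_2|\,\mathrm{sign}(1-\tanh(\alpha E_1)\tanh(\alpha E_2)))$.

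The main obstacle in these last two cases is the \emph{monotonicity} assertion ($\lambda$ decreasing in $\eta$, respectively in $\alpha$), since here $\lambda$ is no longer an integral one evaluates in closed form. I expect the cleanest route is a coupling argument: for the Poisson model, a rate-$(z/\eta')$ process with $\eta'>\eta$ is a thinning of the rate-$(z/\eta)$ process, and thinning can only degrade the agreement between the sign of the count difference and $\mathrm{sign}(E_1-E_2)$; quantifying the probability that thinning ties or reverses the comparison is where the real work lies. I would handle the $\tanh$ family by the analogous stochastic-dominance estimate in $\alpha$. Once these monotonicity statements are in place, the closed forms of Cases $1$ and $2$ and the conditional representations of Cases $3$ and $4$ assemble directly into the statement of the lemma.
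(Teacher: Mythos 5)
Your derivations coincide with the paper's proof in every case: the same observation that monotone $\theta$ gives $\lambda=\mathbb{E}|E_1-E_2|=1$, the same Laplace-difference computation yielding $\lambda=1-(1+\gamma)^{-2}=\frac{1+2\sqrt{\sigma}}{(1+\sqrt{\sigma})^2}$ for exponential noise, the same conditioning on $(E_1,E_2)$ to obtain the Skellam representation, and the same $\tanh$ subtraction identity for the distortion model. The monotonicity assertions in $\eta$ and $\alpha$ that you correctly flag as the remaining substantive work are in fact not proved in the paper either --- they are simply asserted and illustrated numerically --- so your proposed coupling/thinning argument would, if carried out, go beyond what the paper establishes.
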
 
\noindent From Lemma $\ref{lem:SNR}$, we see that $\lambda$ achieves its maximum value $1$, in the noiseless case. $\lambda $ interestingly captures the SNR as it decreases with noise and distortion levels.
\begin{lemma}\label{lem:ineq}
For any $x \in \mathbb{C}^n$, $||x||=1$, the following equality holds:
\begin{equation}
\mathcal{E}^{x_0}(x_0)-\mathcal{E}^{x_0}(x)= \frac{\lambda}{2}||xx^*-x_0x_0^*||^2_{F}.
\end{equation}
\end{lemma}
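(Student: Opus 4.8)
The plan is to reduce both sides to the single scalar $1-\abs{\scalT{x_0}{x}}^2$ and match coefficients, so that the lemma becomes an immediate consequence of Proposition~\ref{lem:compeq}. First I would handle the left-hand side. By item~1 of Proposition~\ref{lem:compeq} we have $\mathcal{E}^{x_0}(x)=x^*Cx=\lambda\abs{\scalT{x_0}{x}}^2$ for every unit-norm $x$, and in particular $\mathcal{E}^{x_0}(x_0)=\lambda\abs{\scalT{x_0}{x_0}}^2=\lambda$ since $\nor{x_0}=1$. Subtracting gives immediately
\begin{equation*}
\mathcal{E}^{x_0}(x_0)-\mathcal{E}^{x_0}(x)=\lambda\left(1-\abs{\scalT{x_0}{x}}^2\right),
\end{equation*}
so all that remains is to show $\tfrac12\nor{xx^*-x_0x_0^*}_F^2=1-\abs{\scalT{x_0}{x}}^2$.

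For the right-hand side I would expand the Frobenius norm through the trace, exploiting that $xx^*$ and $x_0x_0^*$ are rank-one Hermitian projectors. Writing $\nor{xx^*-x_0x_0^*}_F^2=Tr\!\left((xx^*-x_0x_0^*)^2\right)$ and multiplying out the four terms, cyclicity of the trace together with $\nor{x}=\nor{x_0}=1$ yields $Tr(xx^*xx^*)=\nor{x}^4=1$ and $Tr(x_0x_0^*x_0x_0^*)=1$ for the diagonal terms. The two cross terms are equal and both reduce, again by cyclicity, to the key identity $Tr(xx^*x_0x_0^*)=(x^*x_0)(x_0^*x)=\abs{\scalT{x_0}{x}}^2$. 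Collecting the four contributions gives $\nor{xx^*-x_0x_0^*}_F^2=2\left(1-\abs{\scalT{x_0}{x}}^2\right)$, hence $\tfrac12\nor{xx^*-x_0x_0^*}_F^2=1-\abs{\scalT{x_0}{x}}^2$. Multiplying by $\lambda$ and comparing with the left-hand side computed above closes the proof.

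There is essentially no hard step here: the statement is the $\lambda$-scaled analogue of claim~3 of Theorem~\ref{theo:Init}, with the scaling supplied by the fact that $C=\lambda x_0x_0^*$ rather than $x_0x_0^*+I$. The only point requiring a little care is bookkeeping with the complex inner-product convention so that the cross term is correctly identified as $\abs{\scalT{x_0}{x}}^2$ (real and nonnegative) rather than a phase-dependent quantity; since the trace of a rank-one expression collapses to the product of the two scalar inner products, this is automatic. I would therefore present the computation compactly, emphasizing the reduction via Proposition~\ref{lem:compeq} and the trace identity for rank-one projectors.
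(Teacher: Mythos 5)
Your proof is correct and follows essentially the same route as the paper: the paper's one-line argument likewise combines $\mathcal{E}^{x_0}(x)=\lambda|\scalT{x_0}{x}|^2$ from Proposition~\ref{lem:compeq} with the identity $\frac{1}{2}\|xx^*-x_0x_0^*\|_F^2=1-|\scalT{x_0}{x}|^2$ for unit vectors. You merely spell out the trace expansion of the Frobenius norm that the paper leaves implicit.
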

\noindent Lemma \ref{lem:ineq} provides  a {\em comparison equality}  relating  $||xx^*-x_0x_0^*||^2_{F}$ to the {\em excess risk} $ \mathcal{E}^{x_0}(x_0)-\mathcal{E}^{x_0}(x)$. 
Then using results from empirical processes we bound the excess risk with the operator norm of $\hat{C}_m-C$. The rest of the proof uses results from matrix concentration inequalities \cite{VershyninReview} in order to bound $||\hat{C}_m-C||$.   
\begin{proposition}
The following inequalities hold for 
the solution $\hat{x}_m$  of problem \eqref{eq:MaxPhase},
\begin{eqnarray*} 
&\mathcal{E}^{x_0}(x_0)-\mathcal{E}^{x_0}(\hat{x}_m)\leq 2 \sup_{x, ||x||=1}(\hat {\mathcal{E}}^{x_0}(x)-\mathcal{E}^{x_0}(x))\\
&\sup_{x, ||x||=1}(\hat {\mathcal{E}}^{x_0}(x)-\mathcal{E}^{x_0}(x))=\left|\left| \hat{C}_m-C\right|\right|.
\end{eqnarray*}
\label{lem:empiprocess}
\end{proposition}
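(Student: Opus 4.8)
The plan is to treat this as a standard empirical-risk/oracle argument, after first rewriting both excess quantities as quadratic forms in the Hermitian matrix $\hat{C}_m-C$. Since $\mathcal{E}^{x_0}(x)=x^*Cx$ and $\hat{\mathcal{E}}^{x_0}(x)=x^*\hat{C}_m x$ by definition, the pointwise deviation is exactly $\hat{\mathcal{E}}^{x_0}(x)-\mathcal{E}^{x_0}(x)=x^*(\hat{C}_m-C)x$. Both $C$ (by Proposition \ref{lem:compeq}) and $\hat{C}_m$ are Hermitian, so $M:=\hat{C}_m-C$ is Hermitian, which is what lets me pass freely between quadratic forms, eigenvalues, and the operator norm.

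For the second identity I would invoke the Rayleigh--Ritz variational characterization of Hermitian matrices: $\sup_{\|x\|=1}x^*Mx=\lambda_{\max}(M)$ and $\sup_{\|x\|=1}|x^*Mx|=\max_i|\lambda_i(M)|=\|M\|$. This immediately gives $\sup_{\|x\|=1}(\hat{\mathcal{E}}^{x_0}(x)-\mathcal{E}^{x_0}(x))=\lambda_{\max}(\hat{C}_m-C)$, which is identified with the operator norm $\|\hat{C}_m-C\|$ in the displayed statement.

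For the first inequality I would use the optimality of the two maximizers: $x_0$ maximizes $\mathcal{E}^{x_0}$ over the unit sphere (Proposition \ref{lem:compeq}, part~4), while $\hat{x}_m$ maximizes $\hat{\mathcal{E}}^{x_0}$ by its definition as the solution of \eqref{eq:MaxPhase}. I would then split the excess risk telescopically,
\begin{align*}
\mathcal{E}^{x_0}(x_0)-\mathcal{E}^{x_0}(\hat{x}_m)
&=\bigl(\mathcal{E}^{x_0}(x_0)-\hat{\mathcal{E}}^{x_0}(x_0)\bigr)
+\bigl(\hat{\mathcal{E}}^{x_0}(x_0)-\hat{\mathcal{E}}^{x_0}(\hat{x}_m)\bigr)\\
&\quad+\bigl(\hat{\mathcal{E}}^{x_0}(\hat{x}_m)-\mathcal{E}^{x_0}(\hat{x}_m)\bigr).
\end{align*}
The middle term is $\le 0$ because $\hat{x}_m$ maximizes $\hat{\mathcal{E}}^{x_0}$, and each of the two remaining terms is bounded by $\sup_{\|x\|=1}|\hat{\mathcal{E}}^{x_0}(x)-\mathcal{E}^{x_0}(x)|=\|\hat{C}_m-C\|$; summing the two survivors produces the factor $2$ and the claimed bound.

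The step requiring the most care is reconciling the signed supremum appearing in the second line with the (unsigned) operator norm: $\sup_{\|x\|=1}x^*Mx$ is the top eigenvalue $\lambda_{\max}(M)$, which equals $\|M\|$ precisely when $\lambda_{\max}(M)\ge|\lambda_{\min}(M)|$. Since $M$ is a sum of rank-two terms with eigenvalues of both signs, I would not try to argue spectral symmetry; instead I would note that what is actually needed downstream --- when this proposition is fed into Lemma \ref{lem:ineq} and the matrix concentration bound --- is only the uniform \emph{two-sided} deviation. I would therefore record the robust consequence $\mathcal{E}^{x_0}(x_0)-\mathcal{E}^{x_0}(\hat{x}_m)\le 2\|\hat{C}_m-C\|$, which follows from the telescoping decomposition together with $\sup_{\|x\|=1}|x^*Mx|=\|M\|$ and carries no sign caveat.
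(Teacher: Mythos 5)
Your proof is correct and follows essentially the same route as the paper: the paper invokes the ``classical approach to empirical risk minimization'' for the factor-of-two inequality (which is exactly your telescoping decomposition, with the middle term killed by the optimality of $\hat{x}_m$) and then identifies the quadratic-form deviation $x^*(\hat{C}_m - C)x$ with the operator norm. Your caveat about the signed supremum is well taken --- the paper's own proof silently passes to $\sup_{\|x\|=1}\left|x^*(\hat{C}_m-C)x\right| = \left\lVert \hat{C}_m-C\right\rVert$, which is the version actually used downstream, so the ``robust consequence'' you record is precisely what the paper establishes.
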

\noindent Finally  we  bound $\left|\left| \hat{C}_m-C\right|\right|$ using Matrix  Bernstein inequality \cite{VershyninReview}:
\begin{proposition}For $\epsilon \in [0,1]$, there exists a constant $c$ such that:
$$\text{For}\quad  m \geq \frac{c n \log(n)}{\lambda \epsilon^2}\quad  ||\hat{C}_m-C||\leq \epsilon \lambda \text{ with probability at least } 1-O(n^{-2}).$$

\label{lem:radav}
\end{proposition}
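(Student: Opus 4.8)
The plan is to recognize $\hat{C}_m-C$ as a normalized sum of i.i.d.\ centered Hermitian matrices and to invoke the non-commutative matrix Bernstein inequality (Theorem~\ref{eq:Bernstein}). Writing $X_i=y_i(a^1_ia^{1,*}_i-a^2_ia^{2,*}_i)$ and $W_i=X_i-C$, Proposition~\ref{lem:compeq} gives $\mathbb{E}X_i=C=\lambda x_0x_0^*$, so the $W_i$ are zero-mean and $\hat{C}_m-C=\frac{1}{m}\sum_{i=1}^m W_i$. To apply Bernstein I need two ingredients: a uniform almost-sure bound $L$ on $\|W_i\|$ and the matrix variance proxy $\sigma^2=\bigl\|\sum_{i=1}^m\mathbb{E}W_i^2\bigr\|$. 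The decisive structural point, and the source of the improvement over the sub-exponential initialization, is that $|y_i|=1$: the magnitude of each summand is controlled entirely by the Gaussian factor $a^1_ia^{1,*}_i-a^2_ia^{2,*}_i$ and not by an additional heavy-tailed intensity $b_i$.

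First I would address the fact that the summands are not bounded deterministically, since $\|a^1_ia^{1,*}_i-a^2_ia^{2,*}_i\|\le\|a^1_i\|^2+\|a^2_i\|^2$ and each $\|a^j_i\|^2$ is a scaled $\chi^2_{2n}$ variable with an exponential upper tail. I would therefore pass to the high-probability event $\mathcal{G}=\{\max_{i,j}\|a^j_i\|^2\le Kn\}$ for an absolute constant $K>1$; a standard $\chi^2$ tail estimate gives $\mathbb{P}(\|a\|^2>Kn)\le e^{-cKn}$, and a union bound over the $2m=O(n\log n)$ vectors keeps $\mathbb{P}(\mathcal{G}^c)=O(n^{-2})$. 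On $\mathcal{G}$ one has $\|W_i\|\le 2Kn+\lambda=:L=O(n)$. Crucially, because the excluded event is already of order $n^{-2}$, the truncation costs no extra logarithmic factor; this is exactly where the one-bit scheme avoids the additional powers of $\log n$ incurred in the sub-exponential case, whose uniform bound also carries the factor $b_i=O(\log m)$.

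Next I would compute the variance proxy. Since $y_i^2=1$, the sign drops out of the square and $\sum_{i}\mathbb{E}W_i^2=m\bigl(\mathbb{E}(a^1a^{1,*}-a^2a^{2,*})^2-C^2\bigr)$. Expanding the square and using the complex-Gaussian fourth-moment identities
\[
\mathbb{E}\bigl[\|a\|^2 aa^*\bigr]=(n+1)I_n,\qquad \mathbb{E}\bigl[\scalT{a^1}{a^2}\,a^1a^{2,*}\bigr]=I_n \quad(a^1,a^2\text{ independent}),
\]
I expect $\mathbb{E}(a^1a^{1,*}-a^2a^{2,*})^2=2nI_n$, whence $\sigma^2=\bigl\|m(2nI_n-\lambda^2 x_0x_0^*)\bigr\|=O(mn)$; note that $\lambda$ does not enter the variance, only the deviation target will.

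Finally I would feed $\sigma^2=O(mn)$ and $L=O(n)$ into the matrix Bernstein tail with deviation target $\delta=\epsilon\lambda$. Since $\delta\le 1$, the linear (uniform-bound) term is dominated by the variance term, so the tail takes the form
\[
\mathbb{P}\bigl(\|\hat{C}_m-C\|\ge\delta\bigr)\le 2n\exp\!\left(\frac{-c'\,m\,\delta^2}{n}\right)
\]
up to absolute constants. Requiring the exponent to exceed $3\log n$ then shows that a sample size of the order stated in the proposition suffices to push the failure probability below $O(n^{-2})$, and a union bound with $\mathbb{P}(\mathcal{G}^c)=O(n^{-2})$ gives $\|\hat{C}_m-C\|\le\epsilon\lambda$ with probability $1-O(n^{-2})$; observe that $\lambda$ enters solely through the target accuracy $\delta=\epsilon\lambda$. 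The main obstacle is the truncation step: the summands are genuinely unbounded, so the clean bounded-case Bernstein applies only after excising $\mathcal{G}^c$, and one must verify that its probability stays at the $n^{-2}$ level without inflating $L$ beyond $O(n)$ — it is precisely the boundedness $|y_i|=1$ that makes this possible and strips away the extra logarithmic factors present in the sub-exponential analysis.
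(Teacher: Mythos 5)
Your overall strategy --- truncate the Gaussian vectors to a ball of radius $O(\sqrt{n})$, then apply the non-commutative matrix Bernstein inequality with variance proxy $O(mn)$ and uniform bound $O(n)$ --- is exactly the route the paper takes, and your observation that $|y_i|=1$ is what removes the extra $\log$ factors is the right one. There are, however, two concrete gaps. First, you condition on the good event $\mathcal{G}$ and then apply the ``clean bounded-case Bernstein,'' but truncation (or conditioning) de-centers the summands: $\mathbb{E}[\tilde{y}_i(\tilde{a}^1_i\tilde{a}^{1,*}_i-\tilde{a}^2_i\tilde{a}^{2,*}_i)]\neq C$. The paper handles this with a three-term decomposition $\|\hat{C}_m-C\|\le\|\hat{C}_m-\tilde{C}_m\|+\|\tilde{C}_m-\tilde{C}\|+\|\tilde{C}-C\|$ and shows the bias term $\|\tilde{C}-C\|$ is $O(e^{-cn})$ by Cauchy--Schwarz; your argument needs this step (it is harmless but not free).

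Second, and more substantively, your final bound does not match the statement. With $\sigma^2=O(mn)$ and deviation target $\delta=\epsilon\lambda$, the exponent in Bernstein is $O(m\epsilon^2\lambda^2/n)$, so requiring it to exceed $3\log n$ gives $m\gtrsim n\log(n)/(\epsilon^2\lambda^2)$ --- a factor $1/\lambda$ worse than the claimed $m\gtrsim n\log(n)/(\epsilon^2\lambda)$. You assert this is ``of the order stated in the proposition,'' but it is not: since $\lambda\le 1$ the two coincide only in the noiseless case $\lambda=1$, and the whole point of tracking $\lambda$ is the noisy/distorted regime where $\lambda$ is small. The paper obtains the first power of $\lambda$ by following Vershynin's covariance-estimation recipe, bounding the deviation by $\max(\sqrt{\lambda}\,\delta,\delta^2)$ with $\delta=t\sqrt{4M\log(n)/m}$; note that in that recipe the factor $\sqrt{\lambda}=\|C\|^{1/2}$ is meant to come from a variance proxy scaling with $\|C\|$, whereas here (as your own computation $\mathbb{E}(a^1a^{1,*}-a^2a^{2,*})^2=2nI_n$ shows) the second moment does not shrink with $\lambda$. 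So either you must justify a $\lambda$-dependent variance bound, or you only recover the proposition with $\lambda^2$ in place of $\lambda$; as written, your proof establishes the weaker statement while claiming the stronger one.
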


\subsection{Weighted One Bit Phase Retrieval: Correctness and Concentration}\label{weitheory}
In this section we sketch the proof architecture for results of Weigthed one bit phase retrieval.
We start first by a preliminary definition:
\begin{definition}[Risk and Empirical risk]
Let $x_0 \in \mathbb{C}^n, ||x_0||=1$.
For $x \in \mathbb{C}^{n}$ such that $||x||=1$, and $A=\{a^1,a^2\}$ i.i.d. complex Gaussians, let 
\begin{equation*}
\mathcal{E}^{x_0}(x)=x^*C x, \end{equation*}
 where $C= \mathbb{E}\left(  y(R^1 a^1a^{1.*}- R^2a^2a^{2,*})\right)$ and  $y=sign\left( b^1-b^2\right)$ , $R^1= \frac{b^1}{b^1+b^2}$,$R^2=\frac{b^2}{b^1+b^2}$, and $b^j=|\scalT{a^j}{x_0}|^2, j=1,2$..
 Moreover, let
\begin{equation*}
\hat{\mathcal{E}}^{x_0}(x)=x^*\hat{C}_{m} x,
\end{equation*}
where $\hat{C}_m =\frac{1}{m}\sum_{i=1}^{m}  y_i(R^1_ia^{1}_ia^{1,*}_i-R^2_ia^{2}_ia^{2,*}_i), y_i = sign(b^1_i-b^2_i), R^j_i=\frac{b^j_i}{b^1_i+b^2_i}, j=1,2$ and $A_i=\{(a^1_i,a_i^2)\},i=1\dots m$ are i.i.d. complex Gaussians.
 \end{definition} 
 

We first state Proposition~\ref{lem:weiCorrect}, that provides a theoretical justification to the relaxation introduced in the formulation Weighted1bitPhase in \eqref{eq:WeightOneBit}.
The proof of Proposition ~\ref{lem:weiCorrect} is given in the appendix \ref{proof:weighted}.
 \begin{proposition}[Correctness in Expectation]\label{lem:weiCorrect}
 The following statements hold:
 \begin{enumerate}
\item For all $x \in \mathbb{C}^n, ||x||=1$, we have the following equality,
\begin{equation}
\mathcal{E}^{x_{0}}(x)= x^*Cx= \frac{1}{2} \left| \scalT{x_0}{x}\right|^2+\frac{1}{2}.
\end{equation}
\item The maximum eigenvector of $C$ is of the form $x_0e^{i\phi}$, where $\phi \in [0,2\pi]$.
\end{enumerate}
\end{proposition}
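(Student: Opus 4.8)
The plan is to compute the matrix $C=\mathbb{E}\left(y(R^1a^1a^{1,*}-R^2a^2a^{2,*})\right)$ in closed form, exactly as in the proof of Proposition~\ref{lem:compeq}, and then read off both claims. The first step is to exploit the exchangeability of the pair $(a^1,a^2)$: swapping $a^1\leftrightarrow a^2$ leaves the joint law unchanged while sending $y\mapsto -y$, $R^1\mapsto R^2$ and $a^1a^{1,*}\mapsto a^2a^{2,*}$. Hence $\mathbb{E}(yR^1a^1a^{1,*})=-\mathbb{E}(yR^2a^2a^{2,*})$, which collapses the two terms and gives $C=2\,\mathbb{E}(yR^1a^1a^{1,*})$. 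This reduces the problem to a single second-moment matrix weighted by the bounded scalar $yR^1$.

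Next I would use unitary invariance of the complex Gaussian to assume $x_0=e_1$ without loss of generality, so that $b^1=|a^1_1|^2$ and $b^2=|a^2_1|^2$, and therefore both $y$ and $R^1$ are functions of $(a^1_1,a^2_1)$ only. Writing out the entries of $\mathbb{E}(yR^1a^1a^{1,*})$, every off-diagonal entry and every cross term involving a coordinate $a^1_j$ with $j>1$ factorizes and vanishes, because such coordinates are independent of $(a^1_1,a^2_1)$ and centered. The matrix is therefore diagonal, with the $(1,1)$ entry equal to $\mathbb{E}(yR^1b^1)$ and every remaining diagonal entry equal to $\mathbb{E}(yR^1)\,\mathbb{E}(|a^1_j|^2)=\mathbb{E}(yR^1)$. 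The key qualitative point, and the main thing to get right, is that, in contrast to the unweighted case of Proposition~\ref{lem:compeq} where $\mathbb{E}(y)=0$ kills the orthogonal directions and leaves a rank-one matrix, here $\mathbb{E}(yR^1)\neq 0$, so the orthogonal directions survive and produce an isotropic noise floor. This is precisely what yields the additive constant $\tfrac12$ rather than a pure rank-one form.

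It then remains to evaluate the two scalars. Since $b^1,b^2$ are i.i.d.\ $Exp(1)$ (because $|\scalT{a}{x_0}|^2\sim Exp(1)$ for $\|x_0\|=1$), I would reparametrize by $s=b^1+b^2$ and $r=R^1=b^1/(b^1+b^2)$, which are independent with $r\sim \mathrm{Unif}[0,1]$ (Lemma~\ref{lem:unif}) and $s\sim Gamma(2,1)$, a standard property of exponential variables. Then $y=\mathrm{sign}(2r-1)$, $R^1=r$ and $R^1b^1=r^2s$, so that $\mathbb{E}(yR^1)=\int_0^1 \mathrm{sign}(2u-1)\,u\,du=\tfrac14$ and $\mathbb{E}(yR^1b^1)=\mathbb{E}(\mathrm{sign}(2r-1)r^2)\,\mathbb{E}(s)=\tfrac14\cdot 2=\tfrac12$. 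Substituting back gives
\begin{equation*}
C=2\mathbb{E}(yR^1b^1)\,x_0x_0^*+2\mathbb{E}(yR^1)(I-x_0x_0^*)=\tfrac12\,x_0x_0^*+\tfrac12 I.
\end{equation*}

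From this closed form both claims are immediate. For the first, $\mathcal{E}^{x_0}(x)=x^*Cx=\tfrac12|\scalT{x_0}{x}|^2+\tfrac12\|x\|^2=\tfrac12|\scalT{x_0}{x}|^2+\tfrac12$ when $\|x\|=1$. For the second, $C=\tfrac12 x_0x_0^*+\tfrac12 I$ has eigenvalue $1$ on $\mathrm{span}\{x_0\}$ and eigenvalue $\tfrac12$ on its orthogonal complement, so the leading eigenvectors are exactly the vectors $x_0e^{i\phi}$, $\phi\in[0,2\pi]$. I expect the only real care to be needed in the bookkeeping of the diagonal/off-diagonal decomposition and in not overlooking the nonzero $\mathbb{E}(yR^1)$ term; the moment integrals themselves are elementary.
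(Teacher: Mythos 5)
Your proof is correct: the symmetry reduction $C=2\,\mathbb{E}(yR^1a^1a^{1,*})$ is valid, the diagonal structure after rotating $x_0$ to $e_1$ is right, and the moment computations ($\mathbb{E}(yR^1)=\tfrac14$, $\mathbb{E}(yR^1b^1)=\tfrac12$) check out, yielding $C=\tfrac12 x_0x_0^*+\tfrac12 I$, which is exactly what the proposition asserts. The paper reaches the same conclusion by a structurally similar but computationally different route: it decomposes $\scalT{a^j}{x}$ into components along and orthogonal to $x_0$ and evaluates the quadratic form $x^*Cx$ directly, exploiting the algebraic telescoping identity $R^1b^1-R^2b^2=b^1-b^2$ so that the coefficient of $|\scalT{x_0}{x}|^2$ becomes $\mathbb{E}(y(b^1-b^2))=\mathbb{E}|b^1-b^2|=1$ (an $Exp(1)$ mean) and the isotropic coefficient becomes $\mathbb{E}(y(R^1-R^2))=\mathbb{E}|2U-1|=\tfrac12$ with $U\sim\mathrm{Unif}[0,1]$ — no explicit integrals over the $(r,s)$ parametrization are needed. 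Your version buys a cleaner closed form for the full matrix $C$ (making the eigenstructure claim immediate rather than argued via maximization of $x^*Cx$), at the cost of the Gamma--Beta independence input and two elementary integrals; the paper's version buys shorter scalar evaluations by recognizing $y(b^1-b^2)=|b^1-b^2|$ and $y(R^1-R^2)=|R^1-R^2|$ as absolute values of known distributions. Both correctly identify the essential qualitative point you emphasize: unlike the unweighted case, the orthogonal directions contribute a nonzero isotropic term, so $C$ is a rank-one spike plus a multiple of the identity rather than a pure rank-one matrix.
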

Proposition \ref{lem:weiCorrect} suggests that $x_0$ can be recovered up to a global phase as a maximum eigen value of the matrix $C$.
The rest of the proof consists in proving the concentration of the empirical matrix $\hat{C}_m$ around its mean $C$. 
The proof architecture is the same presented in Section  \ref{theory1bit}.\\
The following lemma states a comparison equality that relates the excess risk to the distance to the optimum.
\begin{lemma}[Excess Risk]\label{lem:ineqweight}
For any $x \in \mathbb{C}^n$, $||x||=1$, the following equality holds:
\begin{equation}
\mathcal{E}^{x_0}(x_0)-\mathcal{E}^{x_0}(x)= \frac{1}{4}||xx^*-x_0x_0^*||^2_{F}.
\end{equation}
\end{lemma}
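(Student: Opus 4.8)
The plan is to reduce the claimed equality to the explicit quadratic form for the risk established in Proposition~\ref{lem:weiCorrect}, and then to expand the Frobenius norm on the right-hand side. By part~1 of Proposition~\ref{lem:weiCorrect}, for any unit vector $x$ we have $\mathcal{E}^{x_0}(x) = \frac{1}{2}\left|\scalT{x_0}{x}\right|^2 + \frac{1}{2}$. Setting $x = x_0$ and using $||x_0|| = 1$ gives $\mathcal{E}^{x_0}(x_0) = 1$, so that the excess risk becomes
$$\mathcal{E}^{x_0}(x_0) - \mathcal{E}^{x_0}(x) = \frac{1}{2}\left(1 - \left|\scalT{x_0}{x}\right|^2\right).$$

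First I would expand the squared Frobenius norm, using that $xx^*$ and $x_0 x_0^*$ are Hermitian, so that $||xx^* - x_0 x_0^*||_F^2 = Tr\left((xx^* - x_0 x_0^*)^2\right)$. Expanding the square and applying the cyclic property of the trace to the resulting rank-one products, each diagonal term satisfies $Tr(xx^* xx^*) = ||x||^4 = 1$ and $Tr(x_0 x_0^* x_0 x_0^*) = ||x_0||^4 = 1$ by the norm-one constraint, while both cross terms equal $Tr(xx^* x_0 x_0^*) = \left|\scalT{x_0}{x}\right|^2$. This yields $||xx^* - x_0 x_0^*||_F^2 = 2\left(1 - \left|\scalT{x_0}{x}\right|^2\right)$. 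Multiplying by $\frac{1}{4}$ reproduces the excess risk computed above, which proves the stated equality.

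The computation is entirely routine once Proposition~\ref{lem:weiCorrect} is available, so there is no real obstacle. The only point demanding a little care is consistency of the inner-product convention so that both cross terms collapse to $\left|\scalT{x_0}{x}\right|^2$; since only the modulus squared enters, the result is independent of which argument carries the conjugation. I would also emphasize that this lemma is the exact weighted analogue of claim~3 of Theorem~\ref{theo:Init} and of Lemma~\ref{lem:ineq}: the constant degrades from $\frac{1}{2}$ to $\frac{1}{4}$ precisely because the weighting by $(R^1, R^2)$ introduces the factor $\frac{1}{2}$ in the quadratic form of the risk. This comparison equality is exactly what the subsequent concentration argument---mirroring Propositions~\ref{lem:empiprocess} and~\ref{lem:radav}---needs in order to control $||xx^* - x_0 x_0^*||_F$ through the operator-norm deviation $||\hat{C}_m - C||$.
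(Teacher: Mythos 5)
Your proof is correct and follows exactly the paper's (implicit) argument: apply Proposition~\ref{lem:weiCorrect} to write the excess risk as $\frac{1}{2}\left(1-\left|\scalT{x_0}{x}\right|^2\right)$ and combine with the identity $||xx^*-x_0x_0^*||_F^2 = 2\left(1-\left|\scalT{x_0}{x}\right|^2\right)$ for unit vectors, precisely as in Lemma~\ref{lem:ineq}. Nothing is missing.
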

The Rest of the proof consists in bounding the excess risk $\mathcal{E}^{x_0}(x_0)-\mathcal{E}^{x_0}(x) $ using empirical processes tools and Non commutative  Matrix Bernstein inequality .
Note that the boundedness of $(R^1,R^2)$ simplifies at that point the analysis and the proofs are  a straightforward adaptation of the  one presented in Section \ref{theory1bit}.

\section{Computational Aspects}{\label{sec:comp}
\subsection{One bit Phase Retrieval Algorithms}
A  straightforward computation of the maximum eigenvector of the matrix $\hat{C}_m$ is expensive. 
One needs $O(n^2m)$ operations  to compute the matrix $\hat{C}_m$, that is $O(n^3\log(n))$, and then the computation of the first eigenvector requires $O(n^2)$ operations.
The total computational cost is therefore $O(n^3\log(n))+n^2)$, and is dominated by the cost of computing $\hat{C}_m$.\\

An elegant method  to avoid that overhead  is the power method. The power method allows for the computation of the maximum eigenvector without having to compute the matrix $\hat{C}_m$, this reduce drastically the computational cost to $O(nm)$ at each iteration of the power method, that is $O(n^2\log(n))$.
In the following we discuss $2$ algorithms:
\begin{enumerate}
\item \textbf{1bitPhasePower}: One bit Phase retrieval via the Power Method given in Algorithm \ref{1bitPhasePower} .
\item \textbf{Weigthed1bitPhasePower}: Weighted  One bit Phase retrieval via the Power Method given in Algorithm \ref{W1bitPhasePower}.
\end{enumerate}

\begin{algorithm}[H]
 \begin{algorithmic}[1]
 \Procedure{1bitPhasePower}{$A,y,\epsilon$}
 \State  Initialize $r_0$ at random, $j=1$.
 \While{$|| r_j-r_{j-1}|| >\epsilon$ or $j=1$}
 \State $r_j\gets \frac{1}{m}\sum_{i=1}^m y_i\left(\scalT{a^1_i}{r_{j-1}}a^1_{i} -\scalT{a^2_i}{r_{j-1}}a^2_{i}\right)$
 \State $\hat{\lambda} \gets ||r_j||$
 \State $r_j\gets \frac{r_j}{\hat{\lambda}}$
  \State $j\gets j+1$
 \EndWhile
 \State \textbf{return} $\left(\hat{\lambda},r\right)$ \Comment{$(\hat{\lambda},r)$ is an estimate of $(\lambda,x_0)$.}
 \EndProcedure
 \end{algorithmic}
 \caption{1bitPhasePower}
 \label{1bitPhasePower}
\end{algorithm}

\begin{algorithm}[H]
 \begin{algorithmic}[1]
 \Procedure{Weigthed1bitPhasePower}{$A,y,R,\epsilon$}
 \State  Initialize $r_0$ at random, $j=1$.
 \While{$|| r_j-r_{j-1}|| >\epsilon$ or $j=1$}
 \State $r_j\gets \frac{1}{m}\sum_{i=1}^m y_i\left( R^1_{i}\scalT{a^1_i}{r_{j-1}}a^1_{i} -R^{2}_i\scalT{a^2_i}{r_{j-1}}a^2_{i}\right)$
 \State $\hat{\lambda} \gets ||r_j||$
 \State $r_j\gets \frac{r_j}{\hat{\lambda}}$
  \State $j\gets j+1$
 \EndWhile
 \State \textbf{return} $\left(\hat{\lambda},r\right)$ \Comment{$(\hat{\lambda},r)$ is an estimate of $(\lambda,x_0)$.}
 \EndProcedure
 \end{algorithmic}
 \caption{Weighted 1bitPhasePower}
 \label{W1bitPhasePower}
\end{algorithm}
\subsection{ Alternating Minimization Algorithms}
Given an initialization the algorithm amount to simply solving a Least Squares that can be solved using conjugated gradient method that needs $O(mn)$ iterations.

\section{Numerical Experiments}
\subsection{Robustness to distortion}
We consider a  signal $x_0 \in \mathbb{C}^n$ which is a  a random complex Gaussian vector with i.i.d. entries of the form $x_0[j]=X+iY$, where $X,Y \sim \mathcal{N}(0,\frac{1}{2})$.
Let $n= 128$ , $\epsilon=0.25$. We consider  $r=1/(\epsilon^2) \ceil{\log(n)}=64$ and set  $m=rn$. So that the total number of measurements is $2m$.
We assume that we measure distorted (clipped) measurements according to the model:
$$b^1_i=\tanh(\alpha |\scalT{a^1_i}{x_0}|^2), \quad b^2_i=\tanh(\alpha |\scalT{a^2_i}{x_0}|^2)\quad a^1_i,a^2_i  \sim \mathcal{C}\mathcal{N}(0,I_n),i=1\dots m. $$
$\alpha$ corresponds to the level of distortion. The distortion is more severe as $\alpha$ increases. 
\begin{figure}[t]
\begin{center}
\includegraphics[width=0.5\linewidth]{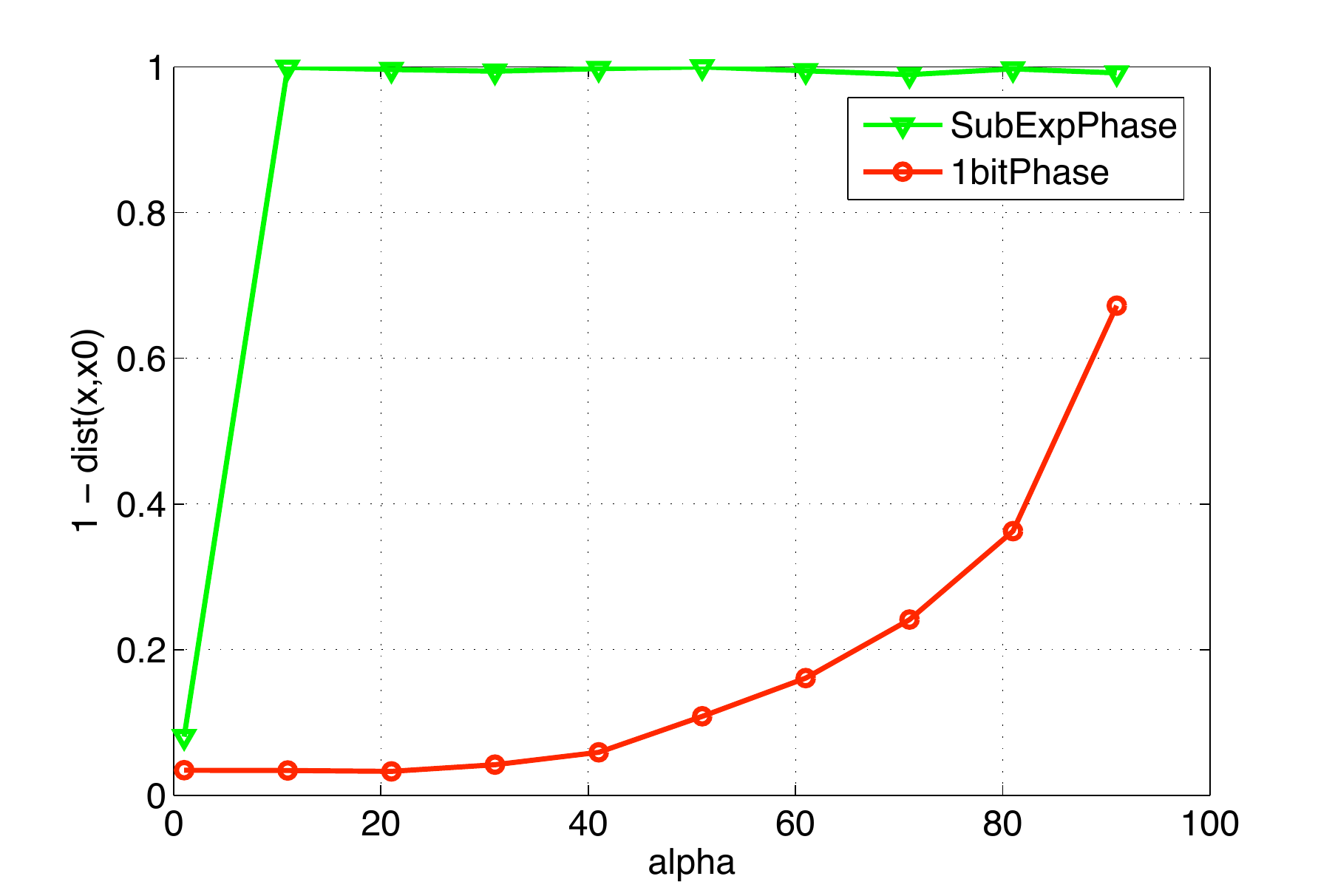}\label{fig:dist}
\end{center}
\caption{Robustness of One Bit Phase Retrieval to extreme distortion: the error of recovery $1-|\scalT{x}{x_0}|^2$ versus the distortion level $\alpha$ .}
\end{figure}
In figure \textcolor{red}{3} we plot the error of recovery  $1-dist(\hat{x}_m,x_0):=1-|\scalT{\hat{x}_m}{x_0}|^2$. Where $\hat{x}_m$ is either the solution of \textit{1bitPhase} or \textit{SubExpPhase}.
We see that one bit phase retrieval robustly recovers the signal while traditional approaches (\textit{SubExpPhase} for instance) fail under sever distortions.
\subsection{One Bit Phase Retrieval and Alternating Minimization}

We consider the problem of recovering the phase from coded diffractions patterns \cite{FMasks} or so the called Fourier masks \cite{phaseC}.
Let $F$ be the discrete Fourier Matrix,  
In this setting we measure : 
$$b=|FDiag(w)x_0|^2 \in \mathbb{R}^+_n,$$
where $w \sim \mathcal{C}\mathcal{N}(0,I_n)$ is a Gaussian complex random mask.
We generate a Gaussian random $x_0$ signal of dimension $n=8000$.
Let $m=rn$ we set $r=4$.
We measure : $$b^1_i=|\scalT{FDiag(w^1_i)}{x_0}|^2+\sigma.\max(\epsilon_i,0), b^2_i=|\scalT{FDiag(w^2_i)}{x_0}|^2+\sigma.\max(\epsilon_i,0)\quad i=1\dots r .$$
where $w^1_i,w^2_i\sim \mathcal{C}\mathcal{N}(0,I_n)$, and $\epsilon_i\sim\mathcal{N}(0,I_n)$ is a random noise.
We split the measurements in $2$ sets of size $m$,$\{b^1_i,b^2_i,i=1\dots m\}$ each and compute:
$y_i=sign(b^1_i-b^2_i)\in \{-1,1\}^n,i=1\dots r$, and $R^1_i=\frac{b^1_i}{b^1_i+b^2_i}, R^1_i=\frac{b^2_i}{b^1_i+b^2_i}$ (ratio by coordinate).
We then compute the solution of \emph{SubExpPhase},\emph{1bitPhase} and \emph{Weigthed1bitPhase}.
We then run the alternating minimization initialized with one of those solutions as well as a random initialization.
Note that all the algorithms can be now much faster thanks to the Fast Fourier transform.
In figure \ref{fig:subfig2} we see that in the noiseless setting all approaches converge, the convergence is faster for one bit variants in high dimension.
In figure \ref{fig:subfig3}\ref{fig:subfig4},\ref{fig:subfig5} we see that one bit variants are more robust in the noisy setting .
\begin{figure}[H]
\subfigure[Error $1-|\scalT{x}{x_0}|^2$ versus Iterations of AltMinPhase, for  $n=8000$, and a total measurements $8n$ in the noiseless setting.]{
\includegraphics[width=0.5\textwidth]{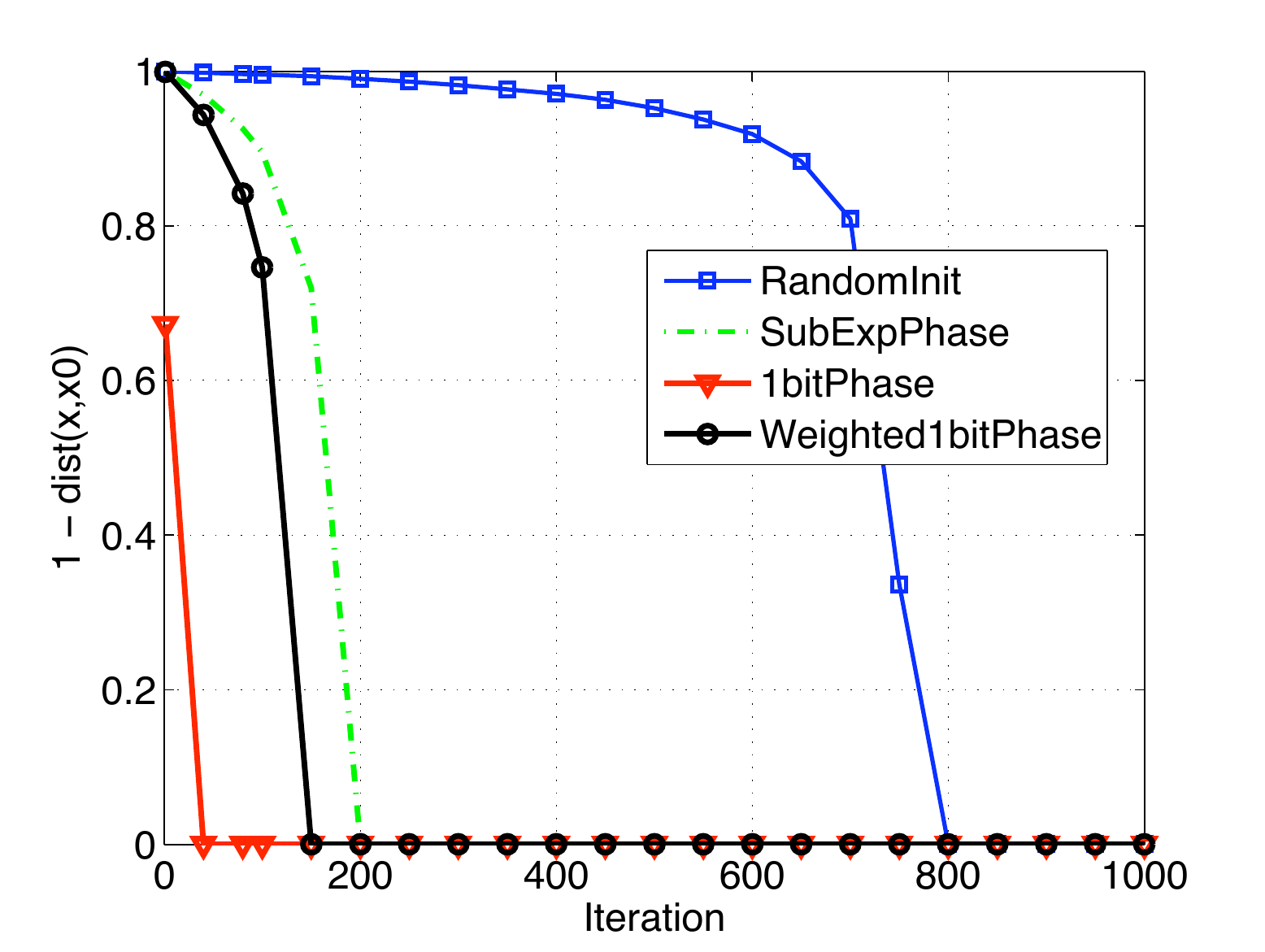}
\label{fig:subfig2}
}
\subfigure[Error $1-|\scalT{x}{x_0}|^2$ versus Iterations of AltMinPhase, for  $n=8000$ and a total measurements $8n$ in the noisy setting $\sigma=0.4$.]{
\includegraphics[width=0.5\textwidth]{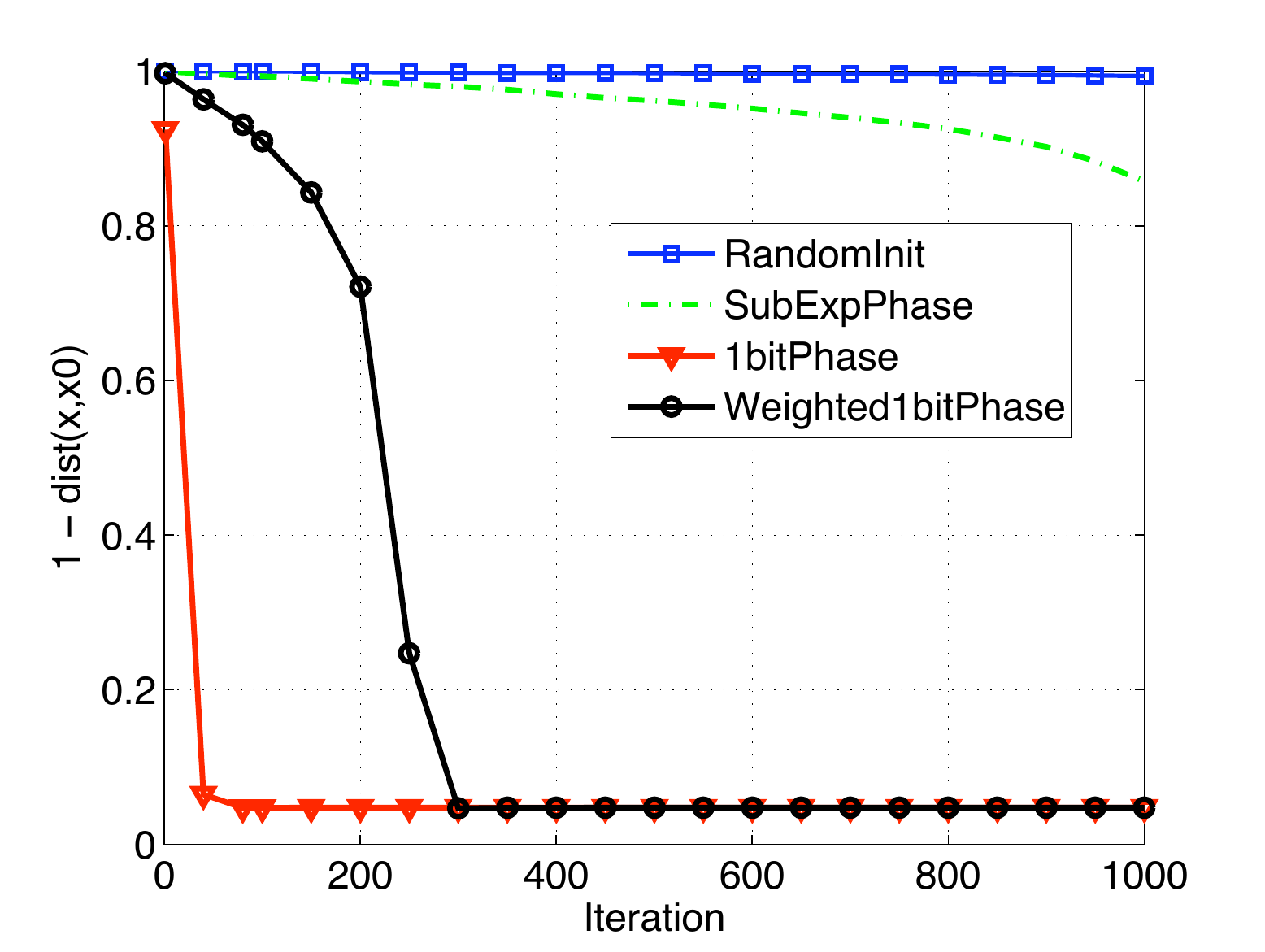}
\label{fig:subfig3}
}  
\subfigure[Error $1-|\scalT{x}{x_0}|^2$ versus Iterations of AltMinPhase, for  $n=8000$ and a total measurements $8n$  in the noisy setting $\sigma=0.8$.]{
\includegraphics[width=0.5\textwidth]{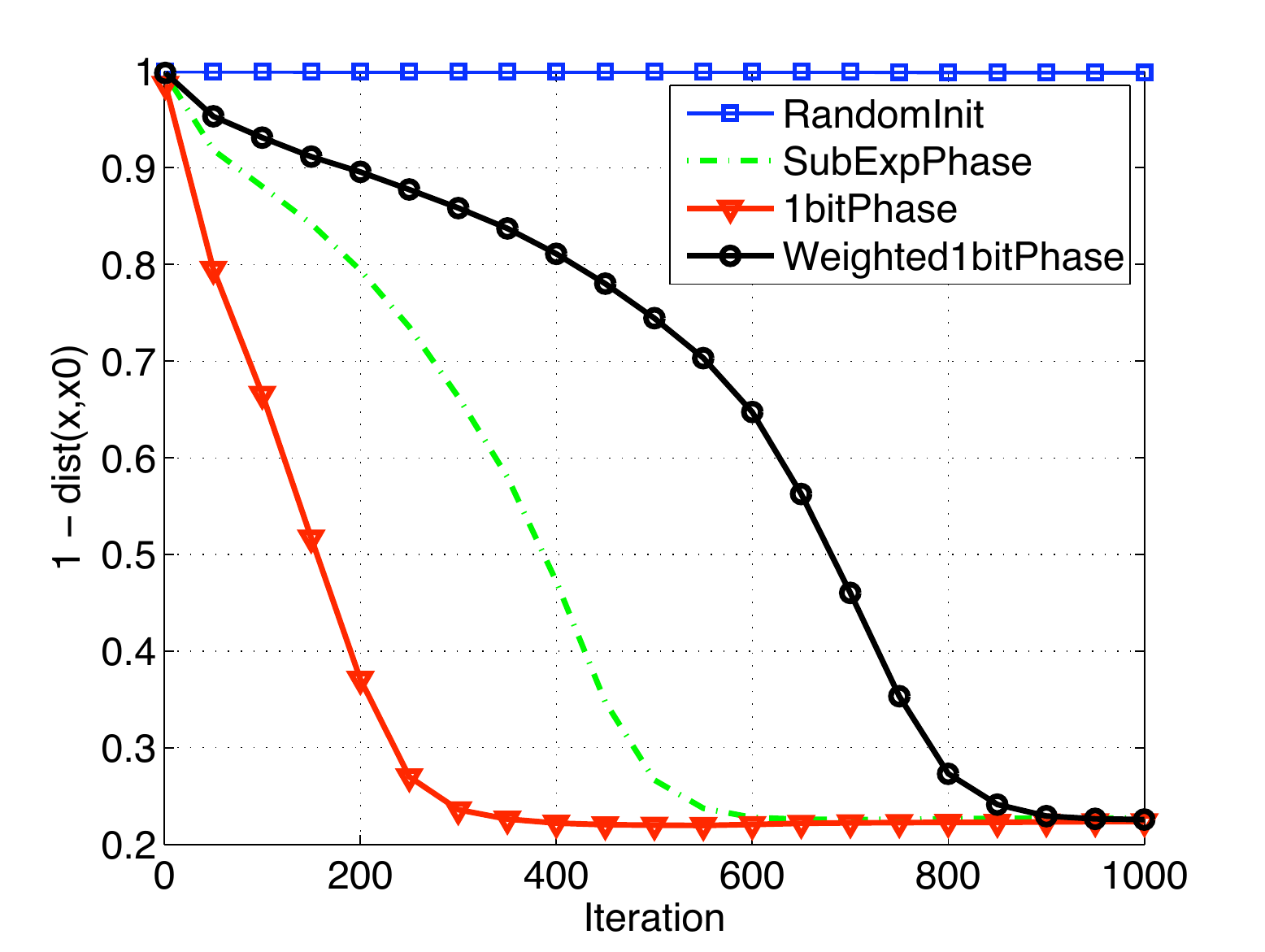}
\label{fig:subfig4}
}
\subfigure[Error $1-|\scalT{x}{x_0}|^2$ versus Iterations of AltMinPhase, for  $n=8000$ and a total measurements $8n$  in the noisy setting $\sigma=0.8$]{
\includegraphics[width=0.5\textwidth]{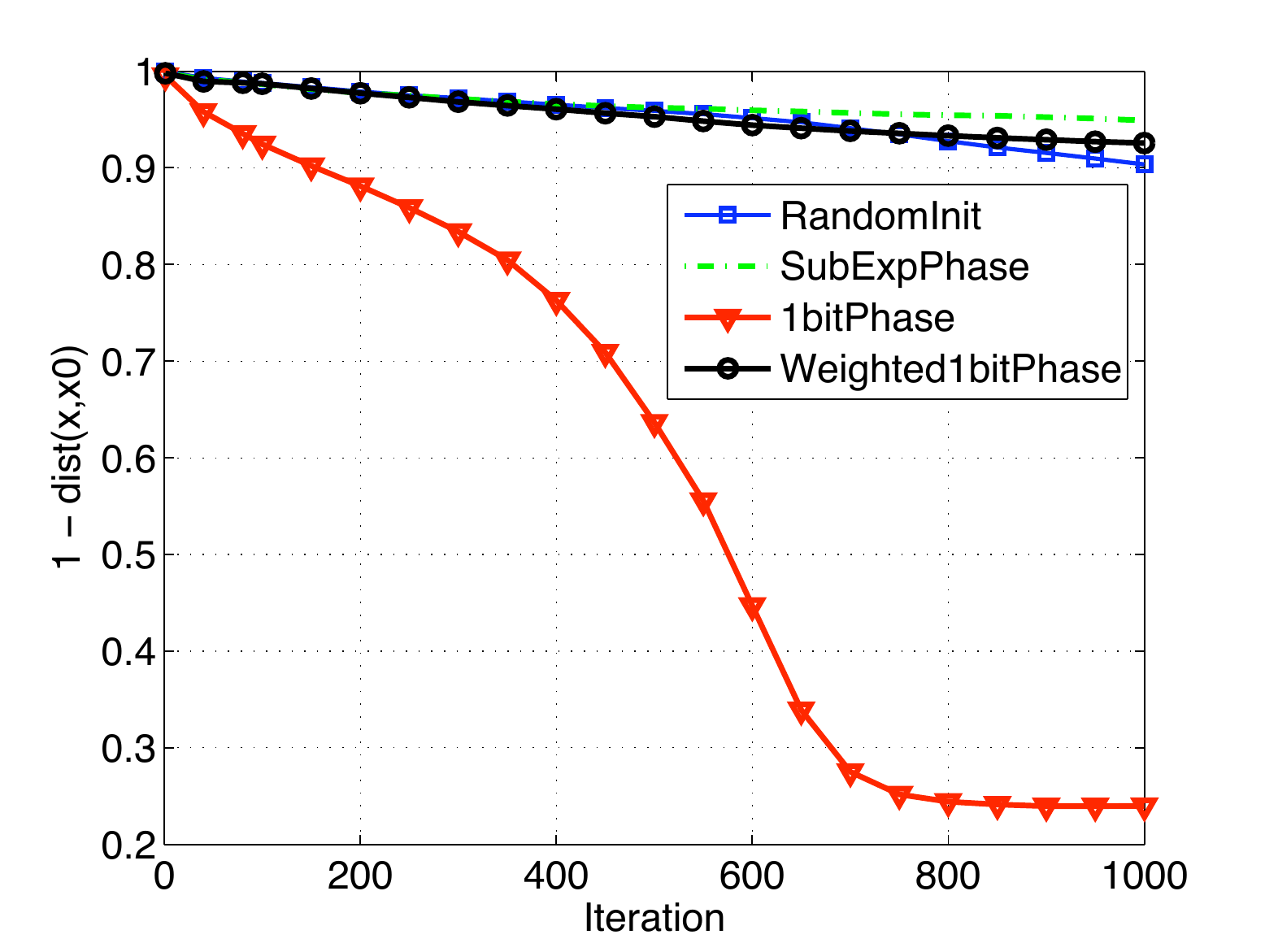}
\label{fig:subfig5}
}
\label{fig:subfigureExample}
\caption{Alternating minimization convergence with different initializations: Random Initialization,1bitPhase,Weighted1bitPhase,and SubExpPhase, in the noisy and noiseless setting.}
\end{figure}

\section{Acknowledgements}
The first author would like to thank Tomaso Poggio and Lina Mroueh for useful discussions.

\newpage

\appendix
\section{One Bit Phase Retrieval}}

\subsection{Proofs of Propositions \ref{lem:compeq},\ref{lem:empiprocess},\ref{lem:radav}} \label{app:A}

\begin{proof}[Proof of Proposition \ref{lem:compeq}]
i- For $x\in \mathbb{C}^n, ||x||=1$.
\begin{equation}
\mathcal{E}^{x_0}(x)=\mathbb{E}\left(y\left(|\scalT{a^1}{x}|^2-|\scalT{a^2}{x}|^2\right)\right),
\end{equation}
where $y=sign(|\scalT{a^1}{x_0}|^2-|\scalT{a^2}{x_0}|^2)$.
Recall $a^1,a^2\sim \mathcal{C}\mathcal{N}(0,I_n)$ are complex Gaussian vectors, there exists $g,h\sim \mathcal{N}(0,\frac{1}{2})+i \mathcal{N}(0,\frac{1}{2}) \text{ i.i.d. and }G,H  \sim \mathcal{N}(0,\frac{1}{2})+i \mathcal{N}(0,\frac{1}{2})$  i.i.d.,
\begin{eqnarray*}
\scalT{a^1}{x_0}&=& g , \quad \scalT{a^1}{x}= \scalT{x_0}{x} g +\sqrt{1-|\scalT{x_0}{x}|^2} h.\\
\scalT{a^2}{x_0}&=&  G, \quad \scalT{a^2}{x}=\scalT{x_0}{x} G  +\sqrt{1-|\scalT{x_0}{x}|^2} H.\\
|\scalT{a^1}{x}|^2-|\scalT{a^2}{x}|^2&=& \left|\scalT{x_0}{x} g +\sqrt{1-|\scalT{x_0}{x}|^2} h\right|^2- \left|\scalT{x_0}{x}G  +\sqrt{1-|\scalT{x_0}{x}|^2} H\right|^2\\
&=& |\scalT{x_0}{x}|^2(|g|^2-|G|^2)+(1-|\scalT{x_0}{x}|^2)(|h|^2-|H|^2)\\ 
&+&2 \Re\left(\overline{\scalT{x_0}{x}}\sqrt{1-|\scalT{x_0}{x}|^2}( g^*h -G^*H) \right).
\end{eqnarray*}
Recall that $y=sign(\theta(|g|^2)-\theta(|G|^2))$. From Lemma \ref{lem:exp}, we know that $E_1=|g|^2$, and $E_2=|G|^2$ are two exponential independent random variables $Exp(1)$. 
\begin{eqnarray*}
\mathcal{E}^{x_0}(x)&=&\mathbb{E}\left(y\left(|\scalT{a^1}{x}|^2-|\scalT{a^2}{x}|^2\right)\right)\\
&=& |\scalT{x_0}{x}|^2 \mathbb{E} \left(sign(\theta(E_1)-\theta(E_2))(E_1-E_2)\right)\\
&=& \lambda  |\scalT{x_0}{x}|^2.
\end{eqnarray*}
ii- Let $y= Q^{\theta}_{A} (x_0)$, by (i) we have that, $$\mathbb{E}(\scalT{y(a^1a^{1,*}-a^2a^{2,*})}{xx^*}_{F})= \lambda \scalT{xx^*}{x_0x_0^*}_{F}, \forall x ,||x||=1.$$
This means that $C=\mathbb{E}(y(a^1a^{1,*}-a^2a^{2,*}))=\lambda x_0x_0^*$. Hence $C$ is a rank one matrix.\\
iii- By (ii) we have, $Cx_0=\lambda x_0x_0^*x_0=\lambda x_0$, since $||x_0||=1$.\\
iv- By (i) $\max_{||x||=1} x^*Cx=\lambda  \max_{||x||=1} |\scalT{x_0}{x}|^2$.
It is easy to see that  $x=e^{i\phi}x_0, \phi \in [0,2\pi]$ are maximizers of the right hand side of the equation. 

\end{proof}
\begin{proof}[Proof of Lemma \ref{lem:SNR}]
i. \textit{Noiseless:} \\$\lambda= \mathbb{E}(sign(E_1-E_2)(E_1-E_2))=\mathbb{E}(|E_1-E_2|)=1$,
since $E_1-E_2 \sim Exp(1)$.\\
ii.\textit{Noisy:} \\
\textit{Exponential Noise:}\\
Let $y=sign\left((E_1+\nu_1)- (E_2+\nu_2)\right)$.
\noindent Let $L= E_1-E_2$, $L$ follows a Laplace distribution with mean 0  and scale parameter $1$: $$L \sim Laplace(0,1).$$
Let $N= \nu_1-\nu_2$, $N$ follows a Laplace distribution, $N \sim Laplace(0,\frac{1}{\gamma})$.
\noindent It follows that:
\begin{eqnarray*}
\lambda&=&\mathbb{E}_{L,N}\left(sign(L+N)L\right)\\
&=&\mathbb{E}_{L}\left(\left(1-2\mathbb{P}_{N}(N\leq -L)\right) L\right)\\
&=& \mathbb{E}_{L}\left((1-2F_{N}(-L))L\right)\\
&=& \mathbb{E}_{L}\left\{\left(1-2\left(\frac{1}{2}+\frac{1}{2}sign(-L)\left(1-\exp(-\gamma |L|)\right)\right)\right)L\right\}\\
&=& \mathbb{E}_{L} (sign(L)(1-\exp(-\gamma |L|))L)\\
&=& \mathbb{E}_{L}  |L|(1-\exp(-\gamma |L|))\\
&=& 1- \int_{0}^{+\infty}z\exp(-\gamma z) \exp(-z)dz\\
&=& 1-\frac{1}{(1+\gamma)^2} >0. 
\end{eqnarray*}

\noindent Let $\sigma=\frac{1}{\gamma^2}$ be the variance of the exponential noise.
We conclude that:
$$\lambda=\frac{1+2\sqrt{\sigma}}{(1+\sqrt{\sigma})^2}.$$
\textit{Poisson Noise}:\\
$$\lambda=\mathbb{E}(sign(p_1-p_2)(E_1-E_2)),\quad p_1|E_1\sim Poisson\left(\frac{E_1}{\eta}\right),\quad  p_2|E_2\sim Poisson\left(\frac{E_2}{\eta}\right).$$
We know that:
$$S(E_1,E_2)=p_1-p_2|\left(E_1,E_2\right) \sim Skellam\left (\frac{E_1}{\eta},\frac{E_2}{\eta}\right). $$
Hence:
$$\lambda=\mathbb{E}\left(sign(S(E_1,E_2))(E_1-E_2)\right).$$
\end{proof}
iii. \textit{Distortion:}
\begin{eqnarray*}
y &=&sign(\tanh(\alpha E_1)-\tanh(\alpha E_2))\\
&=&sign(\tanh(\alpha(E_1-E_2)))\left(1-\tanh(\alpha E_1)\tanh(\alpha E_2)\right))\\
&=&sign(\tanh(\alpha(E_1-E_2))).sign\left(1-\tanh(\alpha E_1)\tanh(\alpha E_2)\right))\\
&=& sign(E_1-E_2)sign\left(1-\tanh(\alpha E_1)\tanh(\alpha E_2)\right))
\end{eqnarray*}

$$\lambda=\mathbb{E}(y(E_1-E_2))=\mathbb{E}\left(sign\left(1-\tanh(\alpha E_1)\tanh(\alpha E_2)\right)|E_1-E_2|\right)>0.$$

\begin{proof}[Proof of Lemma \ref{lem:ineq}]
$\mathcal{E}^{x_0}(x_0)-\mathcal{E}^{x_0}(x)=\lambda(1-|\scalT{x_0}{x}|^2)=\frac{\lambda}{2}|| xx^*-x_0x_0^*||^2_{F}$, since $x_0$ and $x$ are unitary. 
\end{proof}

\begin{proof}[Proof of Proposition \ref{lem:empiprocess}]
Following the  classical approach to study empirical risk minimization in statistical learning theory we have,
$$ \mathcal{E}^{x_0}(x_0)-\mathcal{E}^{x_0}(\hat{x}_m)\leq 2 \sup_{x ,||x||=1}\left|\hat {\mathcal{E}}^{x_0}(x)-\mathcal{E}^{x_0}(x)\right|.$$
$$\hat {\mathcal{E}}^{x_0}(x)-\mathcal{E}^{x_0}(x)=\left(x^* (\hat{C}_m-C)x\right)$$
Hence:
\begin{equation}
\frac{\lambda}{2}|| \hat{x}_m \hat{x}_m^*-x_0x_0^*||^2_{F}\leq   2 \sup_{x ,||x||=1}\left| x^*(\hat{C}_m-C)x\right|=2 \left|\left| \hat{C}_m-C\right|\right| ,
\label{eq:concentration}
\end{equation}
where $\hat{C}_m=\frac{1}{m}\sum_{i=1}^m C_i$, $ C_i= y_i(\Delta_i)$,$\Delta_i=a^1_ia^{1,*}_i-a^2_ia^{2,*}_i$ and $C=\mathbb{E}(y(a^1a^{1,*}-a^{2}a^{2,*}))=\lambda x_0x_0^*$.\\
\end{proof}

\begin{proof}[Proof of Proposition \ref{lem:radav}]
Let $$X_i =\frac{1}{m}(y_i(a^1_ia^{1,*}_i-a^2_ia^{2,*}_i) -\lambda x_0x_0^*).$$

\noindent We would like to get a bound on $\left|\left|\sum_{i=1}^m X_i\right|\right|$, the main technical issue goes to the fact that $||X_i||$ are not bounded almost surely.
We will address that issue by rejecting samples outside the ball of radius $\sqrt{M}$, where $M$ is defined in the following.\\

\noindent Let $M= 2n(1+\beta)^2$.
 Let $ E=\{(a^1,a^2),  ||a^1||^2\leq M \text{ and } ||a^2||^2\leq M\}$.
Let $$(\tilde{a^1}_i,\tilde{a^2}_i)= (a^1_i,a^2_i) \text{ if } (a^1_i,a^2_i)\in E \text{ and } 0 \text { otherwise}.$$
Let $$\tilde{y}_i=sign\left(|\scalT{a^1_i}{x_0}|^2-|\scalT{a^2_i}{x_0}|^2\right)  \text{ if } (a^1_i,a^2_i)\in E \text{ and } 0 \text { otherwise}. $$
Let $$\tilde{C}_m = \frac{1}{m}\sum_{i=1}^m \tilde{y}_i (\tilde{a}^1_i \tilde{a}^{1,*}_i-\tilde{a}^2_i \tilde{a}^{2,*}_i) \quad \tilde{C}=\mathbb{E}(\tilde{C}_m).$$
Note that $\tilde{C}_m$ is the sum of bounded random variable , so that we can use non commutative  matrix Bernstein inequality given in Theorem \ref{eq:Bernstein}, in order to bound  $ \left|\left|\tilde{C_m}-\tilde{C}\right|\right|$.
On the other hand by  the triangular inequality we have:
\begin{equation}
\left|\left|C_m-C\right|\right|\leq \left|\left|C_m-\tilde{C}_m\right|\right|+ \left|\left|\tilde{C_m}-\tilde{C}\right|\right|+\left|\left|\tilde{C}-C\right|\right|
\end{equation}
\textbf{Bounding $ \left|\left|C_m-\tilde{C}_m\right|\right|$:}\\

\noindent Note that $||a||^2 \sim \chi^2_{2n}$, $||a||$ is a Lipchitz function of Gaussian with constant one. 
A Gaussian concentration bound implies;
\begin{equation}
\mathbb{P}(||a_i||^2 \geq (\sqrt{2n}+t)^2)\leq e^{-\frac{t^2}{2}}.
\end{equation} 
Setting $t=\beta\sqrt{2n}$, it follows that:
\begin{equation}
\mathbb{P}(||a_i||^2 \geq 2n(1+\beta)^2)\leq e^{-\beta^2 n}.
\end{equation} 
\begin{eqnarray*}
\mathbb{P}(\max_{i=1,\dots m, j=1,2}||a^j_i||^2>M)&\leq& 2m \mathbb{P}(||a||^2> M)\\
&\leq& 2m e^{-\beta^2 n}.\\
\end{eqnarray*}
It follows that :$$\left|\left|C_m-\tilde{C}_m\right|\right|=0 \text{  with probability at least  } 1-2me^{-\beta^2 n}.$$
\noindent \textbf{Bounding $\left|\left|\tilde{C_m}-\tilde{C}\right|\right|:$}\\

\noindent Let $$\tilde{X}_i =\frac{1}{m}\left(\tilde{y}_i(\tilde{a}^1_i\tilde{a}^{1,*}_i-\tilde{a}^2_i\tilde{a}^{2,*}_i) -\tilde{C}\right).$$
It is easy to see that $||\tilde{C}||\leq ||C||=\lambda$.
$$||\tilde{X}_i|| \leq \frac{1}{m}( ||\tilde{y}_i (\tilde{a}^1_i\tilde{a}^{1,*}_i) ||+  ||\tilde{y}_i (\tilde{a}^2_i\tilde{a}^{2,*}_i) ||+||\tilde{C}||) = \frac{1}{m}\left( ||\tilde{a}^{1}_i||^2+||\tilde{a}^{2}_i||^2+\lambda \right)\leq \frac{2M+\lambda}{m}\leq \frac{4M+\lambda}{m}=K.$$
\begin{eqnarray*}
\tilde{X}^2_i&=& \frac{1}{m^2}\left(\tilde{y}_i(\tilde{a}^1_i\tilde{a}^{1,*}_i-\tilde{a}^2_i\tilde{a}^{2,*}_i) -\tilde{C}\right)\left(\tilde{y}_i(\tilde{a}^1_i\tilde{a}^{1,*}_i-\tilde{a}^2_i\tilde{a}^{2,*}_i) -\tilde{C}\right)\\
&=& \frac{1}{m^2}\left( ||\tilde{a}^1_i||^2\tilde{a}^1_i\tilde{a}^{1,*}_i +  ||\tilde{a}^2_i||^2\tilde{a}^2_i\tilde{a}^{2,*}_i-\scalT{\tilde{a}^1_i}{\tilde{a}^2_i}\tilde{a}^1_i\tilde{a}^{2,*}_i - \scalT{\tilde{a}^2_i}{\tilde{a}^1_i}\tilde{a}^2_i\tilde{a}^{1,*}_i+\tilde{C}^2- C\tilde{C_i} -\tilde{C_i}C \right)
\end{eqnarray*}
Note that $\mathbb{E}(\scalT{a^1_i}{a^2_i}a^1_ia^{2,*}_i )=\mathbb{E}(\scalT{a^2_i}{a^1_i}a^2_ia^{1,*}_i)= I$ by independence, $\mathbb{E}(a^1_ia^{1,*}_i)=I$, and 
$\mathbb{E}(\tilde{C}\tilde{C}_i)= \tilde{C}^2$, by definition.\\
On the other hand $\mathbb{E}(\scalT{\tilde{a}^1_i}{\tilde{a}^2_i}\tilde{a}^1_i\tilde{a}^{2,*}_i )$ is zero on the off diagonal and less than one on the diagonal. 
Hence $$||\mathbb{E}(\scalT{\tilde{a}^1_i}{\tilde{a}^2_i}\tilde{a}^1_i\tilde{a}^{2,*}_i )|| \leq 1.$$
Also $\mathbb{E}(\tilde{a^1}_i\tilde{a^{1,*}}_i)$ is zero on the off diagonal and less than one on the diagonal, hence:
$$||\mathbb{E}(\tilde{a}^1_i\tilde{a}^{1,*}_i)||\leq 1.$$
It follows that $$||\mathbb{E}( ||\tilde{a}^1_i||^2\tilde{a}^1_i\tilde{a}^{1,*}_i ||\leq M ||\mathbb{E}(\tilde{a}^1_i\tilde{a}^{1,*}_i)||\leq M.$$
Taking the operator norm we have:
$$||\mathbb{E}(\tilde{X}^2_i)||\leq\frac{1}{m^2}(2M+\lambda^2+2)\leq\frac{1}{m^2}(4M+\lambda^2)  .$$
Finally: 
$$\left|\left|\sum_{i=1}^m \mathbb{E}(\tilde{X}^2_i)\right|\right|\leq m \max_{i} \mathbb{E}||\tilde{X}^2_i||\leq \frac{4M+\lambda^2}{m}=\sigma^2.$$
The rest of the proof of this part is an adaptation of the proof of Theorem \ref{covariance}  in \cite{VershyninReview} on covariance estimation of heavy tailed matrices.\\
We are now ready to apply the non commutative Bernstein's inequality:
$$\mathbb{P}\left( \left|\left|\sum_{i=1}^m \tilde{X}_i\right|\right|\geq \epsilon \right)\leq 2n.\exp\left(-c \min\left(\frac{\epsilon^2}{\sigma^2},\frac{\epsilon}{K}\right)\right) 
\leq 2n.\exp\left(-c \min\left(\frac{\epsilon^2}{4M+\lambda^2},\frac{\epsilon}{4M+\lambda}\right).m\right) $$
Clearly $\lambda \leq 1$, by definition. Hence $\lambda < 4M$.
$$\min\left(\frac{\epsilon^2}{4M+\lambda^2},\frac{\epsilon}{4M+\lambda}\right)=\frac{1}{4M}\min\left(\frac{\epsilon^2}{\lambda(\frac{\lambda}{4M}+\frac{1}{\lambda})} ,\frac{\epsilon}{1+\frac{\lambda}{4M}}\right) \geq \frac{1}{4M} \min\left(\frac{\epsilon^2}{\lambda},\epsilon\right) $$
Let $\epsilon= \max(\sqrt{\lambda}\delta,\delta^2)$, $\delta= s \sqrt{\frac{4M}{m}}$.
It follows that:\\
$$\mathbb{P}\left( \left|\left|\sum_{i=1}^m \tilde{X}_i\right|\right|\geq \epsilon \right)\leq 2n \exp\left(-c \delta^2 \frac{m}{4M}\right)=2n\exp(-cs^2).$$
Therefore we have with a probability at least $1-2n\exp(-cs^2)$ : 
$$||\tilde{C}_m-\tilde{C}||\leq \max(\sqrt{\lambda}\delta,\delta^2) \quad \delta =s\sqrt{\frac{4M}{m}},$$
Setting $s= t \sqrt{\log(n)}$, we have finally:
$$||\tilde{C}_m-\tilde{C}|| \leq \max(\sqrt{\lambda}\delta,\delta^2) \quad \delta =t\sqrt{\frac{4M\log(n)}{m}} \text{with probability at least } 1-n^{-t^2}.$$

\noindent \textbf{Bounding $\left|\left|\tilde{C}-C\right|\right|$:}\\
By the rotation invariance of Gaussian we can assume $x_0=(1,0,\dots,0)$.\\
The off diagonal terms of $\mathbb{E}(\tilde{y}(\tilde{a}^1\tilde{a}^{1,*}-\tilde{a}^2\tilde{a}^{2,*})$ are zero.
The same holds for  $\mathbb{E}(y(a^1a^{1,*}-a^2 a^{2,*})$.
The only term that is non zero  on the diagonal is first one. 
\begin{eqnarray*}
||\tilde{C}-C|| &=&\mathbb{E}(y(|a^1_1|^2-|a^2_1|^2)1_{(a^1,a^2)\notin E})\\
&\leq&\left( \mathbb{E}(y^2(|a^1_1|^2-|a^2_1|^2)^2)\right)^{\frac{1}{2}}( \mathbb{E}(1_{E^c}))^{\frac{1}{2}}\\
&=& (\mathbb{E}(|a^1_1|^4+|a^2_1|^4-2 |a^1_1|^2|a^2_2|^2) )^{\frac{1}{2}} \sqrt{\mathbb{P}(E^c)}\\
&\leq&\sqrt{2+2-2}\sqrt{2 e^{-\beta^2n}}\\
&=&2  e^{-\beta^2n/2}.
\end{eqnarray*}

\noindent \textbf{Putting all together:}\\
Setting $\beta=t=\sqrt{2}$. For $2m=\hat{c} n $, $1< \hat{c}<n$.
We have with probability at least $1-O(\frac{1}{n^2})$ , since $(1-2me^{-2n})\sim (1-O(1/n^2))$, for sufficiently large $n$:
$$||\hat{C}_m-C||\leq c \sqrt{\frac{\lambda M \log(n)}{m}}+ 2e^{-n}$$
where $M =2n(1+\sqrt{2})^2$.
There exists a constant $c',\epsilon \in [0,1]$ such that:\\
$$\text{For}\quad  m \geq \frac{c'  n \log(n)}{\lambda \epsilon^2}\quad  ||\hat{C}_m-C||\leq {\epsilon} \lambda \text{ with probability at least } 1-O(n^{-2}).$$
By equation \eqref{eq:concentration} we conclude :
$$\text{For}\quad  m \geq \frac{c^{''} n \log(n)}{\lambda \epsilon^2}\quad ||\hat{x}_m-x_0e^{i\phi}||^2\leq|| \hat{x}_m \hat{x}_m^*-x_0x_0^*||^2_{F}\leq \epsilon  \text{ with probability at least } 1-O(n^{-2}).$$
where $\phi$ is a global phase.

\end{proof}

\subsection{Technical Tools}
Here, we collect a few technical results needed in the proofs.
 \begin{lemma}[Spacing of Exponentials]
For  $x_0 \in \mathbb{C}^n$, and $a \sim \mathcal{N}(0,\frac{1}{2}I_{n})+ i\mathcal{N}(0,\frac{1}{2}I_{n})$, $E= |\scalT{a}{x_0}|^2$ follows an exponential distribution with parameter one (see \cite{ThesisLina} for a proof).\\
Moreover  \cite{Orderstatistics},   if we let  and $a^1,a^2 \text{ i.i.d. } \sim \mathcal{N}(0,\frac{1}{2}I_{n})+ i\mathcal{N}(0,\frac{1}{2}I_{n})$, let $E_1(x_0)=  |\scalT{a^1}{x_0}|^2$, and $E_2
(x_0)=  |\scalT{a^2}{x_0}|^2$, and $E^{(1)}(x_0)$ and $E^{(2)}(x_0)$, then the corresponding order statistics i.e $E^{(2)}(x_0)\geq E^{(1)}(x_0)$.
$\Delta(x_0,x_0)=  |\scalT{a^{(2)}}{x_0}|^2-   |\scalT{a^{(1)}}{x_0}|^2$ is also exponentially distributed with parameter one. $\Delta$ is called spacing of order statistics of exponentials.
\label{lem:exp}
 \end{lemma}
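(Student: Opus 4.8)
The plan is to treat the two assertions separately: first that a single squared phase-less measurement $E=|\scalT{a}{x_0}|^2$ is $\text{Exp}(1)$, and then that the spacing between two independent such measurements is again $\text{Exp}(1)$. For the first part I would observe that $\scalT{a}{x_0}$ is a linear functional of the isotropic complex Gaussian vector $a\sim \mathcal{C}\mathcal{N}(0,I_n)$, whose coordinates have independent real and imaginary parts distributed as $\mathcal{N}(0,\frac12)$. Hence $\scalT{a}{x_0}$ is itself a centered complex Gaussian, and a direct second-moment computation gives variance $\sum_j |x_{0,j}|^2\,\mathbb{E}|a_j|^2 = ||x_0||^2 = 1$, using the implicit normalization $||x_0||=1$. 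Thus $\scalT{a}{x_0}\sim \mathcal{C}\mathcal{N}(0,1)$, so writing $\scalT{a}{x_0}=X+iY$ the parts $X,Y$ are independent $\mathcal{N}(0,\frac12)$. Then $E=X^2+Y^2$, and since $\sqrt{2}X,\sqrt{2}Y$ are standard normals, $2E\sim\chi^2_2$; the $\chi^2_2$ density $\frac12 e^{-t/2}$ transforms to the density $e^{-u}$ on $\mathbb{R}_+$, i.e. $E\sim\text{Exp}(1)$. The cleanest way to justify the Gaussian step is to invoke the unitary invariance of the isotropic complex Gaussian, which lets one assume without loss of generality $x_0=e_1$ and read off $\scalT{a}{e_1}=\bar a_1\sim\mathcal{C}\mathcal{N}(0,1)$ immediately.

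For the second part I would write the spacing as $\Delta = E^{(2)}-E^{(1)} = |E_1-E_2|$ with $E_1,E_2\sim\text{Exp}(1)$ independent, and then offer either of two equivalent routes. The direct route computes the law of the difference: convolving the $\text{Exp}(1)$ density with the reflected $\text{Exp}(1)$ density yields the Laplace density $\frac12 e^{-|z|}$, so folding at the origin gives $|E_1-E_2|$ the density $e^{-u}$ on $\mathbb{R}_+$, that is $\text{Exp}(1)$. The more conceptual route uses memorylessness: conditioned on $\{E_1=E^{(1)}\}$ (an event of probability $\frac12$), the excess $E_2-E_1$ of the larger variable over the minimum is, by the lack-of-memory property applied at level $E^{(1)}$, a fresh $\text{Exp}(1)$ variable independent of $E^{(1)}$, and symmetry of the two cases gives $\Delta\sim\text{Exp}(1)$. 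This is precisely the $n=2$ instance of the R\'enyi representation for spacings of exponential order statistics, which is the result cited from \cite{Orderstatistics}.

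I expect no serious obstacle: both claims are standard and are attributed to \cite{ThesisLina,Orderstatistics}. The only points requiring genuine care are bookkeeping of the complex-Gaussian normalization, namely checking that the convention $a\sim\mathcal{N}(0,\frac12 I_n)+i\mathcal{N}(0,\frac12 I_n)$ produces $\mathbb{E}|a_j|^2=1$ and hence unit variance for $\scalT{a}{x_0}$, so that the rate of the resulting exponential is exactly $1$ rather than some rescaling; and, in the spacing argument, getting the conditioning in the memorylessness step correct (equivalently, folding the Laplace density at the right point). Everything else is a routine one-line transformation.
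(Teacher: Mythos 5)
Your proof is correct. The paper does not actually prove this lemma --- both claims are deferred to the references \cite{ThesisLina} and \cite{Orderstatistics} --- and your argument (unitary invariance plus the identity $2E\sim\chi^2_2$, hence $E\sim \mathrm{Exp}(1)$, for the first claim, and the folded-Laplace or memorylessness computation for the spacing) is exactly the standard derivation those citations supply, including the correct observation that the normalization $\|x_0\|=1$ is implicitly assumed throughout the paper's use of this lemma.
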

 \begin{lemma} [Laplace Exponential]
$U\sim \exp (\gamma)$, $V \sim \exp (\gamma)$ , $U$ and $V$ are independent then $U-V \sim Laplace(0,\frac{1}{\gamma})$. 
The CDF of $U-V$ is :
$$F_{U-V}(z)=\frac{1}{2}+\frac{1}{2}sign(z)(1-\exp(-\gamma |z|)).$$
\end{lemma}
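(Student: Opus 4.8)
The plan is to compute the density of $Z=U-V$ directly by convolution, identify it as the Laplace density, and then integrate to obtain the stated CDF. First I would record the two ingredient densities: $U$ has density $f_U(u)=\gamma e^{-\gamma u}$ supported on $u\ge 0$, while $-V$ has density $f_{-V}(w)=\gamma e^{\gamma w}$ supported on $w\le 0$. Since $U$ and $V$ are independent, $Z=U+(-V)$ has density given by the convolution
$$f_Z(z)=\int_{-\infty}^{\infty} f_U(u)\,f_{-V}(z-u)\,du.$$

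Next I would evaluate this integral, the only delicate point being to track the two support constraints $u\ge 0$ and $z-u\le 0$ (i.e. $u\ge z$) simultaneously, which forces the lower limit to be $\max(0,z)$. This gives
$$f_Z(z)=\gamma^2 e^{\gamma z}\int_{\max(0,z)}^{\infty} e^{-2\gamma u}\,du=\frac{\gamma}{2}\,e^{\gamma z}\,e^{-2\gamma \max(0,z)}.$$
Splitting into the cases $z\ge 0$ and $z<0$ then collapses the right-hand side to $\frac{\gamma}{2}e^{-\gamma|z|}$, which is exactly the density of $Laplace(0,\tfrac{1}{\gamma})$ (the Laplace density with location $0$ and scale $b$ is $\frac{1}{2b}e^{-|z|/b}$, and here $b=1/\gamma$). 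This establishes $U-V\sim Laplace(0,\tfrac{1}{\gamma})$.

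Finally I would integrate the density to obtain the CDF, again in two cases. For $z<0$ one gets $F_Z(z)=\int_{-\infty}^z \frac{\gamma}{2}e^{\gamma t}\,dt=\frac{1}{2}e^{\gamma z}$; for $z\ge 0$, using that the density is symmetric about the origin, $F_Z(z)=\frac{1}{2}+\int_0^z \frac{\gamma}{2}e^{-\gamma t}\,dt=\frac{1}{2}+\frac{1}{2}(1-e^{-\gamma z})$. Both cases are captured uniformly by $F_Z(z)=\frac{1}{2}+\frac{1}{2}sign(z)(1-e^{-\gamma|z|})$, which matches the claim.

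The computation is elementary, so there is no substantial obstacle; the only thing to watch is the bookkeeping of supports in the convolution (namely the $\max(0,z)$ lower limit) and the subsequent sign split. As an independent cross-check I would also note the moment-generating-function route: $M_U(t)=\gamma/(\gamma-t)$ and $M_{-V}(t)=\gamma/(\gamma+t)$ yield $M_Z(t)=\gamma^2/(\gamma^2-t^2)$ for $|t|<\gamma$, which is precisely the MGF of $Laplace(0,\tfrac{1}{\gamma})$; by uniqueness of MGFs this confirms the distributional identity without recomputing the density.
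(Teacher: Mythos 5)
Your proof is correct: the convolution computation, the $\max(0,z)$ lower limit, the resulting density $\tfrac{\gamma}{2}e^{-\gamma|z|}$, and the two-case integration to the stated CDF all check out, and your interpretation of $\exp(\gamma)$ as the rate parametrization (mean $1/\gamma$) matches the paper's usage elsewhere (e.g.\ in Lemma~\ref{lem:SNR}). Note that the paper itself states this lemma without proof, treating it as a standard fact used in the exponential-noise computation of $\lambda$, so there is no paper proof to compare against; your argument supplies exactly the missing elementary details, and the MGF cross-check $M_Z(t)=\gamma^2/(\gamma^2-t^2)$ is a valid independent confirmation.
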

 \begin{lemma}[Uniform Ratio \cite{ThesisLina} ]
If $X$ and $Y$ are two independent chi-square variables with $2a$ and $2b$ degrees of freedom respectively, then $Z = \frac{X}{X+Y}$
has the beta distribution with parameter $a$ and $b$, $\beta(a,b)$.\\
For $a=b=1$: 
If $X$ and $Y$ are two independent Exponential random variable with mean one  then $Z = \frac{X}{X+Y}$
is uniformly distributed $Unif[0,1]$.
\label{lem:unif}
\end{lemma}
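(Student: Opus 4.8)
The plan is to reduce both assertions to a single change-of-variables computation on Gamma densities, exploiting the fact that the ratio $Z=X/(X+Y)$ is insensitive to a common scale. First I would observe that a chi-square variable with $2a$ degrees of freedom is a Gamma variable with shape $a$ and scale $2$, so it suffices to treat $X\sim\text{Gamma}(a,\theta)$ and $Y\sim\text{Gamma}(b,\theta)$ independent with an arbitrary common scale $\theta$; their joint density is
$$f_{X,Y}(x,y)=\frac{1}{\Gamma(a)\Gamma(b)\theta^{a+b}}\,x^{a-1}y^{b-1}e^{-(x+y)/\theta},\qquad x,y>0.$$
The common scale is the single structural fact the argument hinges on.

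Next I would introduce the variables $S=X+Y$ and $Z=X/(X+Y)$, whose inverse map is $x=sz$, $y=s(1-z)$ for $s>0$ and $z\in(0,1)$, a bijection onto the open first quadrant. The Jacobian determinant of $(s,z)\mapsto(x,y)$ is
$$\det\begin{pmatrix} z & s \\ 1-z & -s \end{pmatrix}=-s,$$
so $|J|=s$, and substituting yields
$$f_{S,Z}(s,z)=\frac{1}{\Gamma(a)\Gamma(b)\theta^{a+b}}\,s^{a+b-1}e^{-s/\theta}\,z^{a-1}(1-z)^{b-1}.$$

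The key observation is that this density factorizes as a function of $s$ alone times a function of $z$ alone. Redistributing the constant $\Gamma(a+b)$, the $s$-factor is precisely the $\text{Gamma}(a+b,\theta)$ density and the $z$-factor is the $\text{Beta}(a,b)$ density $\frac{\Gamma(a+b)}{\Gamma(a)\Gamma(b)}z^{a-1}(1-z)^{b-1}$. Integrating out $s$ therefore gives $Z\sim\text{Beta}(a,b)$ (and, as a by-product, $S$ and $Z$ independent), which is the first claim; note that the $z$-marginal does not involve $\theta$, so the value of the common scale is immaterial. For the second claim I would specialize to $a=b=1$: an Exponential variable of mean one is $\text{Gamma}(1,1)$, and $\text{Beta}(1,1)$ has density $\frac{\Gamma(2)}{\Gamma(1)\Gamma(1)}=1$ on $(0,1)$, i.e. $\text{Unif}[0,1]$.

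There is no genuine obstacle here; the only points requiring care are the bookkeeping of the normalizing constants so that each factor is recognized as a legitimate probability density, and checking that $(s,z)\mapsto(x,y)$ is a bijection from $(0,\infty)\times(0,1)$ onto $\{x,y>0\}$ so that the change-of-variables formula applies.
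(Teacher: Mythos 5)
Your proof is correct and complete: the reduction of chi-square to Gamma with a common scale, the $(S,Z)$ change of variables with Jacobian $s$, the factorization into a $\mathrm{Gamma}(a+b,\theta)$ factor and a $\mathrm{Beta}(a,b)$ factor, and the specialization $a=b=1$ are all accurate. The paper does not supply its own proof of this lemma --- it simply cites \cite{ThesisLina} --- and the argument you give is exactly the standard one such a reference would contain, so there is nothing further to reconcile.
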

 

\begin{theorem}[Non commutative Bernstein Inequality \cite{VershyninReview} ]\label{CovConc}
Consider a finite sequence $X_i$ of independent centered self adjoint random $n\times n$ matrices. Assume we have for some numbers $K$ and $\sigma$ that:
$$||X_i||\leq K \text{ almost surely} \quad ||\sum_{i}\mathbb{E}X^2_i||\leq \sigma^2.$$
Then, for every $t>0$, we have:
$$\mathbb{P}\{ ||\sum_{i}X_i||>t\}\leq 2n \exp\left(\frac{-t^2/2}{\sigma^2+Kt/3}\right).$$
\label{eq:Bernstein}
\end{theorem}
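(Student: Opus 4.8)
The plan is to prove the bound by the \emph{matrix Laplace-transform method} (the approach of Ahlswede--Winter, later sharpened by Tropp). Write $S=\sum_i X_i$. The first step is a matrix analogue of the Chernoff bound: for any $\theta>0$, since $e^{\theta S}\succeq 0$ one has $e^{\theta \lambda_{\max}(S)}=\lambda_{\max}(e^{\theta S})\leq \operatorname{tr} e^{\theta S}$, so Markov's inequality gives
$$\mathbb{P}\{\lambda_{\max}(S)\geq t\}\leq e^{-\theta t}\,\mathbb{E}\operatorname{tr} e^{\theta S}.$$
Everything then reduces to controlling the trace moment generating function $\mathbb{E}\operatorname{tr} e^{\theta S}$ and optimizing over $\theta$ at the end.

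The crux --- and the step I expect to be the main obstacle --- is that because the $X_i$ do not commute, $e^{\theta S}$ does not factor as $\prod_i e^{\theta X_i}$, so independence cannot be exploited directly. I would resolve this with Lieb's concavity theorem, which states that $A\mapsto \operatorname{tr}\exp(H+\log A)$ is concave on positive-definite $A$; combining it with Jensen's inequality and peeling off the summands one at a time yields the subadditivity bound
$$\mathbb{E}\operatorname{tr} e^{\theta S}\leq \operatorname{tr}\exp\Big(\sum_i \log \mathbb{E}\, e^{\theta X_i}\Big).$$
(Alternatively one can iterate the Golden--Thompson inequality $\operatorname{tr} e^{A+B}\leq \operatorname{tr}(e^A e^B)$, at the cost of a slightly lossier constant; I would prefer the Lieb route since it matches the stated constant.)

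It then remains to bound each matrix MGF. The scalar inequality $e^{\theta x}\leq 1+\theta x+g(\theta)x^2$ holds for all $|x|\leq K$ with $g(\theta)=\frac{\theta^2/2}{1-\theta K/3}$ and $0<\theta<3/K$ (it follows from $e^y-1-y\leq \tfrac{y^2/2}{1-y/3}$ via $k!\geq 2\cdot 3^{k-2}$). Transferring it spectrally to $X_i$, whose spectrum lies in $[-K,K]$ since $\|X_i\|\leq K$, gives $e^{\theta X_i}\preceq I+\theta X_i+g(\theta)X_i^2$; taking expectations and using $\mathbb{E}X_i=0$ together with $I+A\preceq e^A$ yields $\mathbb{E}\,e^{\theta X_i}\preceq \exp\!\big(g(\theta)\,\mathbb{E}X_i^2\big)$, hence $\log\mathbb{E}\,e^{\theta X_i}\preceq g(\theta)\,\mathbb{E}X_i^2$. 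Summing and using monotonicity of the trace exponential under the semidefinite order,
$$\mathbb{E}\operatorname{tr} e^{\theta S}\leq \operatorname{tr}\exp\Big(g(\theta)\sum_i \mathbb{E}X_i^2\Big)\leq n\,\exp\!\big(g(\theta)\,\sigma^2\big),$$
where the last step uses $\operatorname{tr}\exp(M)\leq n\,e^{\|M\|}$ for $M\succeq 0$ and $\big\|\sum_i \mathbb{E}X_i^2\big\|\leq \sigma^2$. Substituting back, choosing $\theta=t/(\sigma^2+Kt/3)$ to minimize $-\theta t+g(\theta)\sigma^2$, produces $\mathbb{P}\{\lambda_{\max}(S)\geq t\}\leq n\exp\!\big(\tfrac{-t^2/2}{\sigma^2+Kt/3}\big)$. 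Finally, running the identical argument on $-S=\sum_i(-X_i)$ (same $K$, same $\sigma^2$) and using $\|S\|=\max(\lambda_{\max}(S),\lambda_{\max}(-S))$ with a union bound supplies the factor $2n$ and completes the statement.
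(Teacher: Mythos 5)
Your proposal is correct: the matrix Chernoff step, the Lieb-based subadditivity bound $\mathbb{E}\operatorname{tr}e^{\theta S}\leq \operatorname{tr}\exp\bigl(\sum_i\log\mathbb{E}e^{\theta X_i}\bigr)$, the scalar estimate $e^{\theta x}\leq 1+\theta x+g(\theta)x^2$ with $g(\theta)=\frac{\theta^2/2}{1-\theta K/3}$ transferred spectrally (valid since the spectrum of each $X_i$ lies in $[-K,K]$, and the optimal $\theta=3t/(3\sigma^2+Kt)$ indeed satisfies $\theta<3/K$), and the union bound over $\pm S$ all check out and reproduce exactly the stated exponent $\frac{-t^2/2}{\sigma^2+Kt/3}$ with the factor $2n$. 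Note that the paper itself offers no proof of this statement --- it is imported as a black-box technical tool from \cite{VershyninReview}, which in turn attributes it to Tropp's refinement of the Ahlswede--Winter method --- and your Laplace-transform-plus-Lieb argument is precisely the standard proof in that cited literature, so there is no divergence from the paper to report.
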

\begin{theorem}[Covariance Estimation for Arbitrary distributions \cite{VershyninReview}]
Consider a distribution with covariance matrix $\Sigma$ supposed in some centered ball whose radius we denote $\sqrt{M}$.
Let $\Sigma_m$ be the empirical covariance. 
Let $\epsilon \in [0,1]$, and $t \geq 1$. Then the following holds with probability at least $1-n^{-t^2}$:
$$\text{ If } m\geq C (t/\epsilon)^2 ||\Sigma||^{-1}M\log(n) \text{ then } ||\Sigma_m-\Sigma||\leq \epsilon ||\Sigma||. $$
\label{covariance}
\end{theorem}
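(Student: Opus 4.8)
The plan is to recognize this as a standard matrix concentration statement and derive it from the non-commutative Bernstein inequality (Theorem~\ref{eq:Bernstein}). Writing the samples as centered self-adjoint matrices, set $Z_i = \frac{1}{m}(X_iX_i^* - \Sigma)$, so that $\Sigma_m - \Sigma = \sum_{i=1}^m Z_i$, where each $Z_i$ is independent, Hermitian, and $\mathbb{E} Z_i = 0$. It then suffices to supply the two quantities Bernstein requires: an almost-sure operator-norm bound $K$ on each $Z_i$ and a bound $\sigma^2$ on $||\sum_i \mathbb{E} Z_i^2||$.

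First I would establish the uniform bound. Since the distribution is supported in the centered ball of radius $\sqrt{M}$, we have $||X_iX_i^*|| = ||X_i||^2 \le M$ almost surely, and moreover $||\Sigma|| = ||\mathbb{E}(XX^*)|| \le \mathbb{E}||X||^2 \le M$. Hence $||Z_i|| \le \frac{1}{m}(M + ||\Sigma||) \le \frac{2M}{m} =: K$.

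Next comes the variance term, which is the crux of the argument. A direct expansion gives $\mathbb{E} Z_i^2 = \frac{1}{m^2}\big(\mathbb{E}(||X||^2 XX^*) - \Sigma^2\big)$. The boundedness lets me control the leading term in the positive-semidefinite order: because $||X||^2 \le M$ pointwise, $\mathbb{E}(||X||^2 XX^*) \preceq M\, \mathbb{E}(XX^*) = M\Sigma$, and since $\Sigma^2 \succeq 0$ the subtraction only helps, so $\mathbb{E} Z_i^2 \preceq \frac{M}{m^2}\Sigma$. Summing the $m$ identical terms, $||\sum_i \mathbb{E} Z_i^2|| \le \frac{M\,||\Sigma||}{m} =: \sigma^2$.

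Finally I would apply Bernstein with deviation $t' = \epsilon\,||\Sigma||$ and read off the regime. Since $\epsilon \le 1$, the linear term satisfies $Kt'/3 = \frac{2M\epsilon\,||\Sigma||}{3m} \le \sigma^2$, so $\sigma^2 + Kt'/3 \le 2\sigma^2$ and the exponent in Theorem~\ref{eq:Bernstein} has magnitude at least $\frac{(t')^2/2}{2\sigma^2} = \frac{\epsilon^2\,||\Sigma||\,m}{4M}$. Thus the failure probability is at most $2n\exp\big(-c\,\epsilon^2\,||\Sigma||\,m / M\big)$; substituting $m \ge C (t/\epsilon)^2 ||\Sigma||^{-1} M \log n$ forces the exponent to exceed $t^2\log n$ (taking $C$ large enough to absorb the prefactor $2n$), which yields the claimed bound $1 - n^{-t^2}$. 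I expect the variance step to be the main obstacle: obtaining the sharp dependence on $||\Sigma||$ rather than a cruder $M^2/m$ bound relies on the semidefinite domination $\mathbb{E}(||X||^2 XX^*) \preceq M\Sigma$, and it is precisely this extra factor of $||\Sigma||$ that produces the $||\Sigma||^{-1}$ scaling in the sample complexity and guarantees that we sit in the sub-Gaussian tail regime of Bernstein.
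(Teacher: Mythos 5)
Your proof is correct. Note that the paper does not actually prove this statement: it is imported verbatim from \cite{VershyninReview} as a technical tool, so there is no internal proof to compare against. Your derivation is the standard one (and essentially the argument Vershynin himself gives): reduce to the non-commutative Bernstein inequality by centering, get the uniform bound $K = 2M/m$ from the support assumption, and---the key step, which you identify correctly---obtain the variance bound $\lVert \sum_i \mathbb{E} Z_i^2 \rVert \le M\lVert \Sigma\rVert/m$ via the semidefinite domination $\mathbb{E}(\lVert X\rVert^2 XX^*) \preceq M\Sigma$, which is exactly what produces the $\lVert\Sigma\rVert^{-1}$ factor in the sample complexity. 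The final bookkeeping (checking $Kt'/3 \le \sigma^2$ for $\epsilon \le 1$, so the sub-Gaussian branch of Bernstein applies, and absorbing the prefactor $2n$ into the constant $C$ using $t\ge 1$) is also sound. One implicit convention worth making explicit: here ``covariance'' means the second-moment matrix $\Sigma = \mathbb{E}(XX^*)$ and $\Sigma_m = \frac{1}{m}\sum_i X_iX_i^*$, which is how the result is used elsewhere in the paper (e.g.\ in the proof of Theorem \ref{theo:Init}).
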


\section{Weighted one Bit Phase Retrieval}
\begin{proof}[Proof of Proposition \ref{lem:weiCorrect}]
Recall that $\mathcal{E}^{x_0}(x)=x^*C x, $where  $C=\mathbb{E}\left(y (R^1a^1a^{1,*}-R^2a^{2}a^{2,*})\right).$ 
where $y=sign(b^1-b^2)$, $R^1= \frac{b^1}{b^1+b^2}$,$R^2=\frac{b^2}{b^1+b^2}$, and $b^j=|\scalT{a^j}{x_0}|^2, j=1,2$.
Recall $a^1,a^2\sim \mathcal{C}\mathcal{N}(0,I_n)$ are complex Gaussian vectors, there exists $g,h\sim \mathcal{N}(0,\frac{1}{2})+i \mathcal{N}(0,\frac{1}{2}) \text{ i.i.d. and }G,H  \sim \mathcal{N}(0,\frac{1}{2})+i \mathcal{N}(0,\frac{1}{2})$  i.i.d.,
\begin{eqnarray*}
\scalT{a^1}{x_0}&=& g , \quad \scalT{a^1}{x}= \scalT{x_0}{x} g +\sqrt{1-|\scalT{x_0}{x}|^2} h.\\
\scalT{a^2}{x_0}&=&  G, \quad \scalT{a^2}{x}=\scalT{x_0}{x} G  +\sqrt{1-|\scalT{x_0}{x}|^2} H.\\
R^1|\scalT{a^1}{x}|^2-R^2|\scalT{a^2}{x}|^2&=& R^1\left|\scalT{x_0}{x} g +\sqrt{1-|\scalT{x_0}{x}|^2} h\right|^2-R^2 \left|\scalT{x_0}{x}G  +\sqrt{1-|\scalT{x_0}{x}|^2} H\right|^2\\
&=& |\scalT{x_0}{x}|^2(R^1|g|^2-R^2|G|^2)+(1-|\scalT{x_0}{x}|^2)(R^1|h|^2-R^2|H|^2)\\ 
&+&2 \Re\left(\overline{\scalT{x_0}{x}}\sqrt{1-|\scalT{x_0}{x}|^2}(R^1 g^*h -R^2G^*H) \right)\\
&=&  |\scalT{x_0}{x}|^2(\frac{b^1}{b^1+b^2}b^1-\frac{b^2}{b^1+b^2}b^2)+(1-|\scalT{x_0}{x}|^2)(R^1|h|^2-R^2|H|^2)\\ 
&+&2 \Re\left(\overline{\scalT{x_0}{x}}\sqrt{1-|\scalT{x_0}{x}|^2}(R^1 g^*h -R^2G^*H) \right)\\
&=& |\scalT{x_0}{x}|^2(b^1-b^2)+(1-|\scalT{x_0}{x}|^2)(R^1|h|^2-R^2|H|^2)\\ 
&+&2 \Re\left(\overline{\scalT{x_0}{x}}\sqrt{1-|\scalT{x_0}{x}|^2}(R^1 g^*h -R^2G^*H) \right).
\end{eqnarray*}
Taking the expectation we get finally: 
\begin{eqnarray*}
\mathcal{E}^{x_0}(x)&=&\mathbb{E}(y(R^1|\scalT{a^1}{x}|^2-R^2|\scalT{a^2}{x}|^2))\\
&=& |\scalT{x_0}{x}|^2 \mathbb{E}(y(b_1-b_2))+(1-|\scalT{x_0}{x}|^2) \mathbb{E}(y(R^1-R^2))\\
&=&  \mathbb{E}(|b^1-b^2|-|R^1-R^2|)|\scalT{x_0}{x}|^2 +\mathbb{E}(|R^1-R^2|)\\
&=& \frac{1}{2}(|\scalT{x_0}{x}|^2+1).
\end{eqnarray*}
Where the first equality follows from independence and that $h$ and $H$ have variance one. 
The last equality holds since $|b^1-b^2|$ is  exponentially distributed with mean one.
Note also that $R^1=U$ , where $U\sim Unif[0,1]$,  and $R^2=1-U$.
Hence $|R^1-R^2|= |2U-1|$, and $\mathbb{E}(|R^1-R^2)=\mathbb{E}|2U-1|=\frac{1}{2}.$

\label{proof:weighted}
\end{proof}
\section{Sub-Exponential Initialization}
 
 \begin{proof}[Proof of Theorem \ref{theo:Init}]
\begin{eqnarray*}
\mathcal{E}^{x_0}(x)&=&\mathbb{E}(|g|^2(|\scalT{x_0}{x}|^2 |g|^2+(1-|\scalT{x_0}{x}|^2)|h|^2+2\sqrt{1-|\scalT{x_0}{x}|^2}Re({\scalT{x}{x_0})}^*g^*h)\\
&=& |\scalT{x_0}{x}|^2 \mathbb{E}(|g|^4)+ (1-|\scalT{x_0}{x}|^2)\mathbb{E}(|g|^2)\mathbb{E}(|h|^2)\\
&=& 2|\scalT{x_0}{x}|^2 +(1-|\scalT{x_0}{x}|^2)\\
&=& |\scalT{x_0}{x}|^2+1.
\end{eqnarray*}

The rest of the proof concerns the concentration of $\hat{C}_m$ around $C$ is a simple application of Theorem \ref{covariance}.
Note that $C=\mathbb{E}(baa^*)$ we have $||C||=2$. $b$ is an exponential random variable. 
$$b||a||^2 \leq 4\log(m)n, $$
with high probability, thus we can apply Theorem \ref{covariance}, and get a sample complexity bound. 
\label{ap:Sujay}
\end{proof}

%

\newcommand{\etalchar}[1]{$^{#1}$}

\end{document}